\theoremstyle{plain}
\newtheorem{thm}{Theorem}[section]
\newtheorem{lem}[thm]{Lemma}
\newtheorem{prop}[thm]{Proposition}
\newtheorem{col}[thm]{Corollary}
\theoremstyle{definition}
\newtheorem{defn}[thm]{Definition}
\newtheorem{exmp}[thm]{Example}
\theoremstyle{remark}
\newtheorem{rem}[thm]{Remark}
\providecommand{\sectionref}[1]{Section~\ref{#1}}
\providecommand{\sectionsaref}[2]{Sections~\ref{#1}~and~\ref{#2}}
\providecommand{\eqnref}[1]{Equation~\eqref{#1}}
\providecommand{\eqnsref}[1]{Equations~\eqref{#1}}
\providecommand{\eqnsaref}[2]{Equations~\eqref{#1}~and~\eqref{#2}}
\providecommand{\eqnssaref}[3]{Equations~\eqref{#1},~\eqref{#2}~and~\eqref{#3}}
\providecommand{\propsaref}[2]{Propositions~\ref{#1}~and~\ref{#2}}
\newlength{\graphlength}
\newlength{\cgreenlength}
\newcommand{\deDonder}{{d \negmedspace D \mspace{-2mu}}}
\newcommand{\one}{\mathbb{I}}
\newcommand{\coone}{\hat{\mathbb{I}}}
\newcommand{\Q}{{\boldsymbol{Q}}}
\newcommand{\HQ}{\mathcal{H}_{\Q}}
\newcommand{\EQ}{{\mathcal{E}_{\Q}}}
\newcommand{\RQ}{\mathcal{R}_{\Q}}
\newcommand{\RQO}{{\mathcal{R}_{\Q}^{[0]}}}
\newcommand{\RQI}{{\mathcal{R}_{\Q}^{[1]}}}
\newcommand{\AQ}{\mathcal{A}_{\Q}}
\newcommand{\SQ}[1]{\mathcal{S} \left ( #1 \right )}
\newcommand{\CQ}[1]{\mathcal{C} \left ( #1 \right )}
\newcommand{\DQ}[1]{\mathcal{D} \left ( #1 \right )}
\newcommand{\DQprime}[1]{\mathcal{D}^\prime \left ( #1 \right )}
\newcommand{\IQ}[1]{\mathcal{I} \left ( #1 \right )}
\newcommand{\IQnolim}[1]{\mathcal{I} ( #1 )}
\newcommand{\IQrv}{\mathcal{I}^{r, \mathbf{v}}}
\newcommand{\GQ}{\mathcal{G}_{\Q}}
\newcommand{\QQ}{\mathbf{Q}_{\Q}}
\newcommand{\qQ}{\mathbf{q}_{\Q}}
\newcommand{\Det}[1]{\operatorname{Det} \left ( #1 \right )}
\newcommand{\val}[1]{\operatorname{Val} \left ( #1 \right )}
\newcommand{\sym}[1]{\operatorname{Sym} \left ( #1 \right )}
\newcommand{\res}[1]{\operatorname{Res} \left ( #1 \right )}
\newcommand{\resnolim}[1]{\operatorname{Res} ( #1 )}
\newcommand{\cpl}[1]{\operatorname{Cpl} \left ( #1 \right )}
\newcommand{\cplgrd}[1]{\operatorname{CplGrd} \left ( #1 \right )}
\newcommand{\vtxgrd}[1]{\operatorname{VtxGrd} \left ( #1 \right )}
\newcommand{\vtxgrdnolim}[1]{\operatorname{VtxGrd} ( #1 )}
\newcommand{\extcpl}[1]{\operatorname{ExtCpl} \left ( #1 \right )}
\newcommand{\intcpl}[1]{\operatorname{IntCpl} \left ( #1 \right )}
\newcommand{\extvtx}[1]{\operatorname{ExtVtx} \left ( #1 \right )}
\newcommand{\intvtx}[1]{\operatorname{IntVtx} \left ( #1 \right )}
\newcommand{\ins}[2]{\operatorname{Ins} \left ( #1 \rhd #2 \right )}
\newcommand{\insaut}[3]{\operatorname{Ins}_\textup{Aut} \left ( #1 \rhd #2; #3 \right )}
\newcommand{\insrr}[1]{\operatorname{Ins}^{r, \mathbf{v}} \left ( #1 \right )}
\newcommand{\isoemb}[2]{\operatorname{Iso}_\textup{Emb} \left ( #1 \hookrightarrow #2 \right )}
\newcommand{\bettio}[1]{\kappa \left ( #1 \right )}
\newcommand{\bettii}[1]{\lambda \left ( #1 \right )}
\newcommand{\degp}[1]{\operatorname{Deg}_p \left ( #1 \right )}
\newcommand{\sdd}[1]{\omega \left ( #1 \right )}
\newcommand{\csdd}[1]{\varpi \left ( #1 \right )}
\newcommand{\rsdd}[1]{\rho \left ( #1 \right )}
\newcommand{\ssdd}[1]{\sigma \left ( #1 \right )}
\newcommand{\ssddn}[1]{\sigma_\text{n} \left ( #1 \right )}
\newcommand{\ssddr}[1]{\sigma_\text{r} \left ( #1 \right )}
\newcommand{\ssdds}[1]{\sigma_\text{s} \left ( #1 \right )}
\newcommand{\D}[1]{\Delta \left ( #1 \right )}
\newcommand{\antipode}[1]{S \left ( #1 \right )}
\newcommand{\mult}{m}
\newcommand{\ring}{\Bbbk}
\newcommand{\field}{\mathbb{K}}
\newcommand{\FR}[1]{\Phi \left ( #1 \right )}
\newcommand{\FRP}[1]{\left ( \Phi \circ \mathscr{A} \right ) \left ( #1 \right )}
\newcommand{\RFR}[1]{\Phi_{\mathscr{R}} \left ( #1 \right )}
\newcommand{\regFR}{{\Phi_\mathscr{E}^\varepsilon}}
\newcommand{\renFR}{{\Phi_\mathscr{R}}}
\newcommand{\countertermsymbol}{S_\mathscr{R}^\regFR}
\newcommand{\counterterm}[1]{\countertermsymbol \left ( #1 \right )}
\newcommand{\renscheme}[1]{\mathscr{R} \left ( #1 \right )}
\newcommand{\textfrac}[2]{#1 / #2}
\newcommand{\id}{\operatorname{Id}}
\newcommand{\precombgreen}{\mathfrak{x}}
\newcommand{\combgreen}{\mathfrak{X}}
\newcommand{\rescombgreen}{\mathfrak{X}}
\newcommand{\combcharge}{\mathfrak{Q}}
\newcommand{\iQ}{\mathfrak{i}_\Q}
\newcommand{\ZvQ}{\mathbb{Z}^{\mathfrak{v}_\Q}}
\newcommand{\ZqQ}{\mathbb{Z}^{\mathfrak{q}_\Q}}
\newcommand{\surject}{\to \!\!\!\!\! \to}
\newcommand{\cgreen}[1]{\vcenter{\hbox{\includegraphics[width=\cgreenlength]{#1}}}}
\newcommand{\tcgreen}[1]{\vcenter{\hbox{\includegraphics[width=0.75\cgreenlength]{#1}}}}
\newcommand{\enter}{\vspace{\baselineskip}}
\newcommand{\mathbbit}[1]{{\mspace{-1mu} \italicbox{$\mathbb{#1}$} \mspace{2mu}}}
\newcommand{\bbL}{\mathbbit{L}}
\newcommand{\bbI}{\mathbbit{I} \mspace{1mu}}
\newcommand{\bbT}{\mathbbit{T} \mspace{1mu}}
\newcommand{\bbsL}{\mathbbit{\scriptstyle{L}}}
\newcommand{\bbsT}{\mathbbit{\scriptstyle{T}}}
\newcommand{\GCD}[2]{\operatorname{GCD} \left ( #1, #2 \right )}
\newcommand{\legnumber}[1]{\mspace{-12mu} \raisebox{0.2ex}{$\scriptscriptstyle{#1}$} \mspace{9mu}}
\newcommand{\legnumberlongitudinal}[1]{\mspace{-23mu} \raisebox{0.2ex}{$\scriptscriptstyle{#1}$} \mspace{15mu}}
\newcommand{\legnumberghost}[1]{\mspace{-13mu} \raisebox{0.2ex}{$\scriptscriptstyle{#1}$} \mspace{9mu}}
\newcommand{\legnumberexponent}[1]{\mspace{-16mu} \raisebox{0.05ex}{$\scriptscriptstyle{#1}$} \mspace{9mu}}
\newcommand{\legnumberexponentlong}[1]{\mspace{-17mu} \raisebox{0.05ex}{$\scriptscriptstyle{#1}$} \mspace{-3mu}}
\newcommand{\legnumberexponentlongitudinal}[1]{\mspace{-29mu} \raisebox{0.05ex}{$\scriptscriptstyle{#1}$} \mspace{23mu}}
\newcommand{\legnumberexponentghost}[1]{\mspace{-17mu} \raisebox{0.05ex}{$\scriptscriptstyle{#1}$} \mspace{9mu}}
\newsavebox{\foobox}
\newcommand{\italicbox}[2][.25]
{%
	\mbox
	{%
		\sbox{\foobox}{#2}%
		\hskip\wd\foobox
		\pdfsave
		\pdfsetmatrix{1 0 #1 1}%
		\llap{\usebox{\foobox}}%
		\pdfrestore
	}%
}
\newcommand{\subalign}[1]{%
  \vcenter{%
    \Let@ \restore@math@cr \default@tag
    \baselineskip\fontdimen10 \scriptfont\tw@
    \advance\baselineskip\fontdimen12 \scriptfont\tw@
    \lineskip\thr@@\fontdimen8 \scriptfont\thr@@
    \lineskiplimit\lineskip
    \ialign{\hfil$\m@th\scriptstyle##$&$\m@th\scriptstyle{}##$\crcr
      #1\crcr
    }%
  }
}
\title{\textsc{Gauge Symmetries and Renormalization}}
\author{David Prinz\footnote{Department of Mathematics and Department of Physics at Humboldt University of Berlin and Max Planck Institute for Gravitational Physics (Albert Einstein Institute) in Potsdam-Golm; prinz@\{math.hu-berlin.de, physik.hu-berlin.de, aei.mpg.de\}}}
\date{July 31, 2022}
\begin{document}

\maketitle

\begin{abstract}
	We study the perturbative renormalization of quantum gauge theories in the Hopf algebra setup of Connes and Kreimer. It was shown by van Suijlekom (2007) that the quantum counterparts of gauge symmetries --- the so-called Ward--Takahashi and Slavnov--Taylor identities --- correspond to Hopf ideals in the respective renormalization Hopf algebra. We generalize this correspondence to super- and non-renormalizable Quantum Field Theories, extend it to theories with multiple coupling constants and add a discussion on transversality. In particular, this allows us to apply our results to (effective) Quantum General Relativity, possibly coupled to matter from the Standard Model, as was suggested by Kreimer (2008). To this end, we introduce different gradings on the renormalization Hopf algebra and study combinatorial properties of the superficial degree of divergence. Then we generalize known coproduct and antipode identities to the super- and non-renormalizable cases and to theories with multiple vertex residues. Building upon our main result, we provide criteria for the compatibility of these Hopf ideals with the corresponding renormalized Feynman rules. A direct consequence of our findings is the well-definedness of the Corolla polynomial for Quantum Yang--Mills theory without reference to a particular renormalization scheme.
\end{abstract}

\section{Introduction} \label{sec:introduction}

In classical physics, Noether's Theorem relates symmetries to conserved quantities \cite{Noether}. In the context of classical gauge theories it states that gauge symmetries correspond to charge conservation. Thus, gauge symmetries are a fundamental ingredient in physical theories, such as the Standard Model and General Relativity. When considering their quantizations, identities related to their gauge invariance ensure that the gauge fields are indeed transversal: More precisely, the Ward--Takahashi and Slavnov--Taylor identities ensure that photons and gluons do only possess their two experimentally verified transversal degrees of freedom \cite{Ward,Takahashi,Taylor,Slavnov}.\footnote{Actually, Slavnov--Taylor identities were first discovered diagrammatically by Gerard 't Hooft in \cite{tHooft}.} As we will see in the upcoming examples, these identities relate the prefactors of different monomials in the Lagrange density that are linked via gauge transformations. In the corresponding Quantum Field Theories, however, each of these monomials might obtain a different energy dependence through its \(Z\)-factor, which could spoil this symmetry. Additionally, in order to calculate the propagator of the gauge field, a gauge fixing needs to be chosen. This gauge fixing then needs to be accompanied by its corresponding ghost and antighost fields, which have Grassmannian parity, i.e.\ obey Fermi--Dirac statistics. More precisely, the ghost fields are designed to satisfy the residual gauge transformations as equations of motion, whereas the antighost fields are designed to be constant and thus act as Lagrange multipliers.\footnote{We remark that this is the case for the Faddeev--Popov ghost construction. It is possible to construct a more general setup, which mixes or even reverses their roles \cite{Baulieu_Thierry-Mieg}.} This then ensures transversality of the gauge bosons, if the \(Z\)-factors fulfill certain identities. These identities, in turn, depend on the Feynman rules and the chosen renormalization scheme.

As a first example, consider Quantum Yang--Mills theory with a Lorenz gauge fixing, given via the Lagrange density
\begin{equation} \label{eqn:qym_lagrange_density_introduction}
\begin{split}
	\mathcal{L}_\text{QYM} & := \mathcal{L}_\text{YM} + \mathcal{L}_\text{GF} + \mathcal{L}_\text{Ghost} \\ & \phantom{:} = \eta^{\mu \nu} \eta^{\rho \sigma} \delta_{a b} \left ( - \frac{1}{4 \mathrm{g}^2} F^a_{\mu \rho} F^b_{\nu \sigma} - \frac{1}{2 \xi} \big ( \partial_\mu A^a_\nu \big ) \big ( \partial_\rho A^b_\sigma \big ) \right ) \dif V_\eta \\
	& \phantom{:=} + \eta^{\mu \nu} \left ( \frac{1}{\xi} \overline{c}_a \left ( \partial_\mu \partial_\nu c^a \right ) + \mathrm{g} \tensor{f}{^a _b _c} \overline{c}_a \left ( \partial_\mu \big ( c^b A^c_\nu \big ) \right ) \right ) \dif V_\eta \, ,
\end{split}
\end{equation}
where \(F^a_{\mu \nu} := \mathrm{g} \big ( \partial_\mu A^a_\nu - \partial_\nu A^a_\mu \big ) - \mathrm{g}^2 \tensor{f}{^a _b _c} A^b_\mu A^c_\nu\) is the local curvature form of the gauge boson \(A^a_\mu\). Furthermore, \(\dif V_\eta := \dif t \wedge \dif x \wedge \dif y \wedge \dif z\) denotes the Minkowskian volume form. Additionally, \(\eta^{\mu \nu} \partial_\mu A^a_\nu \equiv 0\) is the Lorenz gauge fixing functional and \(\xi\) denotes the gauge fixing parameter. Finally, \(c^a\) and \(\overline{c}_a\) are the gauge ghost and gauge antighost, respectively. Then, the decomposition into its monomials is given via 
\begin{equation}
\begin{split}
	\mathcal{L}_\text{QYM} & \equiv - \frac{1}{2} \eta^{\mu \nu} \eta^{\rho \sigma} \delta_{ab} \left ( \big ( \partial_\mu A^a_\rho \big ) \big ( \partial_\nu A^b_\sigma \big ) - \left ( 1 - \frac{1}{\xi} \right ) \big ( \partial_\mu A^a_\nu \big ) \big ( \partial_\rho A^b_\sigma \big ) \right ) \dif V_\eta \\
	& \phantom{\equiv} + \frac{1}{2} \mathrm{g} \eta^{\mu \nu} \eta^{\rho \sigma} f_{abc} \left ( \big ( \partial_\mu A^a_\rho \big ) \big ( A^b_\nu A^c_\sigma \big ) \right ) \dif V_\eta \\
	& \phantom{\equiv} - \frac{1}{4} \mathrm{g}^2 \eta^{\mu \nu} \eta^{\rho \sigma} \tensor{f}{^a _b _c} f_{ade} \big ( A^b_\mu A^c_\rho A^d_\nu A^e_\sigma \big ) \dif V_\eta \\
	& \phantom{\equiv} + \frac{1}{\xi} \eta^{\mu \nu} \overline{c}_a \left ( \partial_\mu \partial_\nu c^a \right ) \dif V_\eta + \mathrm{g} \eta^{\mu \nu} \tensor{f}{^a _b _c} \overline{c}_a \left ( \partial_\mu \big ( c^b A^c_\nu \big ) \right ) \dif V_\eta \, ,
\end{split}
\end{equation}
each of which contributing to a different Feynman rule. To absorb the upcoming divergences in a multiplicative manner, we multiply each monomial with an individual function \(Z^r \! \left ( \varepsilon \right )\) in the regulator \(\varepsilon \in \mathbb{R}\):
\begin{equation}
\begin{split}
	\mathcal{L}_\text{QYM}^\text{R} \left ( \varepsilon \right ) & := - \frac{1}{2} \eta^{\mu \nu} \eta^{\rho \sigma} \delta_{ab} \, Z^{A^2} \! \! \left ( \varepsilon \right ) \! \left ( \! \big ( \partial_\mu A^a_\rho \big ) \big ( \partial_\nu A^b_\sigma \big ) - \left ( 1 - \frac{Z^{1 / \xi} \! \left ( \varepsilon \right )}{\xi} \right ) \! \big ( \partial_\mu A^a_\nu \big ) \big ( \partial_\rho A^b_\sigma \big ) \! \right ) \dif V_\eta \\
	& \phantom{:=} + \frac{1}{2} \mathrm{g} \eta^{\mu \nu} \eta^{\rho \sigma} f_{abc} \, Z^{A^3} \! \! \left ( \varepsilon \right ) \left ( \big ( \partial_\mu A^a_\rho \big ) \big ( A^b_\nu A^c_\sigma \big ) \right ) \dif V_\eta \\
	& \phantom{:=} - \frac{1}{4} \mathrm{g}^2 \eta^{\mu \nu} \eta^{\rho \sigma} \tensor{f}{^a _b _c} f_{ade} \, Z^{A^4} \! \! \left ( \varepsilon \right ) \big ( A^b_\mu A^c_\rho A^d_\nu A^e_\sigma \big ) \dif V_\eta \\
	& \phantom{:=} + \frac{1}{\xi} \eta^{\mu \nu} Z^{\overline{c} c / \xi} \! \left ( \varepsilon \right ) \left ( \overline{c}_a \left ( \partial_\mu \partial_\nu c^a \right ) \right ) \dif V_\eta + \mathrm{g} \eta^{\mu \nu} \tensor{f}{^a _b _c} \,  Z^{\overline{c} c A} \! \left ( \varepsilon \right ) \left ( \overline{c}_a \left ( \partial_\mu \big ( c^b A^c_\nu \big ) \right ) \! \right ) \dif V_\eta \, ,
\end{split}
\end{equation}
where the regulator \(\varepsilon\) is related to the energy scale through the choice of a regularization scheme. Then, the invariance of \(\mathcal{L}_\text{QYM}^\text{R} \left ( \varepsilon \right )\) under (residual) gauge transformations away from the reference point (where all \(Z\)-factors fulfill \(Z^r \! \left ( \varepsilon_0 \right ) = 1\)) depends on the following identities:
\begin{subequations} \label{eqns:z-factor_identities_qym}
\begin{align}
	\frac{\big ( Z^{A^3} \! \! \left ( \varepsilon \right ) \! \big )^2}{Z^{A^2} \! \! \left ( \varepsilon \right )} & \equiv Z^{A^4} \! \! \left ( \varepsilon \right )
	\intertext{and}
	\frac{Z^{A^3} \! \! \left ( \varepsilon \right )}{Z^{A^2 / \xi} \! \left ( \varepsilon \right )} & \equiv \frac{Z^{\overline{c} c A} \! \left ( \varepsilon \right )}{Z^{\overline{c} c / \xi} \! \left ( \varepsilon \right )}
\end{align}
\end{subequations}
for all \(\varepsilon\) in the domain of the regularization scheme and with \(Z^{A^2 / \xi} \! \left ( \varepsilon \right ) := Z^{A^2} \! \! \left ( \varepsilon \right ) Z^{1 / \xi} \! \left ( \varepsilon \right )\). This example is continued, using the terminology introduced throughout the article, in \exref{exmp:qym}, where we will reencounter these identities in a different language. We remark that these identities are essential for the Faddeev--Popov ghost construction to work, such that physical gluons are indeed transversal.

As a second example, consider (effective) Quantum General Relativity with the metric decomposition \(g_{\mu \nu} \equiv \eta_{\mu \nu} + \varkappa h_{\mu \nu}\), where \(h_{\mu \nu}\) is the graviton field and \(\varkappa := \sqrt{\kappa}\) the graviton coupling constant (with \(\kappa := 8 \pi G\) the Einstein gravitational constant), and a linearized de Donder gauge fixing, given via the Lagrange density
\begin{equation} \label{eqn:qgr_lagrange_density_introduction}
\begin{split}
	\mathcal{L}_\text{QGR} & := \mathcal{L}_\text{GR} + \mathcal{L}_\text{GF} + \mathcal{L}_\text{Ghost} \\ & \phantom{:} = - \frac{1}{2 \varkappa^2} \left ( \sqrt{- \Det{g}} R + \frac{1}{2 \zeta} \eta^{\mu \nu} \deDonder^{(1)}_\mu \deDonder^{(1)}_\nu \right ) \dif V_\eta \\ & \phantom{:=} - \frac{1}{2} \eta^{\rho \sigma} \left ( \frac{1}{\zeta} \overline{C}^\mu \left ( \partial_\rho \partial_\sigma C_\mu \right ) + \overline{C}^\mu \left ( \partial_\mu \big ( \tensor{\Gamma}{^\nu _\rho _\sigma} C_\nu \big ) - 2 \partial_\rho \big ( \tensor{\Gamma}{^\nu _\mu _\sigma} C_\nu \big ) \right ) \right ) \dif V_\eta \, ,
\end{split}
\end{equation}
where \(R := g^{\nu \sigma} \tensor{R}{^\mu _\nu _\mu _\sigma}\) is the Ricci scalar (with \(\tensor{R}{^\rho _\sigma _\mu _\nu} := \partial_\mu \tensor{\Gamma}{^\rho _\nu _\sigma} - \partial_\nu \tensor{\Gamma}{^\rho _\mu _\sigma} + \tensor{\Gamma}{^\rho _\mu _\lambda} \tensor{\Gamma}{^\lambda _\nu _\sigma} - \tensor{\Gamma}{^\rho _\nu _\lambda} \tensor{\Gamma}{^\lambda _\mu _\sigma}\) the Riemann tensor). Again, \(\dif V_\eta := \dif t \wedge \dif x \wedge \dif y \wedge \dif z\) denotes the Minkowskian volume form, which is related to the Riemannian volume form \(\dif V_g\) via \(\dif V_g \equiv \sqrt{- \Det{g}} \dif V_\eta\). Additionally, \(\deDonder^{(1)}_\mu := \eta^{\rho \sigma} \Gamma_{\mu \rho \sigma} \equiv 0\) is the linearized de Donder gauge fixing functional and \(\zeta\) denotes the gauge fixing parameter. Finally, \(C_\mu\) and \(\overline{C}^\mu\) are the graviton-ghost and graviton-antighost, respectively. We refer to \cite{Prinz_2,Prinz_4} for more detailed introductions and further comments on the chosen conventions. Then, we decompose \(\mathcal{L}_\text{QGR}\) with respect to its powers in the gravitational coupling constant \(\varkappa\), the gauge fixing parameter \(\zeta\) and the ghost field \(C\) as follows\footnote{We omit the term \(\mathcal{L}_\text{QGR}^{-1,0,0}\) as it is given by a total derivative.}
\begin{equation}
	\mathcal{L}_\text{QGR} \equiv \sum_{i = 0}^\infty \sum_{j = -1}^0 \sum_{k = 0}^1 \mathcal{L}_\text{QGR}^{i,j,k} \, ,
\end{equation}
where we have set \(\mathcal{L}_\text{QGR}^{i,j,k} := \eval[1]{\left ( \mathcal{L}_\text{QGR} \right )}_{O (\varkappa^i \zeta^j C^k)}\), cf.\ \cite[Section 3]{Prinz_4}. Again, to absorb the upcoming divergences in a multiplicative manner, we multiply each monomial with an individual function \(Z^r \! \left ( \varepsilon \right )\) in the regulator \(\varepsilon \in \mathbb{R}\):
\begin{equation}
	\mathcal{L}_\text{QGR}^\text{R} \left ( \varepsilon \right ) := \sum_{i = 0}^\infty \sum_{j = -1}^0 \sum_{k = 0}^1 Z^{i,j,k} \! \left ( \varepsilon \right ) \mathcal{L}_\text{QGR}^{i,j,k} \, ,
\end{equation}
where the regulator \(\varepsilon\) is related to the energy scale through the choice of a regularization scheme. Then, the invariance of \(\mathcal{L}_\text{QGR}^\text{R} \left ( \varepsilon \right )\) under (residual) diffeomorphisms away from the reference point (where all \(Z\)-factors fulfill \(Z^r \! \left ( \varepsilon_0 \right ) = 1\)) depends on the following identities:
\begin{subequations} \label{eqns:z-factor_identities_qgr}
	\begin{align}
	\frac{Z^{i,0,0} \! \left ( \varepsilon \right ) Z^{1,0,0} \! \left ( \varepsilon \right )}{Z^{0,0,0} \! \left ( \varepsilon \right )} & \equiv Z^{(i+1),0,0} \! \left ( \varepsilon \right )
	\intertext{and}
	\frac{Z^{i,0,0} \! \left ( \varepsilon \right )}{Z^{0,-1,0} \! \left ( \varepsilon \right )} & \equiv \frac{Z^{i,0,1} \! \left ( \varepsilon \right )}{Z^{0,-1,1} \! \left ( \varepsilon \right )}
	\end{align}
\end{subequations}
for all \(i \in \mathbb{N}_+\) and \(\varepsilon\) in the domain of the regularization scheme. This example is continued, using the terminology introduced throughout the article, in \exref{exmp:qgr}, where we will reencounter these identities in a different language. Again, we remark that these identities are essential for the Faddeev--Popov ghost construction to work, such that physical gravitons are indeed transversal.

Identities for \(Z\)-factors, such as \eqnsaref{eqns:z-factor_identities_qym}{eqns:z-factor_identities_qgr}, are known in the literature as Ward--Takahashi identities in the realm of Quantum Electrodynamics and Slavnov--Taylor identities in the realm of Quantum Chromodynamics \cite{Ward,Takahashi,Taylor,Slavnov,tHooft}. We will study these identities on a general level and thus call them `quantum gauge symmetries (QGS)'. In particular, our results are directly applicable to (effective) Quantum General Relativity in the sense of \eqnref{eqns:z-factor_identities_qgr}, as was first suggested in \cite{Kreimer_QG1} and then studied for a scalar toy model in \cite{Kreimer_vSuijlekom}. We also refer to \cite{Prinz_2} for a more detailed introduction to (effective) Quantum General Relativity coupled to Quantum Electrodynamics and to \cite{Prinz_4} for the Feynman rules of (effective) Quantum General Relativity coupled to the Standard Model. In the present article we study the renormalization-related properties of quantum gauge symmetries, such as \eqnsaref{eqns:z-factor_identities_qym}{eqns:z-factor_identities_qgr}. This is done in the framework of the Connes--Kreimer renormalization Hopf algebra: In this setup, the organization of subdivergences of Feynman graphs is encoded into the coproduct of a Hopf algebra \cite{Kreimer_Hopf_Algebra} and the renormalization of Feynman rules is described via an algebraic Birkhoff decomposition \cite{Connes_Kreimer_0}. Then, the aforementioned identities induce symmetries inside the renormalization Hopf algebra, which was first studied via Hochschild cohomology \cite{Kreimer_Anatomy} and then shown to be Hopf ideals \cite{vSuijlekom_QED,vSuijlekom_QCD,vSuijlekom_BV,Kreimer_vSuijlekom}. The aim of this article is to generalize these results in several directions: We first introduce an additional coupling-grading in \defnref{defn:connectedness_gradings_renormalization_hopf_algebra}, which allows us to study theories with multiple coupling constants, like Quantum General Relativity coupled to the Standard Model. Furthermore, and more substantially, this allows us to discuss the transversality of such (generalized) quantum gauge theories, cf.\ \defnref{defn:transversal_structure} and \defnref{defn:quantum_gauge_symmetries}. Moreover, we generalize known coproduct and antipode identities to super- and non-renormalizable Quantum Field Theories (QFTs) in \propref{prop:proj_div_graphs_coprod} and \sectionref{sec:coproduct_and_antipode_identities}. This requires a detailed study of combinatorial properties of the superficial degree of divergence, as is presented in \sectionref{sec:a_superficial_argument}. The analysis then culminates in \thmref{thm:quantum_gauge_symmetries_induce_hopf_ideals}, stating that quantum gauge symmetries correspond to Hopf ideals also in this generalized context. Finally, we provide criteria for the validity of quantum gauge symmetries in terms of Feynman rules and renormalization schemes: Technically speaking, this corresponds to the situation that the aforementioned Hopf ideals are in the kernel of the counterterm map or even the renormalized Feynman rules. The result is then presented in \thmref{thm:criterion_ren-hopf-mod}. A consequence thereof is that the Corolla polynomial for Quantum Yang--Mills theory is well-defined without reference to a particular renormalization scheme if the renormalization scheme is proper, cf.\ \defnref{defn:proper_renormalization_scheme} and \remref{rem:corolla_polynomial}. The Corolla polynomial is a graph polynomial in half-edges that relates amplitudes in Quantum Yang--Mills theory to amplitudes in \(\phi^3_4\)-theory \cite{Kreimer_Yeats,Kreimer_Sars_vSuijlekom,Kreimer_Corolla}. More precisely, this graph polynomial is used for the construction of a so-called Corolla differential that acts on the parametric representation of Feynman integrals \cite{Kreimer_Sars_vSuijlekom,Sars_PhD,Golz_PhD}. Thereby, the corresponding cancellation-identities \cite{tHooft_Veltman,Citanovic,Sars_PhD,Kissler_Kreimer,Gracey_Kissler_Kreimer,Kissler} are encoded into a double complex of Feynman graphs, leading to Feynman graph cohomology \cite{Kreimer_Sars_vSuijlekom,Berghoff_Knispel}. We remark that this construction has been successfully generalized to Quantum Yang--Mills theories with spontaneous symmetry breaking \cite{Prinz_1} and Quantum Electrodynamics with spinors \cite{Golz_1,Golz_2,Golz_3}. The application of this formulation to (effective) Quantum General Relativity is a topic of ongoing research.

This article is organized as follows: We start in \sectionref{sec:preliminaries_of_hopf_algebraic_renormalization} with a brief introduction to Hopf algebraic renormalization, stating the necessary definitions and conventions. Then, in \sectionref{sec:a_superficial_argument}, we study combinatorial properties of the superficial degree of divergence, allowing to state our results not only for the class of renormalizable Quantum Field Theories, but also the more involved classes of super- and non-renormalizable Quantum Field Theories. Next, in \sectionref{sec:coproduct_and_antipode_identities}, we reprove and generalize known coproduct and antipode identities. Subsequently, in \sectionref{sec:quantum_gauge_symmetries_and_subdivergences}, we show that quantum gauge symmetries induce Hopf ideals inside the renormalization Hopf algebra, even in our general context. Thereafter, in \sectionref{sec:quantum_gauge_symmetries_and_renormalized_feynman_rules}, we provide criteria for their validity on the level of renormalized Feynman rules, that is, criteria for the unrenormalized Feynman rules and the renormalization scheme. Finally, in \sectionref{sec:conclusion}, we conclude our investigations and provide an outlook into further projects.

\section{Preliminaries of Hopf algebraic renormalization} \label{sec:preliminaries_of_hopf_algebraic_renormalization}

We start this article by briefly recalling the relevant definitions and notations from Hopf algebraic renormalization: We consider \(\Q\) to be a local Quantum Field Theory (QFT), i.e.\ a QFT given by a Lagrange functional. Then, in a nutshell, the renormalization Hopf algebra\footnote{We use the symbol \(\HQ\) by abuse of notation simultaneously for the vector space \(\HQ\) as well as for the complete renormalization Hopf algebra \((\HQ, m, \one, \Delta, \coone, S)\).} \(\HQ\) of a QFT \(\Q\) consists of a vector space \(\HQ\) with algebra structure \((\HQ, m, \one)\), coalgebra structure \((\HQ, \Delta, \coone)\) and antipode \(S \colon \HQ \to \HQ\). More precisely, given the set \(\GQ\) of 1PI Feynman graphs of \(\Q\), the vector space \(\HQ\) is defined as the vector space over \(\mathbb{Q}\) generated by the elements of the set \(\GQ\) and disjoint unions thereof. Then, the multiplication \(m\) is simply given via disjoint union, with the empty graph as unit. The interesting structures are the coproduct \(\Delta\) and the antipode \(S\): It was realized by Kreimer that the organization of subdivergences of Feynman graphs can be encoded into a coalgebra structure on \(\HQ\) \cite{Kreimer_Hopf_Algebra}. Then, building upon this, Connes and Kreimer formulated the renormalized Feynman rules \(\Phi_\mathscr{R}\) as an algebraic Birkhoff decomposition with respect to the renormalization scheme \(\mathscr{R}\) \cite{Connes_Kreimer_0}. See \defnref{defn:renormalization_hopf_algebra} and \defnref{defn:fr_reg_ren_counterterm} for the formal definitions and \cite[Section 3]{Prinz_2} for a more detailed introduction using the same notations and conventions.\footnote{We remark that some of the introductory material in this section is borrowed from \cite[Section 3]{Prinz_2}.} This mathematical formulation of the renormalization operation allows for a precise analysis of symmetries via Hopf ideals. We want to deepen this viewpoint in the context of quantum gauge theories by generalizing results from \cite{Kreimer_Anatomy,vSuijlekom_QED,vSuijlekom_QCD,vSuijlekom_BV,Kreimer_vSuijlekom}. Finally, we also mention some detailed introductory texts \cite{Sweedler,Manchon,Figueroa_Gracia-Bondia,Guo,Panzer,Yeats}.

\enter

\begin{defn}[Multiset over a set] \label{defn:multiset_over_a_set}
	Let \(M\) and \(S\) be sets. The set \(M\) is called a multiset over \(S\), if \(M\) contains elements of \(S\) in arbitrary multiplicity. Then, the multiset
	\begin{equation} \label{eqn:multiset_over_a_set}
		\pi \, : \quad M \to S \, , \quad m_s \mapsto s \, ,
	\end{equation}
	where \(\pi\) projects the elements of \(M\) to \(S\), can be canonically identified with the set \((s, n_s) \in \widetilde{M}^S \subset S \times \mathbb{N}_0\), where the natural number \(n_s\) indicates the multiplicity of each element \(s\) in \(M\) (which can possibly be zero). We call \(\widetilde{M}^S\) the multiset representation of \(M\) over \(S\). Given the multiset representation \(\widetilde{M}^S\) of \(M\) over \(S\), we define the two projections
	\begin{subequations}
	\begin{align}
		\varsigma \, & : \quad \widetilde{M}^S \to S \, , \quad (s, n_s) \mapsto s
		\intertext{and}
		\varrho \, & : \quad \widetilde{M}^S \to \mathbb{N}_0 \, , \quad (s, n_s) \mapsto n_s \, .
	\end{align}
	\end{subequations}
	Additionally, if the elements in the set \(S\) are ordered, we define the corresponding multiset-vector as the vector \(\mathbf{n} := (n_1, \dots, n_\mathfrak{s})^\intercal \in \mathbb{N}_0^\mathfrak{s}\), where \(\mathfrak{s}\) is the cardinality of the set \(S\) and \(n_i\) denotes the multiplicity of the element \(s_i\) in \(M\). Furthermore, two multisets \(M_1\) and \(M_2\) over the same set \(S\) are called isomorphic, if each element \(s \in S\) has the same multiplicity \(n_s\) in either \(M_1\) and \(M_2\). In the following, we will always assume that the underlying set \(S\) is ordered and thus use the equivalence between multisets and their multiset-vectors.
\end{defn}

\enter

\begin{rem}
	Given the situation of \defnref{defn:multiset_over_a_set}, a multiset \(M\) over \(S\) and its multiset representation \(\widetilde{M}^S\) are in general different sets, as they might have different cardinalities. As an extreme example, every set can be seen as a multiset over the singleton. Therefore, its multiset representation consist only of the element \((*, n)\), where \(n\) is the cardinality of the set. On the other hand, a set viewed as a multiset over itself has the same cardinality as its underlying set, but its elements are distinctly marked. Accordingly, its multiset representation consist of elements \((s, 1)\), for each element \(s\) in the underlying set. In this spirit, a multiset \(M\) over a set \(S\) can be seen as a \(S\)-colored set, by means of the map \(\pi\) from \eqnref{eqn:multiset_over_a_set}.
\end{rem}

\enter

\begin{defn}[Residue, amplitude and coupling constant set] \label{defn:residue_amplitude_and_coupling_constant_set}
	Let \(\Q\) be a QFT given via the Lagrange density \(\mathcal{L}_\Q\). Then each monomial in \(\mathcal{L}_\Q\) describes either a fundamental interaction or a propagation of the involved particles. We collect this information in two sets, called vertex residue set \(\RQO\) and edge residue set \(\RQI\), as follows: The first set consists of all fundamental interactions and the second set consists of all propagators, or, equivalently, particle types of \(\Q\). Finally, the residue set is then defined as the disjoint union
	\begin{equation}
		\RQ := \RQO \sqcup \RQI \, .
	\end{equation}
	We denote the cardinality of the vertex set via \(\mathfrak{v}_\Q := \# \RQO\). Furthermore, we define the set of amplitudes \(\AQ\) as the set containing all possible external leg structures of 1PI Feynman graphs. In particular, it is given as the disjoint union
	\begin{equation}
		\AQ := \RQ \sqcup \mathcal{Q}_\Q \, ,
	\end{equation}
	where \(\mathcal{Q}_\Q\) denotes the set of pure quantum corrections, that is, interactions which are only possible via trees or Feynman graphs, but not directly via residues in the set \(\RQ\). If \(\Q\) is a quantum gauge theory, we add additional labels to the edge-types: One for the physical degrees of freedom and at least one for the unphysical degrees of freedom, cf.\ \remref{rem:longitudinal_and_transversal_gauge_fields} and \cite{Prinz_8}. Moreover, we denote by \(\qQ\) the set of physical coupling constants and, if present, gauge fixing parameters appearing in the Lagrange density \(\mathcal{L}_\Q\). Finally, we define the function
	\begin{equation} \label{eqn:coupling-coloring_function}
		\theta \, : \quad \AQ \to \qQ \, , \quad r \mapsto \begin{cases} q_r := q_v \left ( \prod_{e \in E \left ( v \right )} \sqrt{q_e} \right ) & \text{if \(r = v \in \RQO\)} \\ 1 & \text{else, i.e.\ \(r \in \left ( \AQ \setminus \RQO \right )\)} \end{cases} \, ,
	\end{equation}
	where \(q_v\) denotes the coupling constant that scales the vertex-type \(v\), \(q_e\) denotes the gauge fixing parameter that is associated to the edge-type \(e\) if it is unphysical and finally \(E \left ( v \right )\) denotes the set of edges that are attached to the vertex \(v\). We denote the cardinality of the set of physical coupling constants via \(\mathfrak{q}_\Q := \# \qQ\).
\end{defn}

\enter

\begin{defn}[Transversal structure] \label{defn:transversal_structure}
	Let \(\Q\) be a quantum gauge theory. Then each independent gauge fixing term induces a longitudinal projection operator \(\boldsymbol{L}\) for the propagator of the corresponding gauge field. Together with the respective identity operator \(\boldsymbol{I}\) we define the associated transversal projection operator \(\boldsymbol{T}\) via
	\begin{equation}
		\boldsymbol{T} := \boldsymbol{I} - \boldsymbol{L} \, .
	\end{equation}
	We refer to the set \(\set{\boldsymbol{L}, \boldsymbol{I}, \boldsymbol{T}}\) as transversal structure. Additionally, let \(\mathfrak{f}_\Q\) denote the number of independent gauge fixing terms of \(\Q\).\footnote{This includes in particular the coupling of gravity to gauge theories, which requires independent gauge fixing terms for the diffeomorphism invariance and the gauge invariance, cf.\ e.g.\ \cite{Prinz_2,Prinz_4}. With that we also obtain two separate transversal structures: \(\set{L, I, T}\) for the Quantum Yang--Mills theory part and \(\set{\bbL, \bbI, \bbT}\) for the (effective) Quantum General Relativity part, cf.\ \eqnsref{eqn:projection_tensors_qym} and \eqnsref{eqn:projection_tensors_qgr}.} Then we consider the union
	\begin{equation}
		\mathcal{T}_\Q := \bigcup_{k = 1}^{\mathfrak{f}_\Q} \set{\boldsymbol{L}, \boldsymbol{I}, \boldsymbol{T}}_k
	\end{equation}
	and refer to it as the transversal structure of \(\Q\).
\end{defn}

\enter

\begin{rem} \label{rem:longitudinal_and_transversal_gauge_fields}
	The `physical' and `unphysical' labels together with the particle-type labels of \defnref{defn:residue_amplitude_and_coupling_constant_set} connect as follows to the physics of quantum gauge theories: Physical particle-types are transversal gauge field edges, canceled ghost field edges and matter field edges, respectively. Contrary, unphysical particle-types are longitudinal or canceled gauge field edges, ghost field edges and canceled matter field edges, respectively. Thus, our `physical' and `unphysical' labels are related to cancellation identities \cite{tHooft_Veltman,Citanovic,Sars_PhD,Kissler_Kreimer,Gracey_Kissler_Kreimer,Kissler} and the marking of edges in the construction of Feynman graph cohomology \cite{Kreimer_Sars_vSuijlekom,Berghoff_Knispel}. Additionally, if \(\Q\) has several longitudinal projection operators we need to keep track which longitudinal projection induced the cancellation of an edge. This is the reason why we extend our setup to allow for possibly several distinct `unphysical' labels, each of which is related to the corresponding gauge fixing parameter. This discussion will be studied in detail in \cite{Prinz_8}, cf.\ \cite{Prinz_5,Prinz_6}.
\end{rem}

\enter

\begin{defn}[(Feynman) graphs and Feynman graph set] \label{defn:feynman_graphs}
	A graph \(G := \left ( V, E, \beta \right )\) is given via a set of vertices \(V\), a set of edges \(E = E_0 \sqcup E_1\), where \(E_0\) is the subset of unoriented and \(E_1\) is the subset of oriented edges, and a morphism\footnote{We remark that the map \(\beta\) is necessary if graphs are allowed to have multi-edges or simultaneously oriented and unoriented edges, which is typically the case in physics.}
	\begin{equation}
		\beta \, : \quad E \to \left ( V \times V \times \mathbb{Z}_2 \right ) \, , \quad e \mapsto \begin{cases} \left ( v_1, v_2; 0 \right ) & \text{if \(e \in E_0\)} \\ \left ( v_i, v_t; 1 \right ) & \text{if \(e \in E_1\)} \end{cases}\, ,
	\end{equation}
	mapping edges to tuples of vertices together with their binary orientation information; if the edge is oriented, the order of the vertices is first initial then terminal. Given a graph \(G\), the corresponding sets are denoted via \(V \equiv V \left ( G \right ) \equiv G^{[0]}\) and \(E \equiv E \left ( G \right ) \equiv G^{[1]}\), where we omit the dependence on the graph \(G\) only if there is no ambiguity possible. Finally, given a QFT \(\Q\), we define a Feynman graph \(\Gamma := ( G, \{ *_p, *_f \}, E_\text{Ext}, \tau )\) as a graph \(G\) with the following extra structure: We add a set of external edges \(E_\text{Ext}\) and two external vertices \(\{ *_p, *_f \}\), where \(*_p\) is the endpoint for past external edges and \(*_f\) is the endpoint for future external edges. Then, we extend the map \(\beta\) to the set of external edges \(E_\text{Ext}\) via
	\begin{equation}
		\eval{\beta}_{E_\text{Ext}} \! \! \! \! \! \! \! : \quad E_\text{Ext} \to \left ( \big ( V \sqcup \set{*_p, *_f} \! \big ) \times \big ( V \sqcup \set{*_p, *_f} \! \big ) \times \mathbb{Z}_2 \right ) \, , \quad e \mapsto \begin{cases} \left ( v_1, v_2; 0 \right ) & \text{if \(e \in E_0\)} \\ \left ( v_i, v_t; 1 \right ) & \text{if \(e \in E_1\)} \end{cases}\, .
	\end{equation}
	Additionally, the vertex set \(V\) and the edge set \(E \sqcup E_\text{Ext}\) are considered as multisets over the vertex residue set \(\RQO\) and the edge residue set \(\RQI\), respectively:
	\begin{equation} \label{eqn:residue-coloring_function}
		\tau \, : \quad \left ( V \sqcup E \sqcup E_\text{Ext} \right ) \to \RQ \, , \quad r \mapsto \begin{cases} r_v \in \RQO & \text{if \(r \in V\)} \\ r_e \in \RQI & \text{if \(r \in E \sqcup E_\text{Ext}\)} \end{cases} \, ,
	\end{equation}
	where the map \(\tau\) corresponds to the map \(\pi\) from \eqnref{eqn:multiset_over_a_set}. Thus, using the coloring function \(\tau\), we view Feynman graphs as \(\RQ\)-colored graphs. Two Feynman graphs from the same QFT \(\Q\) are considered to be isomorphic, if they are isomorphic as \(\RQ\)-colored graphs and if they have the same external leg structure, cf.\ \defnref{defn:automorphisms_of_feynman_graphs}. Furthermore, using the composition \(\theta \circ \tau\) with the coupling constant function \(\theta\) from \eqnref{eqn:coupling-coloring_function}, we can also view the vertex set \(V\) as a multiset over the coupling constant set \(\qQ\), where then the composition \(\theta \circ \tau\) corresponds to the map \(\pi\) from \eqnref{eqn:multiset_over_a_set}. We remark, however, that edges are unlabeled in this coloring (labeled by \(1\)). Finally, a graph is called `one-particle irreducible (1PI)', if it is still connected after the removal of any of its internal edges.\footnote{In the mathematical literature these graphs are called bridge-free.} We denote the set of all 1PI Feynman graphs of \(\Q\) by \(\GQ\).
\end{defn}

\enter

\begin{defn}[Residue of a Feynman graph]
	Let \(\Q\) be a local QFT with residue set \(\RQ\) and 1PI Feynman graph set \(\GQ\). Then the external leg structure of a Feynman graph \(\Gamma \in \GQ\) is called its residue and denoted via \(\res{\Gamma} \in \AQ\). It is considered as the graph obtained from \(\Gamma\) by shrinking all its internal edges to a single vertex.
\end{defn}

\enter

\begin{defn}[Sets of half-edges, corollas and external vertex residues] \label{defn:half-edges_corollas_external-vertex-residue-set}
	Given a graph \(G\), we define the set of half-edges \(H \left ( G \right ) \equiv G^{[1/2]}\) via
	\begin{equation}
		H \left ( G \right ) := \set{h_v \cong (v,e) \left \vert \, v \in V , \, e \in E \text{ and } v \in \beta \left ( e \right ) \right .} \, ,
	\end{equation}
	where \(v \in \beta \left ( e \right )\) means that the vertex \(v\) is attached to the edge \(e\). The set of half-edges is then accompanied by the involution \(\iota\), which interchanges an internal half-edge with the internal half-edge that it forms the internal edge with and furthermore fixates external half-edges:
	\begin{equation}
		\iota \, : \quad H \left ( G \right ) \to H \left ( G \right ) \, , \quad (v,e) \mapsto \begin{cases} (w,e) & \text{if \(e \in E \left ( G \right )\) and \(v, w \in \beta \left ( e \right )\)} \\ (v,e) & \text{if \(e \in E_\text{Ext} \left ( G \right )\)} \end{cases}
	\end{equation}
	Thus, \(\iota\) can be used to reproduce the set \(E \left ( G \right ) \sqcup E_\text{Ext} \left ( G \right )\) from the set \(H \left ( G \right )\). Additionally, we define the set of corollas \(C \left ( G \right )\)
	\begin{equation}
		C \left ( G \right ) := \set{\left . c_v \cong \left ( v, \set{h_v \in H \left ( G, v \right )} \right ) \right \vert v \in V} \, ,
	\end{equation}
	where \(\{ h_v \in H \left ( G, v \right ) \}\) is the set of half-edges attached to the vertex \(v\). We also apply these constructions to Feynman graphs \(\Gamma\) by means of its underlying graph \(G\). Finally, we define the set of external vertex residues \(W \left ( \Gamma \right )\) of a (not necessary connected) Feynman graph \(\Gamma\) via
	\begin{equation}
		W \left ( \Gamma \right ) := \set{r_\gamma := \res{\gamma} \left \vert \, \text{\(\gamma\) connected component of \(\Gamma\) and } r_\gamma \in \RQO \right . } \, ,
	\end{equation}
	i.e.\ \(W \left ( \Gamma \right )\) is a multiset over \(\RQO\), by means of \(\tau\) from \eqnref{eqn:residue-coloring_function}, and furthermore a multiset over \(\qQ\), by means of \(\theta \circ \tau\) from \eqnsaref{eqn:coupling-coloring_function}{eqn:residue-coloring_function}.\footnote{We remark that if \(\Gamma\) is connected, then \(W \left ( \Gamma \right )\) contains at most one element.}
\end{defn}

\enter

\begin{defn}[Automorphisms of (Feynman) graphs] \label{defn:automorphisms_of_feynman_graphs}
	Let \(G\) be a graph. A map \(a \colon G \to G\), or, more precisely, the collection of its two underlying maps
	\begin{subequations}
	\begin{align}
		a_V \, & : \quad V \to V \, , \quad v_1 \mapsto v_2
		\intertext{and}
		a_E \, & : \quad E \to E \, , \quad e_1 \mapsto e_2
	\end{align}
	\end{subequations}
	is called an automorphism of \(G\), if \(a_V\) and \(a_E\) are bijections and additionally they are compatible with \(\beta\) in the sense that \(\beta \circ a_E \equiv ( a_V \times a_V \times \id_{\mathbb{Z}_2} ) \circ \beta\). Furthermore, given a Feynman graph \(\Gamma\) and a map \(\alpha \colon \Gamma \to \Gamma\). Then \(\alpha\) is called automorphism of \(\Gamma\), if \(\alpha\) is an automorphism of the underlying graph, additionally compatible with the coloring function \(\tau\), i.e.\ \(\tau \circ \alpha \equiv \tau\), and the identity on external edges. The group of automorphisms of a Feynman graph \(\Gamma\) will be denoted via \(\operatorname{Aut} \left ( \Gamma \right )\) and its rank via \(\operatorname{Sym} \left ( \Gamma \right )\), to which we refer as its symmetry factor.
\end{defn}

\enter

\begin{defn}[Feynman graph invariants] \label{defn:betti-numbers_and_multi-indices}
	Let \(\Q\) be a QFT, \(\GQ\) its 1PI Feynman graph set, \(\RQO\) its vertex residue set and \(\qQ\) its coupling constant set. We equip the elements in the sets \(\RQO\) and \(\qQ\) with an arbitrary ordering, in order to have well-defined multiset vectors in the sense of \defnref{defn:multiset_over_a_set}. Given a Feynman graph \(\Gamma \in \GQ\), we associate the following two numbers and four multi-indices to it:
	\begin{itemize}
		\item \(\bettio{\Gamma} \in \mathbb{N}_0\) denotes the number of its connected components
		\item \(\bettii{\Gamma} \in \mathbb{N}_0\) denotes the number of its independent loops,\footnote{In the mathematical literature this is usually called a cycle.} where we only consider loops by internal edges, i.e.\ remove the two external vertices \(\{ *_p, *_f \}\)
		\item \(\intvtx{\Gamma} \in \ZvQ\) denotes the multiset vector of \(V \left ( \Gamma \right )\) over \(\RQO\), with respect to \(\tau\) from \eqnref{eqn:residue-coloring_function}
		\item \(\extvtx{\Gamma} \in \ZvQ\) denotes the multiset vector of \(W \left ( \Gamma \right )\) over \(\RQO\), with respect to \(\tau\) from \eqnref{eqn:residue-coloring_function}
		\item \(\intcpl{\Gamma} \in \ZqQ\) denotes the multiset vector of \(V \left ( \Gamma \right )\) over \(\qQ\), with respect to \(\theta \circ \tau\) from \eqnsaref{eqn:coupling-coloring_function}{eqn:residue-coloring_function}
		\item \(\extcpl{\Gamma} \in \ZqQ\) denotes the multiset vector of \(W \left ( \Gamma \right )\) over \(\qQ\), with respect to \(\theta \circ \tau\) from \eqnsaref{eqn:coupling-coloring_function}{eqn:residue-coloring_function}
	\end{itemize}
	Then we extend these invariants to the unit \(\one \in \HQ\) by \(0 \in \mathbb{N}_0\), \(\mathbf{0} \in \ZvQ\) and \(\mathbf{0} \in \ZqQ\), respectively, and to disjoint unions of 1PI Feynman graphs via addition, i.e.
	\begin{equation}
		\operatorname{Inv} \left ( \Gamma_1 \sqcup \Gamma_2 \right ) := \operatorname{Inv} \left ( \Gamma_1 \right ) + \operatorname{Inv} \left ( \Gamma_2 \right ) \, ,
	\end{equation}
	where \(\operatorname{Inv} \left ( \Gamma \right )\) is any of the invariants above and \(\Gamma_1, \Gamma_2 \in \GQ\).
\end{defn}

\enter

\begin{defn}[Weight of residues]
	Let \(\mathcal Q\) be a QFT with residue set \(\RQ\). We introduce a weight function
	\begin{equation}
		\omega \, : \quad \RQ \to \mathbb{Z} \, , \, , \quad r \mapsto \degp{\FR{r}} \, ,
	\end{equation}
	which maps a residue \(r \in \RQ\) to the degree of its corresponding Feynman rule \(\FR{r}\), viewed as a polynomial in momenta (or, in position space, derivatives).
\end{defn}

\enter

\begin{defn}[Superficial degree of divergence] \label{defn:sdd}
	Let \(\mathcal Q\) be a QFT with weighted residue set \((\RQ, \omega)\) and Feynman graph set \(\GQ\). We turn \(\GQ\) into a weighted set as well by extending \(\omega\) to the function
	\begin{equation}
		\omega \, : \quad \GQ \to \mathbb{Z} \, , \quad \Gamma \mapsto d \lambda \left ( \Gamma \right ) + \sum_{v \in V \left ( \Gamma \right )} \omega \left ( v \right ) + \sum_{e \in E \left ( \Gamma \right )} \omega \left ( e \right ) \, , \label{eqn:sdd}
	\end{equation}
	where \(d\) is the dimension of spacetime. Then, the weight \(\omega \left ( \Gamma \right )\) of a Feynman graph \(\Gamma\) is called its `superficial degree of divergence (SDD)'. A Feynman graph \(\Gamma\) is called superficially divergent if \(\sdd{\Gamma} \geq 0\) and superficially convergent if \(\sdd{\Gamma} < 0\). Finally, we set \(\sdd{\one} := 0\) for convenience.
\end{defn}

\enter

\begin{defn}[Set of superficially divergent subgraphs of a Feynman graph] \label{defn:set_of_divergent_subgraphs}
	Let \(\mathcal Q\) be a QFT with weighted Feynman graph set \((\GQ, \omega)\) and  \(\Gamma \in \GQ\) a Feynman graph. Then we denote by \(\DQ{\Gamma}\) the set of superficially divergent subgraphs of \(\Gamma\), i.e.
	\begin{subequations}
	\begin{align}
		\DQ{\Gamma} & := \set{\one \subseteq \gamma \subseteq \Gamma \, \left \vert \; \gamma = \bigsqcup_i \gamma_i \text{ with } \gamma_i \in \GQ \text{ and } \sdd{\gamma_i} \geq 0 \right .} \, ,
		\intertext{and by \(\mathcal{D}^\prime \left ( \Gamma \right )\) the set of proper divergent subgraphs}
		\DQprime{\Gamma} & := \set{ \gamma \in \DQ{\Gamma} \, \left \vert \; \one \subsetneq \gamma \subsetneq \Gamma \right .} \, .
	\end{align}
	\end{subequations}
	We remark that the condition \(\res{\gamma_i} \in \RQ\) for all divergent \(\gamma_i\) ensures the well-definedness of the renormalization Hopf algebra, cf.\ \cite[Subsection 3.3]{Prinz_2}.
\end{defn}

\enter

\begin{defn}[The (associated) renormalization Hopf algebra] \label{defn:renormalization_hopf_algebra}
	Given a QFT \(\Q\) with weighted Feynman graph set \((\GQ, \omega)\). Then the renormalization Hopf algebra is modeled on the \(\mathbb{Q}\)-vector space generated by 1PI Feynman graphs from the set \(\GQ\) and disjoint unions thereof. More precisely, the multiplication \(m \colon \HQ \otimes_\mathbb{Q} \HQ \to \HQ\) is given via disjoint union and the coproduct \(\Delta \colon \HQ \to \HQ \otimes_\mathbb{Q} \HQ\) is given via the decomposition of (products of) 1PI Feynman graphs into the sum of all pairs of divergent subgraphs with the remaining cographs:
	\begin{equation}
		\Delta \, : \quad \HQ \to \HQ \otimes_\mathbb{Q} \HQ \, , \quad \Gamma \mapsto \sum_{\gamma \in \DQ{\Gamma}} \gamma \otimes_\mathbb{Q} \Gamma / \gamma \, ,
	\end{equation}
	where the cograph \(\Gamma / \gamma\) is defined by shrinking the internal edges of \(\gamma\) in \(\Gamma\) to a new vertex for each connected component of \(\gamma\). Furthermore, the unit \(\one \colon \mathbb{Q} \hookrightarrow \HQ\) is given via a multiple of the empty graph and the counit \(\coone \colon \HQ \surject \mathbb{Q}\) is given via the map sending all non-empty graphs to zero and the empty graph to its prefactor.\footnote{We remark that technically the unit and counit of the algebra and their respective functions are separate objects, which can be conveniently identified.} Moreover, the antipode is recursively defined as the negative of the convolution product with itself and the projector onto the augmentation ideal, cf.\ \defnref{defn:convolution_product} and \defnref{defn:augmentation_ideal}, i.e.\ via the normalization \(S \left ( \one \right ) := \one\) and else as follows:
	\begin{equation}
		S \, : \quad \HQ \to \HQ \, , \quad \Gamma \mapsto - \Gamma - \sum_{\DQprime{\Gamma}} S \left ( \gamma \right ) \Gamma / \gamma \, .
	\end{equation}
	In the following, we will omit the ground field from the tensor product, i.e.\ set \(\otimes := \otimes_\mathbb{Q}\). Finally, we remark that especially in the context of quantum gauge theories the above construction can be ill-defined, which requires the notion of an `associated renormalization Hopf algebra' \cite[Subsection 3.3]{Prinz_2}. For the realm of this article, however, it suffices to know that there exists a well-defined Hopf algebra structure as defined above, to which we refer to as `(associated) renormalization Hopf algebra', see also the comment at the end of \defnref{defn:set_of_divergent_subgraphs}.
\end{defn}

\enter

\begin{defn}[Convolution product] \label{defn:convolution_product}
	Let \(\ring\) be a ring, \(A\) a \(\ring\)-algebra and \(C\) a \(\ring\)-coalgebra. Then, using the multiplication \(\mult_A\) on \(A\) and the comultiplication \(\Delta_C\) on \(C\), we can turn the \(\ring\)-module \(\text{Hom}_{\ring-\mathsf{Mod}} \left ( C , A \right )\) of \(\ring\)-linear maps from \(C\) to \(A\) into a \(\ring\)-algebra as well, by defining the convolution product \(\star\) as follows: Given \(f, g \in \text{Hom}_{\ring-\mathsf{Mod}} \left ( C , A \right )\), then we set
	\begin{equation}
		f \star g := \mult_A \circ \left ( f \otimes g \right ) \circ \Delta_C \, .
	\end{equation}
	Obviously, this definition extends trivially if \(A\) or \(C\) possesses additionally a bi- or Hopf algebra structure. It is commutative, if \(C\) is cocommutative and \(A\) is commutative. Finally, given a \(\ring\)-Hopf algebra \(H\), we remark that the algebra of endomorphisms \((\text{Hom}_{\ring-\mathsf{Mod}} \left ( H , H \right ), \star)\) is a group, with the antipode \(S\) being the \(\star\)-inverse to the identity morphism \(\operatorname{Id}_H\).
\end{defn}

\enter

\begin{defn}[Augmentation ideal] \label{defn:augmentation_ideal}
	Given a bi- or a Hopf algebra \(B\), then the kernel of the coidentity
	\begin{equation}
		\operatorname{Aug} \left ( \HQ \right ) := \operatorname{Ker} \big ( \mspace{1mu} \coone \mspace{1mu} \big )
	\end{equation}
	is an ideal, called the augmentation ideal. Additionally, we denote the projection map to it via \(\mathscr{A}\), i.e.\
	\begin{equation}
		\mathscr{A} \, : \quad \HQ \surject \operatorname{Aug} \left ( \HQ \right ) \subset \HQ \, , \quad \mathfrak{G} \mapsto \sum_{\substack{\set{\alpha_\text{s}, \mathfrak{G}_\text{s}} \in \SQ{\mathfrak{G}}\\\coone \left ( \mathfrak{G}_\text{s} \right ) = 0}} \alpha_\text{s} \mathfrak{G}_\text{s} \, ,
	\end{equation}
	where \(\SQ{\mathfrak{G}}\) denotes the set of summands of \(\mathfrak{G} \in \HQ\), cf.\ \defnref{defn:sets_summands_connected_components}.
\end{defn}

\enter

\begin{defn}[Basis decomposition of Hopf algebra elements] \label{defn:sets_summands_connected_components}
	Let \(\Q\) be a QFT, \(\GQ\) the set of its 1PI Feynman graphs and \(\HQ\) its (associated) renormalization Hopf algebra. Given an element \(\mathfrak{G} \in \HQ\), we are interested in its decomposition with respect to elements in the set \(\GQ\). Therefore, we denote by \(\SQ{\mathfrak{G}}\) the set of its summands, grouped into tuples of prefactors \(\alpha_\text{s} \in \mathbb{Q}\) (where we exclude the trivial case \(\alpha_\text{s} = 0\) if \(\mathfrak{G}_\text{s} \in \operatorname{Aug} \left ( \HQ \right )\)) and graphs \(\mathfrak{G}_\text{s} \in \HQ\) that can be disjoint unions, i.e.\ \(\mathfrak{G}_\text{s} = \bigsqcup_i \Gamma_i\) for \(\Gamma_i \in \GQ\),\footnote{This is actually the decomposition from \eqnref{eqn:decomposition_connected_components}.} such that
	\begin{equation}
		\mathfrak{G} \equiv \sum_{\set{\alpha_\text{s}, \mathfrak{G}_\text{s}} \in \SQ{\mathfrak{G}}} \alpha_\text{s} \mathfrak{G}_\text{s} \, .
	\end{equation}
	Additionally, we also write \(\mathfrak{G}_\text{s} \in \SQ{\mathfrak{G}}\) instead of \(\mathfrak{G}_\text{s} \in \set{\alpha_\text{s}, \mathfrak{G}_\text{s}} \in \SQ{\mathfrak{G}}\), if we are only interested in properties of the graph \(\mathfrak{G}_\text{s}\). Furthermore, given such a \(\mathfrak{G}_\text{s} \in \SQ{\mathfrak{G}}\), we denote by \(\CQ{\mathfrak{G}_\text{s}}\) the set of its connected components (where we exclude the identity \(\one \in \HQ\) if \(\mathfrak{G}_\text{s} \in \operatorname{Aug} \left ( \HQ \right )\)), such that
	\begin{equation} \label{eqn:decomposition_connected_components}
		\mathfrak{G}_\text{s} \equiv \prod_{\mathfrak{G}_\text{c} \in \CQ{\mathfrak{G}_\text{s}}} \mathfrak{G}_\text{c} \, .
	\end{equation}
	In particular, we have
	\begin{equation}
		\mathfrak{G} \equiv \sum_{\set{\alpha_\text{s}, \mathfrak{G}_\text{s}} \in \SQ{\mathfrak{G}}} \alpha_\text{s} \left ( \prod_{\mathfrak{G}_\text{c} \in \CQ{\mathfrak{G}_\text{s}}} \mathfrak{G}_\text{c} \right ) \, .
	\end{equation}
\end{defn}

\enter

\begin{defn}[Connectedness and gradings of the renormalization Hopf algebra] \label{defn:connectedness_gradings_renormalization_hopf_algebra}
	Given the situation of \defnref{defn:betti-numbers_and_multi-indices}, we construct the following three gradings on the renormalization Hopf algebra \(\HQ\): Let \(\mathfrak{G} \in \HQ\) be an element with \(\mathfrak{G}_\text{s} \in \SQ{\mathfrak{G}}\), we associate the following number and two multi-indices to \(\mathfrak{G}_\text{s}\):
	\begin{itemize}
		\item Loop-grading, denoted via \(L, l \in \mathbb{N}_0\), and given by
		\begin{equation}
		\operatorname{LoopGrd} \left ( \mathfrak{G}_\text{s} \right ) := \sum_{\mathfrak{G}_\text{c} \in \CQ{\mathfrak{G}_\text{s}}} \lambda \left ( \mathfrak{G}_\text{c} \right )
		\end{equation}
		\item Vertex-grading, denoted via \(\mathbf{V}, \mathbf{v} \in \ZvQ\), and given by
		\begin{equation}
		\vtxgrd{\mathfrak{G}_\text{s}} := \intvtx{\mathfrak{G}_\text{s}} - \extvtx{\mathfrak{G}_\text{s}} \label{eqn:resgrd}
		\end{equation}
		\item Coupling-grading, denoted via \(\mathbf{C}, \mathbf{c} \in \ZqQ\), and given by
		\begin{equation}
		\cplgrd{\mathfrak{G}_\text{s}} := \intcpl{\mathfrak{G}_\text{s}} - \extcpl{\mathfrak{G}_\text{s}}
		\end{equation}
	\end{itemize}
	In statements that are valid in any of these three gradings, we denote the grading function by \(\operatorname{Grd}\) and the gradings via \(\mathbf{G}\) and \(\mathbf{g}\). Furthermore, we denote the unit multi-index with respect to a vertex residue \(v \in \RQO\) or a coupling constant \(q \in \qQ\) via \(\mathbf{e}_v\) and \(\mathbf{e}_q\), respectively. Moreover, we denote the restriction of an object or an element to any of these three gradings via
	\begin{equation}
		\left ( \HQ \right )_\mathbf{G} := \eval{\HQ}_\mathbf{G} \, , \label{eqn:notation_restriction_grading}
	\end{equation}
	and omit the brackets, if no lower index is present. Clearly,
	\begin{equation}
		\left ( \mathcal{H}_{\Q} \right )_{L = 0} \cong \left ( \mathcal{H}_{\Q} \right )_{\mathbf{V} = \mathbf{0}} \cong \left ( \mathcal{H}_{\Q} \right )_{\mathbf{C} = \mathbf{0}} \cong \mathbb{Q} \, ,
	\end{equation}
	and thus \(\mathcal{H}_{\Q}\) is connected in all three gradings.
\end{defn}

\enter

\begin{rem}
	The three gradings from \defnref{defn:connectedness_gradings_renormalization_hopf_algebra} are further refinements of each other. In particular, the vertex-grading is equivalent to the coupling-grading if each vertex is associated with a unique coupling constant and it is furthermore equivalent to the loop-grading if the theory has only one vertex type. Additionally, the coupling-grading is equivalent to the loop-grading if the corresponding theory has only one coupling constant. We remark that both statements are due to the Euler identity, given in \eqnref{eqn:euler_characteristic}. Moreover, the numbers and multi-indices from \defnref{defn:betti-numbers_and_multi-indices} and the gradings from \defnref{defn:connectedness_gradings_renormalization_hopf_algebra} are compatible with the multiplication of \(\HQ\) via addition, but not with the addition of \(\HQ\), as summands can live in different gradings.
\end{rem}

\enter

\begin{defn}[Projection to divergent graphs] \label{defn:projection_divergent_graphs}
	Let \(\Q\) be a QFT, \(\GQ\) its 1PI Feynman graph set and \(\HQ\) its (associated) renormalization Hopf algebra. We define the projection map to divergent Feynman graphs via
	\begin{subequations}
	\begin{align}
		\Omega \, & : \quad \GQ \to \GQ \, , \quad \Gamma \mapsto \begin{cases} \Gamma & \text{if \(\sdd{\Gamma} \geq 0\)} \\ 0 & \text{else, i.e.\ \(\sdd{\Gamma} < 0\)} \end{cases}
		\intertext{and then extend it additively and multiplicatively to \(\HQ\), i.e.}
		\Omega \, & : \quad \HQ \to \HQ \, , \quad \mathfrak{G} \mapsto \sum_{\set{\alpha_\text{s}, \mathfrak{G}_\text{s}} \in \SQ{\mathfrak{G}}} \alpha_\text{s} \left ( \prod_{\mathfrak{G}_\text{c} \in \CQ{\mathfrak{G}_\text{s}}} \Omega \left ( \mathfrak{G}_\text{c} \right ) \right ) \, ,
	\end{align}
	\end{subequations}
	that is, we keep the summands of \(\mathfrak{G}\) only, if all of its connected components are divergent. Additionally, we also use the following shorthand-notation:
	\begin{subequations}
		\begin{align}
		\overline{\HQ} & := \operatorname{Im} \left ( \Omega \right )
		\intertext{and}
		\overline{\mathfrak{G}} & := \Omega \left ( \mathfrak{G} \right ) \, .
		\end{align}
	\end{subequations}
	We remark that this definition will be useful for combinatorial Green's functions \(\combgreen^r\), combinatorial charges \(\combcharge^v\) and products thereof in the context of Hopf subalgebras for multiplicative renormalization, cf.\ \sectionref{sec:coproduct_and_antipode_identities}.
\end{defn}

\enter

\begin{defn}[Superficially compatible grading] \label{defn:superficially_compatible_grading}
	Given the situation of \defnref{defn:sdd} and \defnref{defn:connectedness_gradings_renormalization_hopf_algebra}, a grading is called superficially compatible, if all Feynman graphs with a given residue and a given grading have the same superficial degree of divergence. Equivalently, the degree of divergence of a Feynman graph depends only on its residue and the given grading. This will be studied in \propref{prop:superficial_grade_compatibility}.
\end{defn}

\enter

\begin{defn}[(Restricted) combinatorial Green's functions] \label{def:combinatorial_greens_functions}
	Let \(\Q\) be a QFT, \(\AQ\) the set of its amplitudes and \(\mathcal{G}_{\Q}\) the set of its Feynman graphs. Given an amplitude \(r \in \AQ\), we set
	\begin{equation}
		\precombgreen^r := \sum_{\substack{\Gamma \in \mathcal{G}_{\Q}\\\res{\Gamma} = r}} \frac{1}{\sym{\Gamma}} \Gamma
	\end{equation}
	and then define the combinatorial Green's function with amplitude \(r\) as the following sum:
	\begin{equation} \label{eqn:combgreen}
		\combgreen^r := \begin{cases} \one + \precombgreen^r & \text{if \(r \in \RQ^{[0]}\)} \\ \one - \precombgreen^r & \text{if \(r \in \RQ^{[1]}\)} \\ \precombgreen^r & \text{else, i.e.\ \(r \in \mathcal{Q}_\Q\)}
	\end{cases}
	\end{equation}
	Furthermore, we denote the restriction of \(\combgreen^r\) to one of the gradings \(\mathbf{g}\) from \defnref{defn:connectedness_gradings_renormalization_hopf_algebra} via
	\begin{equation}
		\rescombgreen^r_\mathbf{g} := \eval{\combgreen^r}_{\mathbf{g}} \, .
	\end{equation}
\end{defn}

\enter

\begin{rem} \label{rem:different_conventions_restricted_greens_functions}
	We remark that restricted combinatorial Green's functions are in the literature often denoted via \(c^r_\mathbf{g}\) and differ by a minus sign from our definition. Our convention is such that they are given as the restriction of the complete combinatorial Green's function to the corresponding grading, which provides additional minus signs for propagator graphs.
\end{rem}

\enter

\begin{defn}[(Restricted) combinatorial charges]
	Let \(v \in \RQ^{[0]}\) be a vertex residue, then we define its combinatorial charge \(\combcharge^v\) via
	\begin{equation}
		\combcharge^v := \frac{\combgreen^v}{\prod_{e \in E \left ( v \right )} \sqrt{\combgreen^e}} \, ,
	\end{equation}
	where \(E \left ( v \right )\) denotes the set of all edges attached to the vertex \(v\). Furthermore, we denote the restriction of \(\combcharge^v\) to one of the gradings \(\mathbf{g}\) from \defnref{defn:connectedness_gradings_renormalization_hopf_algebra} via
	\begin{equation}
		\combcharge^v_\mathbf{g} := \eval{\combcharge^v}_{\mathbf{g}} \, .
	\end{equation}
\end{defn}

\enter

\begin{defn}[(Restricted) products of combinatorial charges] \label{defn:combinatorial_charges}
	Let \(\mathbf{v} \in \ZvQ\) be a multi-index of vertex residues. Then we define the product of combinatorial charges associated to \(\mathbf{v}\) via
	\begin{equation} \label{eqn:combinatorial_charges}
		\combcharge^\mathbf{v} := \prod_{k = 1}^{\mathfrak{v}_\Q} \left ( \combcharge^{v_k} \right )^{(\mathbf{v})_k} \, ,
	\end{equation}
	where \((\mathbf{v})_k\) denotes the \(k\)-th entry of \(\mathbf{v}\). In particular, given a vertex residue \(v \in \RQO\) and a natural number \(n \in \mathbb{N}_+\), we define the exponentiation of the combinatorial charge \(\combcharge^v\) by \(n\) via
	\begin{equation}
		\combcharge^{nv} := \left ( \combcharge^{v} \right )^n \, .
	\end{equation}
	Furthermore, we denote the restriction of \(\combcharge^\mathbf{v}\) to one of the gradings \(\mathbf{g}\) from \defnref{defn:connectedness_gradings_renormalization_hopf_algebra} via
	\begin{equation} \label{eqn:restricted_products_combinatorial_charges}
		\combcharge^\mathbf{v}_\mathbf{g} := \eval{\left ( \prod_{k = 1}^{\mathfrak{v}_\Q} \left ( \combcharge^{v_k} \right )^{(\mathbf{v})_k} \right )}_\mathbf{g} \, .
	\end{equation}
\end{defn}

\enter

\begin{defn}[Sets of combinatorial and physical charges, projection map] \label{defn:sets_of_coupling_constants}
	Let \(\Q\) be a QFT. Then we denote via \(\mathbf{Q}_\Q\) and \(\mathbf{q}_\Q\) the sets of combinatorial and physical charges, respectively. We associate to each vertex residue \(v \in \RQO\) a combinatorial charge and to each interaction monomial in the Lagrange density \(\mathcal{L}_\Q\) a (not necessarily distinct) physical coupling constant. Additionally, we define the set-theoretic projection map\footnote{This map is the set-theoretic restriction of the renormalized Feynman rules, which map combinatorial charges to Feynman integrals corresponding to renormalized physical charges.}
	\begin{equation}
		\operatorname{Cpl} \, : \quad \mathbf{Q}_\Q \surject \mathbf{q}_\Q \, , \quad \combcharge^v \mapsto \theta \left ( v \right ) \, ,
	\end{equation}
	where \(\theta \colon \RQO \to \qQ\) is the map from \eqnref{eqn:coupling-coloring_function}.
\end{defn}

\enter

\begin{lem} \label{lem:v-h-e-sets_and_r-v}
	Given a Feynman graph \(\Gamma \in \GQ\), the sets \(V \left ( \Gamma \right )\) and \(E \left ( \Gamma \right )\) from \defnref{defn:feynman_graphs} and \(H \left ( \Gamma \right )\) and \(C \left ( \Gamma \right )\) from \defnref{defn:half-edges_corollas_external-vertex-residue-set}, viewed as multisets over \(\RQ\), depend only on its residue \(\res{\Gamma}\) and its vertex-grading multi-index \(\vtxgrd{\Gamma}\). In particular, we obtain well-defined sets \(V \left ( r, \mathbf{v} \right )\), \(E \left ( r, \mathbf{v} \right )\), \(H \left ( r, \mathbf{v} \right )\) and \(C \left ( r, \mathbf{v} \right )\), such that we have \(V \left ( r, \mathbf{v} \right ) \cong V \left ( \Gamma \right )\), \(E \left ( r, \mathbf{v} \right ) \cong E \left ( \Gamma \right )\), \(H \left ( r, \mathbf{v} \right ) \cong H \left ( \Gamma \right )\) and \(C \left ( r, \mathbf{v} \right ) \cong C \left ( \Gamma \right )\) as multisets over \(\RQ\), for all \(\Gamma \in \GQ\) with \(\res{\Gamma} = r\) and \(\vtxgrd{\Gamma} = \mathbf{v}\).
\end{lem}

\begin{proof}
	Given \(\Gamma \in \GQ\), then by definition its vertex set \(V \left ( \Gamma \right )\) is a multiset over \(\RQO\), using \(\tau\) from \eqnref{eqn:residue-coloring_function}. Thus it is uniquely characterized via its internal residue multi-index \(\intvtx{\Gamma}\), as it displays the multiplicity of each vertex residue \(r_v \in \RQO\) in the vertex set \(V \left ( \Gamma \right )\). Furthermore, we can reconstruct \(\intvtx{\Gamma}\) from \(\res{\Gamma}\) and \(\vtxgrd{\Gamma}\) using the definition, \eqnref{eqn:resgrd}, i.e.\
	\begin{equation}
		V \left ( r, \mathbf{v} \right ) \cong \vtxgrd{\Gamma} + \extvtx{\Gamma} \, ,
	\end{equation}
	while noting that \(\extvtx{\Gamma}\) is given for connected Feynman graphs \(\Gamma \in \GQ\) with \(\res{\Gamma} \in \RQO\) as the multi-indices having a one for the corresponding vertex residue and zeros else, i.e.\
	\begin{equation}
		\left ( \extvtx{\Gamma} \right )_j = \begin{cases} 1 & \text{if \(\res{\Gamma} = v_j \in \RQO\)} \\ 0 & \text{else} \end{cases} \, ,
	\end{equation}
	and for Feynman graphs \(\Gamma \in \GQ\) with \(\res{\Gamma} \in \big ( \AQ \setminus \RQO \big )\) as the zero-multi-index, i.e.\
	\begin{equation}
		\extvtx{\Gamma} = \mathbf{0} \, .
	\end{equation}
	Thus we have have shown that the set \(V \left ( r, \mathbf{v} \right )\) is well-defined as a multiset over \(\RQO\). Moreover, we can obtain the half-edge set \(H \left ( r, \mathbf{v} \right )\) from \(\res{\Gamma}\), \(V \left ( r, \mathbf{v} \right )\) and the coloring function \(\tau\) via
	\begin{equation}
		H \left ( r, \mathbf{v} \right ) := \set{h_v \in \bigsqcup_{v \in V \left ( r, \mathbf{v} \right )} H \left ( v, \tau \left ( v \right ) \right )} \setminus H \left ( \res{\Gamma}, \tau \left ( \res{\Gamma} \right ) \right ) \, ,
	\end{equation}
	where \(\tau\) indicates the vertex-type of \(v \in V \left ( r, \mathbf{v} \right )\), i.e.\ which edge-types are attachable to it. Then we denote via \(H \left ( v, \tau \left ( v \right ) \right )\) the set of all such pairings \(h_v \cong (v, e)\), take its disjoint union and then remove the set of external half-edges of \(\Gamma\). Additionally, we obtain the edge set \(E \left ( r, \mathbf{v} \right )\) as a multiset over \(\RQI\) from the half-edge set \(H \left ( r, \mathbf{v} \right )\) as follows: We use an equivalence relation \(\sim\) which identifies two half-edges to a single edge, if they are of the same particle type, i.e.\ \(h_1 \sim h_2\) if \(\tau \left ( e_1 \right ) = \tau \left ( e_2 \right )\), and then consider the quotient
	\begin{equation}
		E \left ( r, \mathbf{v} \right ) := H \left ( r, \mathbf{v} \right ) / \sim \, .
	\end{equation}
	We remark that there are in general many possibilities to define \(\sim\), but the resulting multisets are isomorphic, hence it suffices to take an arbitrary choice. In particular, one such choice is the involution \(\iota\) from \defnref{defn:half-edges_corollas_external-vertex-residue-set}. Finally, we obtain the corolla set \(C \left ( r, \mathbf{v} \right )\) from the vertex set \(V \left ( r, \mathbf{v} \right )\) and the half-edge set \(H \left ( r, \mathbf{v} \right )\) by simply associating to each vertex the set of half-edges attached to it, i.e.\
	\begin{equation}
		C \left ( r, \mathbf{v} \right ) := \set{\left . c_v \cong \left ( v, \set{h_v \in H \left ( r, \mathbf{v} \right )} \right ) \right \vert v \in V \left ( r, \mathbf{v} \right )} \, ,
	\end{equation}
	which completes the proof.
\end{proof}

\enter

\begin{defn}[Set of superficially divergent insertable graphs for a Feynman graph] \label{defn:set_of_divergent_insertable_graphs_graphs}
	Let \(\Q\) be a QFT and \(\Gamma \in \mathcal{G}_{\Q}\) a Feynman graph of \(\Q\). Then we denote by \(\IQ{\Gamma}\) the set of superficially divergent graphs that are insertable into \(\Gamma\), i.e.\footnote{We remark that we have \(\one \in \IQ{\Gamma}\) for all \(\Gamma \in \GQ\).}
	\begin{equation}
	\begin{split}
		\IQ{\Gamma} := \left \{ \gamma \in \HQ \, \vphantom{\gamma \in \HQ \, \vert \; \extvtx{\gamma} \leq \intvtx{\Gamma} \text{ and } \sdd{\gamma_\text{c}} \geq 0 \text{ for all } \gamma_\text{c} \in \CQ{\gamma} \phantom{\vert} \text{ and } \res{\gamma_\text{p}} \in E \left ( \Gamma \right ) \text{ for all } \gamma_\text{p} \in \mathcal{P} \left ( \gamma \right )} \right . & \left \vert \; \extvtx{\gamma} \leq \intvtx{\Gamma} \text{ and } \sdd{\gamma_\text{c}} \geq 0 \text{ for all } \gamma_\text{c} \in \CQ{\gamma} \vphantom{\gamma \in \HQ \, \vert \; \extvtx{\gamma} \leq \intvtx{\Gamma} \text{ and } \sdd{\gamma_\text{c}} \geq 0 \text{ for all } \gamma_\text{c} \in \CQ{\gamma} \phantom{\vert} \text{ and } \res{\gamma_\text{p}} \in E \left ( \Gamma \right ) \text{ for all } \gamma_\text{p} \in \mathcal{P} \left ( \gamma \right )} \right . \\ 
		& \phantom{\vert} \left . \; \text{and } \res{\gamma_\text{p}} \in E \left ( \Gamma \right ) \text{ for all } \gamma_\text{p} \in \mathcal{P} \left ( \gamma \right ) \vphantom{\gamma \in \HQ \, \vert \; \extvtx{\gamma} \leq \intvtx{\Gamma} \text{ and } \sdd{\gamma_\text{c}} \geq 0 \text{ for all } \gamma_\text{c} \in \CQ{\gamma} \phantom{\vert} \text{ and } \res{\gamma_\text{p}} \in E \left ( \Gamma \right ) \text{ for all } \gamma_\text{p} \in \mathcal{P} \left ( \gamma \right )} \right \} \, ,
	\end{split}
	\end{equation}
	where \(\mathcal{P} \left ( \gamma \right ) \subseteq \CQ{\gamma}\) denotes the set of connected components of \(\gamma\) which are propagator graphs.
\end{defn}

\enter

\begin{defn}[Insertion factors] \label{defn:ins-aut_ins_insrv}
	Let \(\Q\) be a QFT, \(\GQ\) its Feynman graph set and \(\HQ\) its (associated) renormalization Hopf algebra. Given two Feynman graphs \(\Gamma, \Gamma^\prime \in \mathcal{G}_{\Q}\) and an element in the Hopf algebra \(\gamma \in \HQ\), we want to characterize possible insertions. To this end, we define the following four combinatorial factors:
	\begin{itemize}
		\item \(\ins{\gamma}{\Gamma}\) denotes the number of ways to insert \(\gamma\) into \(\Gamma\)
		\item \(\insaut{\gamma}{\Gamma}{\Gamma^\prime}\) denotes the number of ways to insert \(\gamma\) into \(\Gamma\), such that the insertion is automorphic to \(\Gamma^\prime\)
		\item \(\insrr{\gamma}\) denotes the number of ways to insert \(\gamma\) into a Feynman graph with residue \(r\) and vertex-grading multi-index \(\mathbf{v}\), which is well-defined due to \lemref{lem:v-h-e-sets_and_r-v}
		\item \(\isoemb{\gamma}{\Gamma}\) denotes the number of non-trivial isomorphic embeddings of \(\gamma\) as a subgraph of \(\Gamma\)
	\end{itemize}
	We remark that these numbers are zero, if either \(\gamma\) is not insertable into \(\Gamma\), i.e.\ \(\gamma \notin \IQ{\Gamma}\), if there is no insertion possible which is automorphic to \(\Gamma^\prime\) or if there is no isomorphic embedding possible. Finally, we set for all \(\Gamma \in \GQ\)
	\begin{equation}
		\insaut{\one}{\Gamma}{\Gamma} = \ins{\one}{\Gamma} = \insrr{\one} = \isoemb{\one}{\Gamma} := 1 \, .
	\end{equation}
\end{defn}

\enter

\begin{prop} \label{prop:isomorphism_i}
	Given the situation of \defnref{defn:set_of_divergent_insertable_graphs_graphs}, we have for all Feynman graphs \(\Gamma \in \GQ\)
	\begin{equation}
		\sum_{\gamma \in \IQ{\Gamma}} \frac{\ins{\gamma}{\Gamma}}{\sym{\gamma}} \gamma = \frac{\prod_{v \in V \left ( \Gamma \right )} \overline{\combgreen}^v}{\prod_{e \in E \left ( \Gamma \right )} \overline{\combgreen}^e} \, .
	\end{equation}
\end{prop}

\begin{proof}
	We can insert in each vertex \(v \in V \left ( \Gamma \right )\) at most one superficially divergent vertex correction \(\gamma^v\) with \(\res{\gamma^v} = v\), i.e. a summand of \(\overline{\combgreen}^v\). Furthermore, we can insert in each edge \(e \in E \left ( \Gamma \right )\) arbitrary many superficially divergent edge corrections \(\gamma^e = \prod_i \gamma^e_i\) with \(\operatorname{Res} \big ( \gamma^e_i \big ) = e\) for all \(i\), i.e.\ a summand of \(\textfrac{1}{\overline{\combgreen}^e}\), where the fraction is understood as the formal geometric series \(\textfrac{1}{\left ( 1-x \right )} \equiv \sum_{k = 0}^\infty x^k\).\footnote{We remark that this viewpoint is the reason for the minus sign in the definition of combinatorial Green's function for propagators, i.e.\ \eqnref{eqn:combgreen} of \defnref{def:combinatorial_greens_functions}.} Finally, the prefactor \(\ins{\gamma}{\Gamma}\) corresponds to the multiplicity of similar \(\RQ\)-colored vertices and edges of \(\Gamma\), using \(\tau\) from \eqnref{eqn:residue-coloring_function}.
\end{proof}

\enter

\begin{defn}[Set of superficially divergent insertable graphs for residue and vertex-grading] \label{defn:set_of_divergent_insertable_graphs_grading-residue}
	Let \(\Q\) be a QFT, \(\AQ\) its amplitude set and \(\HQ\) its (associated) renormalization Hopf algebra. Given an amplitude \(r \in \AQ\), a vertex-grading multi-index \(\mathbf{v} \in \ZvQ\) with \(\mathbf{v} \neq \mathbf{0}\) and a Feynman graph \(\Gamma \in \GQ\) with \(\res{\Gamma} = r\) and \(\vtxgrd{\Gamma} = \mathbf{v}\). Then we define the set \(\IQrv\) of superficially divergent graphs insertable into Feynman graphs with residue \(r\) and vertex-grading \(\mathbf{v}\) via\footnote{We remark that we have \(\one \in \IQrv\) for all \(r \in \AQ\) and \(\mathbf{v} \in \ZvQ\) with \(\mathbf{v} \neq \mathbf{0}\).}
	\begin{equation}
		\IQrv := \IQ{\Gamma} \, ,
	\end{equation}
	which is well-defined due to \colref{col:set_of_divergent_insertable_graphs_grading-residue}.
\end{defn}

\enter

\begin{col} \label{col:set_of_divergent_insertable_graphs_grading-residue}
	Given the situation of \defnref{defn:set_of_divergent_insertable_graphs_grading-residue}, the set \(\IQrv\) satisfies
	\begin{equation} \label{eqn:set_of_divergent_insertable_graphs_grading-residue}
		\sum_{\gamma \in \IQrv} \frac{\insrr{\gamma}}{\sym{\gamma}} \gamma = \frac{\prod_{v \in V \left ( r, \mathbf{v} \right )} \overline{\combgreen}^v}{\prod_{e \in E \left ( r, \mathbf{v} \right )} \overline{\combgreen}^e} \, ,
	\end{equation}
	and is thus in particular well-defined.
\end{col}

\begin{proof}
	This follows directly from \lemref{lem:v-h-e-sets_and_r-v} and \propref{prop:isomorphism_i}.
\end{proof}

\enter

\begin{prop} \label{prop:isomorphic_insertable_graph_sets}
	Given the situation of \defnref{defn:set_of_divergent_insertable_graphs_grading-residue}, we have for all amplitudes \(r \in \AQ\) and vertex-grading multi-indices \(\mathbf{v} \in \ZvQ\)
	\begin{equation}
		\sum_{\gamma \in \IQrv} \frac{\insrr{\gamma}}{\sym{\gamma}} \gamma = \begin{cases} \overline{\combgreen}^r \overline{\combcharge}^\mathbf{v} & \text{if \(r \in \RQ\)} \\ \prod_{e \in E \left ( r \right )} \sqrt{\overline{\combgreen}^e} \overline{\combcharge}^\mathbf{v} & \text{else, i.e.\ \(r \in \mathcal{Q}_\Q\)} \end{cases} \, ,
	\end{equation}
	where \(E \left ( r \right )\) denotes the set of edges attached to \(r\) and the square-root is defined via the formal series \(\sqrt{x} \equiv \sum_{k = 0}^\infty \binom{\textfrac{1}{2}}{k} \left ( x - 1 \right )^k\).
\end{prop}

\begin{proof}
	The numerator of the right hand side of \eqnref{eqn:set_of_divergent_insertable_graphs_grading-residue} of \colref{col:set_of_divergent_insertable_graphs_grading-residue} can be expressed as follows:\footnote{The two cases emerge due to the vertex-grading, which treats Feynman graphs with vertex residues differently in order to obtain a valid grading of the renormalization Hopf algebra, cf.\ \eqnref{eqn:resgrd} of \defnref{defn:connectedness_gradings_renormalization_hopf_algebra}.}
	\begin{subequations}
	\begin{equation} \label{eqn:combgreen_vertexset}
		\prod_{v \in V \left ( r, \mathbf{v} \right )} \overline{\combgreen}^v = \begin{cases} \overline{\combgreen}^r \overline{\combgreen}^\mathbf{v} & \text{if \(r \in \RQO\)} \\ \overline{\combgreen}^\mathbf{v} & \text{else, i.e.\ \(r \in \left ( \AQ \setminus \RQO \right )\)} \end{cases} \, ,
	\end{equation}
	where the notation \(\overline{\combgreen}^\mathbf{v} := \prod_{k = 1}^{\mathfrak{v}_\Q} \big ( \overline{\combgreen}^{v_k} \big )^{\mathbf{v}_k}\) is analogous to \eqnref{eqn:combinatorial_charges} of \defnref{defn:combinatorial_charges}. Furthermore, the denominator of the right hand side of \eqnref{eqn:set_of_divergent_insertable_graphs_grading-residue} of \colref{col:set_of_divergent_insertable_graphs_grading-residue} can be expressed as follows:
	\begin{equation} \label{eqn:combgreen_edgeset}
	\begin{split}
		\frac{\one}{\prod_{e \in E \left ( r, \mathbf{v} \right )} \overline{\combgreen}^e} & = \begin{cases} \dfrac{\one}{\prod_{v \in V \left ( r, \mathbf{v} \right )} \left ( \prod_{e \in E \left ( v \right )} \sqrt{\overline{\combgreen}^e} \right )} & \text{if \(r \in \RQO\)} \\[5pt] \dfrac{\prod_{{e_1} \in E \left ( r \right )} \sqrt{\overline{\combgreen}^{e_1}}}{\prod_{v \in V \left ( r, \mathbf{v} \right )} \left ( \prod_{{e_2} \in E \left ( v \right )} \sqrt{\overline{\combgreen}^{e_2}} \right )} & \text{else, i.e.\ \(r \in \left ( \AQ \setminus \RQO \right )\)} \end{cases} \\
		& = \begin{cases} \dfrac{\overline{\combcharge}^\mathbf{v}}{\overline{\combgreen}^\mathbf{v}} & \text{if \(r \in \RQO\)} \\[10pt] \dfrac{\overline{\combgreen}^r \overline{\combcharge}^\mathbf{v}}{\overline{\combgreen}^\mathbf{v}} & \text{if \(r \in \RQI\)} \\[5pt] \dfrac{\prod_{e \in E \left ( r \right )} \sqrt{\overline{\combgreen}^e} \overline{\combcharge}^\mathbf{v}}{\overline{\combgreen}^\mathbf{v}} & \text{else, i.e.\ \(r \in \mathcal{Q}_\Q\)} \end{cases}
	\end{split}
	\end{equation}
	\end{subequations}
	Multiplying \eqnref{eqn:combgreen_vertexset} with \eqnref{eqn:combgreen_edgeset}, we obtain
	\begin{equation}
		\frac{\prod_{v \in V \left ( r, \mathbf{v} \right )} \overline{\combgreen}^v}{\prod_{e \in E \left ( r, \mathbf{v} \right )} \overline{\combgreen}^e} = \begin{cases} \overline{\combgreen}^r \overline{\combcharge}^\mathbf{v} & \text{if \(r \in \RQ\)} \\ \prod_{e \in E \left ( r \right )} \sqrt{\overline{\combgreen}^e} \overline{\combcharge}^\mathbf{v} & \text{else, i.e.\ \(r \in \mathcal{Q}_\Q\)} \end{cases} \, .
	\end{equation}
	Finally, the prefactor \(\insrr{\gamma}\) corresponds to the multiplicity of similar \(\RQ\)-colored vertices and edges of Feynman graphs with residue \(r\) and vertex-grading multi-index \(\mathbf{v}\), using \(\tau\) from \eqnref{eqn:residue-coloring_function}.
\end{proof}

\enter

\begin{lem}[{\cite[Lemma 12]{vSuijlekom_QCD}}] \label{lem:sym-factors_and_ins-factors}
	Given the situation of \defnref{defn:automorphisms_of_feynman_graphs} and \defnref{defn:ins-aut_ins_insrv}, we have for all Feynman graphs \(\Gamma \in \GQ\) and their corresponding subgraphs \(\one \subseteq \gamma \subseteq \Gamma\)
	\begin{equation}
		\frac{\sym{\gamma} \sym{\Gamma / \gamma}}{\sym{\Gamma}} = \frac{\insaut{\gamma}{\Gamma / \gamma}{\Gamma}}{\isoemb{\gamma}{\Gamma}} \, .
	\end{equation}
\end{lem}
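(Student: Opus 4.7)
The plan is to derive the identity by orbit-stabilizer applied to the natural action of $\operatorname{Aut}(\Gamma)$ on configurations of $\gamma$ inside $\Gamma$, and then to split the stabilizer of a fixed subgraph into an internal piece (accounting for $\sym{\gamma}$) and an external piece (accounting for $\sym{\Gamma/\gamma}$, weighted by the insertion count).

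First, I would consider the set $X$ of isomorphic embeddings of $\gamma$ into $\Gamma$, whose cardinality is $\isoemb{\gamma}{\Gamma}$. This set carries two commuting actions: $\operatorname{Aut}(\Gamma)$ acts by post-composition, while $\operatorname{Aut}(\gamma)$ acts freely by pre-composition. In particular, the quotient $X/\operatorname{Aut}(\gamma)$ is in bijection with the set of subgraphs of $\Gamma$ isomorphic to $\gamma$, so the number of such subgraphs is $\isoemb{\gamma}{\Gamma}/\sym{\gamma}$.

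Second, I would analyse the setwise stabilizer $\operatorname{Stab}_{\operatorname{Aut}(\Gamma)}(\gamma)$ of a fixed image $\gamma \subseteq \Gamma$. Any stabilizing automorphism restricts to an element of $\operatorname{Aut}(\gamma)$ and descends to an element of $\operatorname{Aut}(\Gamma/\gamma)$; because Feynman graph automorphisms fix external legs pointwise by \defnref{defn:Automorphisms_of_Feynman_graphs}, this gives an injection of the stabilizer into $\operatorname{Aut}(\gamma) \times \operatorname{Aut}(\Gamma/\gamma)$. The image consists precisely of those pairs $(a,b)$ for which the $b$-permuted insertion of the $a$-permuted $\gamma$ into $\Gamma/\gamma$ reproduces $\Gamma$; the $a$-freedom contributes a factor of $\sym{\gamma}$, while the number of $b$'s is, by definition, $\insaut{\gamma}{\Gamma/\gamma}{\Gamma}$. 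Hence $|\operatorname{Stab}_{\operatorname{Aut}(\Gamma)}(\gamma)| = \sym{\gamma} \cdot \insaut{\gamma}{\Gamma/\gamma}{\Gamma}$.

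Third, I would apply the orbit-stabilizer theorem to the action of $\operatorname{Aut}(\Gamma)$ on subgraphs of $\Gamma$ isomorphic to $\gamma$. Since all such subgraphs have conjugate (hence equinumerous) stabilizers, summing over orbits gives
\[
\frac{\isoemb{\gamma}{\Gamma}}{\sym{\gamma}} \cdot \sym{\gamma} \cdot \insaut{\gamma}{\Gamma/\gamma}{\Gamma} = \sym{\Gamma} \cdot \sym{\Gamma/\gamma},
\]
which is a direct reformulation of the claimed identity. The main obstacle is the middle step: one has to verify carefully that the image of the setwise stabilizer inside $\operatorname{Aut}(\gamma) \times \operatorname{Aut}(\Gamma/\gamma)$ is exactly the set of insertion-compatible pairs, a fact that hinges on the rigidity of external legs and on the convention that $\Gamma / \gamma$ records the placement data of $\gamma$ inside $\Gamma$ up to the insertion ambiguity. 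Since this statement is Lemma 12 of \cite{vSuijlekom_QCD}, one may alternatively simply invoke that reference.
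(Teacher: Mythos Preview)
Your orbit--stabilizer strategy is more structured than the paper's own argument, which is essentially a heuristic: the paper simply compares $\sym{\Gamma}$ with $\sym{\gamma}\,\sym{\Gamma/\gamma}$, notes qualitatively that the discrepancy comes from (i) automorphisms of $\Gamma$ that permute isomorphic copies of $\gamma$ and (ii) automorphisms of $\Gamma/\gamma$ that are destroyed by the insertion, and asserts that the ratio $\insaut{\gamma}{\Gamma/\gamma}{\Gamma}/\isoemb{\gamma}{\Gamma}$ captures both effects, effectively deferring to \cite{vSuijlekom_QCD}. So a clean group-theoretic argument would be a genuine improvement. Unfortunately, as written, yours has real gaps.

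The decisive problem is that your displayed equation in Step~3 is \emph{not} the lemma. Cross-multiplying the claimed identity yields
\[
\isoemb{\gamma}{\Gamma}\,\sym{\gamma}\,\sym{\Gamma/\gamma}\;=\;\sym{\Gamma}\,\insaut{\gamma}{\Gamma/\gamma}{\Gamma},
\]
whereas your equation simplifies to $\isoemb{\gamma}{\Gamma}\,\insaut{\gamma}{\Gamma/\gamma}{\Gamma}=\sym{\Gamma}\,\sym{\Gamma/\gamma}$; these agree only if $\insaut{\gamma}{\Gamma/\gamma}{\Gamma}^{2}=\sym{\gamma}\,\sym{\Gamma/\gamma}^{2}$, which fails already for any $\gamma$ with $\sym{\gamma}\neq 1$. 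The bookkeeping that produced the spurious $\sym{\Gamma/\gamma}$ on the right is the phrase ``summing over orbits'': orbit--stabilizer gives $|\mathrm{orbit}|\cdot|\mathrm{Stab}|=\sym{\Gamma}$ for each orbit, so the right-hand side should be $(\text{number of orbits})\cdot\sym{\Gamma}$, and you have no argument that this number equals $\sym{\Gamma/\gamma}$.

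The error traces back to Step~2. The descent map $\operatorname{Stab}_{\operatorname{Aut}(\Gamma)}(\gamma)\to\operatorname{Aut}(\Gamma/\gamma)$ indeed has kernel isomorphic to $\operatorname{Aut}(\gamma)$ (elements of $\operatorname{Aut}(\gamma)$ fix the external legs of $\gamma$ and extend by the identity), so $|\mathrm{Stab}|=\sym{\gamma}\cdot|H|$ for the image subgroup $H$. But identifying $|H|$ with $\insaut{\gamma}{\Gamma/\gamma}{\Gamma}$ is not ``by definition'': the latter counts \emph{insertions} (a choice of vertex in $\Gamma/\gamma$ together with a gluing bijection) that yield $\Gamma$, not automorphisms of $\Gamma/\gamma$. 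Relating the two requires analysing the action of $\operatorname{Aut}(\Gamma/\gamma)$ on the set of $\Gamma$-yielding insertions and is exactly where the factor $\sym{\Gamma/\gamma}$ must enter. Finally, the claim that all isomorphic copies $\gamma'\subseteq\Gamma$ have conjugate stabilizers holds only within a single $\operatorname{Aut}(\Gamma)$-orbit; across orbits neither the stabilizers nor even the cographs $\Gamma/\gamma'$ need be isomorphic.
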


\begin{proof}
	Let \(\Gamma \in \GQ\) be a Feynman graph. Then, by definition, we have
	\begin{equation}
		\sym{\Gamma} = \# \operatorname{Aut} \left ( \Gamma \right ) \, ,
	\end{equation}
	where the automorphisms are fixing the external leg structure by definition, cf.\ \defnref{defn:automorphisms_of_feynman_graphs}. Thus, for a given subgraph \(\gamma \subseteq \Gamma\), we have
	\begin{equation}
		\sym{\gamma} \sym{\Gamma / \gamma} = \# \operatorname{Aut} \left ( \gamma \right ) \# \operatorname{Aut} \left ( \Gamma / \gamma \right ) \, ,
	\end{equation}
	which counts all automorphisms of \(\Gamma / \gamma\) times those of \(\gamma\), fixing both their external legs. Thus, comparing to \(\sym{\Gamma}\), the following two things can appear: The automorphism group \(\operatorname{Aut} \left ( \Gamma \right )\) might contain automorphisms which exchange non-trivial isomorphic embeddings \(\gamma, \gamma^\prime \subseteq \Gamma\) and can thus contain automorphisms exceeding the set \(\operatorname{Aut} \left ( \gamma \right ) \cup \operatorname{Aut} \left ( \Gamma / \gamma \right )\). Contrary, the quotient graph \(\Gamma / \gamma\) might possess symmetries which get spoiled after the insertion of \(\gamma\) into \(\Gamma / \gamma\). These two possibilities are reflected by the quotient \(\insaut{\gamma}{\Gamma / \gamma}{\Gamma} / \isoemb{\gamma}{\Gamma}\), as it counts the number of equivalent insertions of \(\gamma\) into \(\Gamma / \gamma\) automorphic to \(\Gamma\) modulo the additional symmetries that might appear, cf.\ \defnref{defn:ins-aut_ins_insrv}. Thus we obtain
	\begin{equation}
		\frac{\sym{\gamma} \sym{\Gamma / \gamma}}{\sym{\Gamma}} = \frac{\insaut{\gamma}{\Gamma / \gamma}{\Gamma}}{\isoemb{\gamma}{\Gamma}} \, ,
	\end{equation}
	as claimed.
\end{proof}

\enter

\begin{defn}[Algebra of formal (Feynman) integral expressions] \label{defn:formal_feynman_integral_expressions}
	Let \(\field\) be a field and \(\mathcal{E}\) the \(\field\)-vector space generated by the set of formal integral expressions, that is pairs \((D,I)\), where \(D\) is a domain and \(I\) a differential form on it. Addition is then declared via
	\begin{equation}
		(D_1,I_1) + (D_2,I_2) := (D_1 \oplus D_2, I_1 \oplus 0_2 + 0_1 \oplus I_2) \, ,
	\end{equation}
	where \(0_i\) is the zero differential form on the domain \(D_i\), and scalar multiplication is declared via
	\begin{equation}
		k (D,I) := (D,kI)
	\end{equation}
	for \(k \in \field\). Furthermore, we turn \(\mathcal{E}\) into an algebra by declaring the multiplication via
	\begin{equation}
		(D_1,I_1) \times (D_2,I_2) := (D_1 \otimes_\field D_2, I_1 \otimes_\field I_2) \, , \label{eqn:multiplication_map_ffie}
	\end{equation}
	which we call \(\mu\). Moreover, we address the name `formal integral expression' by defining the evaluation character (i.e.\ algebra morphism)
	\begin{equation}
		\operatorname{Int} \, : \quad \mathcal{E}_\text{Fin} \to \mathbb{C} \, , \quad (D,I) \mapsto \int_D I \, ,
	\end{equation}
	where \(\mathcal{E}_\text{Fin} \subset \mathcal{E}\) is the subalgebra where the evaluation map is finite and thus well-defined. In particular, we fix the normal subgroups \(\mathbf{1}_{\mathcal{E}_\text{Fin}} := \operatorname{Int}^{-1} \left ( 1 \right ) \subset \mathcal{E}_\text{Fin}\) and \(\mathbf{1}_{\mathcal{E}} := \iota \left ( \mathbf{1}_{\mathcal{E}_\text{Fin}} \right )\), where \(\iota \colon \mathcal{E}_\text{Fin} \hookrightarrow \mathcal{E}\) is the natural inclusion map. Both of these groups consist of formal integral expressions \((D,I)\) with \(\operatorname{Int} \left ( D,I \right ) = 1\). Therefore, we will treat \(\mathbf{1}_{\mathcal{E}}\) as the equivalence class of `units' on \(\mathcal{E}\). Finally, given a QFT \(\Q\), we define the algebra of its formal Feynman integral expressions as follows: We set \(\field := \mathbb{Q}\) and restrict the allowed domains \(D\) and differential forms \(I\) according to the chosen Feynman integral representation (position space, momentum space, parametric space, etc.).
\end{defn}

\enter

\begin{rem}
	 The setup of \defnref{defn:formal_feynman_integral_expressions} allows us in particular to address ill-defined integral expressions by externalizing the integration process.
\end{rem}

\enter

\begin{defn}[Feynman rules, regularization and renormalization schemes] \label{defn:fr_reg_ren_counterterm}
	Let \(\Q\) be a QFT, \(\HQ\) its (associated) renormalization Hopf algebra and \(\EQ\) its algebra of formal Feynman integral expressions. Then we define its Feynman rules as the following character (i.e.\ algebra morphism)
	\begin{equation}
		\Phi \, : \quad \HQ \to \EQ \, , \quad \Gamma \mapsto (D_\Gamma, I_\Gamma) \, ,
	\end{equation}
	where \((D_\Gamma, I_\Gamma)\) is the formal Feynman integral expression for the Feynman graph \(\Gamma\). Furthermore, we introduce a regularization scheme \(\mathscr{E}\) as a map\footnote{There exist renormalization schemes, such as kinematic renormalization schemes, that are well-defined without a previous regularization step. These can be seen as embedded into our framework by simply setting \(\mathscr{E} := \operatorname{Id}_{\EQ}\) and considering it as the natural inclusion of \(\EQ\) into \(\EQ^\varepsilon\).}
	\begin{equation} \label{eqn:regularization_scheme}
		\mathscr{E} \, : \quad \EQ \hookrightarrow \EQ^\varepsilon \, , \quad (D,I) \mapsto \left ( D,I_\mathscr{E} \left ( \varepsilon \right ) \right ) := \left ( D,\sum_{i = 0}^\infty I_i \, \varepsilon^i \right ) \, ,
	\end{equation}
	where \(\EQ^\varepsilon := \EQ [ [ \varepsilon ] ] \supset \EQ\) and the coefficients of the Taylor series are differential forms \(I_i\) on \(D\). Additionally, the regulated formal integral expressions \(\left ( D,I_\mathscr{E} \left ( \varepsilon \right ) \right )\) are subject to the boundary condition \(I \left ( 0 \right ) \equiv I\), which is equivalent to \(I_0 := I\), and the integrability condition \(\operatorname{Int} \left ( D,I_\mathscr{E} \left ( \varepsilon \right ) \right ) < \infty\), for all \(\varepsilon \in J\) with \(J \subseteq [0,\infty)\) a non-empty interval. We then set the regularized Feynman rules as the map
	\begin{equation}
		\regFR \, : \quad \HQ \to {\EQ}^\varepsilon \, , \quad \Gamma \mapsto \left ( \mathscr{E} \circ \Phi \right ) \left ( \Gamma, \varepsilon \right ) \, .
	\end{equation}
	Moreover, we introduce a renormalization scheme as a linear map\footnote{Sometimes, if convenient, we view \(\mathscr{R}\) also as endomorphism on \(\EQ^\varepsilon\) with image \(\EQ^\varepsilon_-\) and cokernel \(\EQ^\varepsilon_+\).}
	\begin{equation} \label{eqn:renormalization_scheme}
		\mathscr{R} \, : \quad \EQ^\varepsilon \surject \EQ^\varepsilon_- \, , \quad \left ( D,I_\mathscr{E} \left ( \varepsilon \right ) \right ) \mapsto \begin{cases} (D,0_D) & \text{if \(\left ( D,I_\mathscr{E} \left ( \varepsilon \right ) \right ) \in \operatorname{Ker} \left ( \mathscr{R} \right )\)} \\ \left ( D,I_{\mathscr{E}, \mathscr{R}} \left ( \varepsilon \right ) \right ) & \text{else} \end{cases} \, ,
	\end{equation}
	where \({\EQ}^\varepsilon_- := \operatorname{Im} \left ( \mathscr{R} \right ) \subset \EQ^\varepsilon\) and \(0_D\) is the zero differential form on \(D\), for all \(\varepsilon \in \mathbb{R}\). Additionally, to ensure locality of the counterterm, \(\mathscr{R}\) needs to be a Rota-Baxter operator of weight \(\lambda = -1\), i.e.\ fulfill
	\begin{equation}
		\mu \circ \left ( \mathscr{R} \otimes \mathscr{R} \right ) + \mathscr{R} \circ \mu = \mathscr{R} \circ \mu \circ \left ( \mathscr{R} \otimes \id + \id \otimes \mathscr{R} \right ) \, ,
	\end{equation}
	where \(\mu\) denotes the multiplication on \(\EQ^\varepsilon\) (and by abuse of notation also on \(\EQ^\varepsilon_-\) via restriction) from \eqnref{eqn:multiplication_map_ffie}. In particular, \((\EQ^\varepsilon, \mathscr{R})\) is a Rota-Baxter algebra of weight \(\lambda = -1\) and \(\mathscr{R}\) induces the splitting
	\begin{equation}
		\EQ^\varepsilon \cong \EQ^\varepsilon_+ \oplus \EQ^\varepsilon_-
	\end{equation}
	with \(\EQ^\varepsilon_+ := \operatorname{CoKer} \left ( \mathscr{R} \right )\) and \(\EQ^\varepsilon_- := \operatorname{Im} \left ( \mathscr{R} \right )\). Then we can introduce the counterterm map \(\countertermsymbol\), sometimes also called `twisted antipode', recursively via the normalization \(\counterterm{\one} \in \mathbf{1}_{\EQ^\varepsilon}\) and
	\begin{equation}
		\countertermsymbol \, : \quad \operatorname{Aug} \left ( \HQ \right ) \to \EQ^\varepsilon_- \, , \quad \Gamma \mapsto - \renscheme{\countertermsymbol \star \left ( \regFR \circ \mathscr{A} \right )} \left ( \Gamma \right )
	\end{equation}
	else, where \(\mathscr{A} \colon \HQ \surject \operatorname{Aug} \left ( \HQ \right )\) is the projector onto the augmentation ideal from \defnref{defn:augmentation_ideal}. Next we define renormalized Feynman rules via
	\begin{equation}
		\Phi_\mathscr{R} \, : \quad \HQ \to \EQ^\varepsilon_+ \, , \quad \Gamma \mapsto \underset{\varepsilon \mapsto 0}{\operatorname{Lim}} \left ( \countertermsymbol \star \Phi \right ) \left ( \Gamma \right ) \, ,
	\end{equation}
	where the corresponding formal Feynman integral expression is well-defined in the limit \(\varepsilon \mapsto 0\), if the cokernel \(\operatorname{CoKer} \left ( \mathscr{R} \right )\) consists only of convergent formal Feynman integral expressions, cf.\ \lemref{lem:finite_renormalization_schemes}. We remark that the renormalized Feynman rules \(\Phi_\mathscr{R}\) and the counterterm map \(\countertermsymbol\) correspond to the algebraic Birkhoff decomposition of the Feynman rules \(\Phi\) with respect to the renormalization scheme \(\mathscr{R}\), as was first observed in \cite{Connes_Kreimer_0} and e.g.\ reviewed in \cite{Guo,Panzer}. Finally, we remark that the above discussion can be also lifted to the algebra of meromorphic functions \(\mathcal{M}^\varepsilon := \mathbb{C} \big [ \varepsilon^{-1}, \varepsilon \big ] \big ] \), if a suitable regularization scheme \(\mathscr{E}\) is chosen,\footnote{In the sense that the integrated expressions do not contain essential singularities in the regulator.} by setting
	\begin{equation}
		\widetilde{\mathscr{E}} \, : \quad \EQ \to \mathcal{M}^\varepsilon \, , \quad \left ( D,I_\mathscr{E} \left ( \varepsilon \right ) \right ) \mapsto f_\mathscr{E} \left ( \varepsilon \right ) := \int_D \eval{\left ( I_\mathscr{E} \left ( \varepsilon \right ) \right )}_{\varepsilon \in J} \, ,
	\end{equation}
	for fixed external momentum configurations away from Landau singularities. Then we can proceed as before by setting a renormalization scheme as a linear map
	\begin{equation}
		\widetilde{\mathscr{R}} \, : \quad \mathcal{M}^\varepsilon \surject \mathcal{M}^\varepsilon_- \, , \quad f_\mathscr{E} \left ( \varepsilon \right ) \mapsto \begin{cases} 0 & \text{if \(f_\mathscr{E} \left ( \varepsilon \right ) \in \operatorname{Ker} \big ( \widetilde{\mathscr{R}} \big )\)} \\ f_{\mathscr{E}, \mathscr{R}} \left ( \varepsilon \right ) & \text{else} \end{cases} \, ,
	\end{equation}
	where \(\mathcal{M}^\varepsilon_- := \operatorname{Im} \big ( \widetilde{\mathscr{R}} \big ) \subset \mathcal{M}^\varepsilon\), and the rest analogously.
\end{defn}

\enter

\begin{defn}[Hopf subalgebras for multiplicative renormalization] \label{defn:hopf_subalgebras_renormalization_hopf_algebra}
	Let \(\Q\) be a QFT, \(\RQ\) its weighted residue set, \(\HQ\) its (associated) renormalization Hopf algebra and \(\rescombgreen^r_\mathbf{G} \in \HQ\) a restricted combinatorial Green's function, where \(\mathbf{G}\) and \(\mathbf{g}\) denote one of the gradings from \defnref{defn:connectedness_gradings_renormalization_hopf_algebra}. We are interested in Hopf subalgebras which correspond to multiplicative renormalization, i.e.\ Hopf subalgebras of \(\HQ\) such that the coproduct factors over restricted combinatorial Green's functions as follows:
	\begin{equation}
		\Delta \left ( \rescombgreen^r_{\mathbf{G}} \right ) = \sum_{\mathbf{g}} \mathfrak{P}_{\mathbf{g}} \left ( \rescombgreen^r_{\mathbf{G}} \right ) \otimes \rescombgreen^r_{\mathbf{G} - \mathbf{g}} \, , \label{eqn:hopf_subalgebras_multi-index}
	\end{equation}
	where \(\mathfrak{P}_{\mathbf{g}} \left ( \rescombgreen^r_{\mathbf{G}} \right ) \in \HQ\) is a polynomial in graphs such that each summand has multi-index \(\mathbf{g}\).\footnote{There exist closed expressions for the polynomials \(\mathfrak{P}_{\mathbf{g}} \left ( \rescombgreen^r_{\mathbf{G}} \right )\) as we will see in \sectionref{sec:coproduct_and_antipode_identities}, in particular \propref{prop:coproduct_greensfunctions},, which were first introduced in \cite{Yeats_PhD}.}
\end{defn}

\enter

\begin{rem} \label{rem:hopf_subalgebras_renormalization_hopf_algebra}
	Given the situation of \defnref{defn:fr_reg_ren_counterterm} and assume that the (associated) renormalization Hopf algebra \(\HQ\) possesses Hopf subalgebras in the sense of \defnref{defn:hopf_subalgebras_renormalization_hopf_algebra}. Then we can calculate the \(Z\)-factor for a given residue \(r \in \RQ\) via
	\begin{equation}
		Z^r_{\mathscr{E}, \mathscr{R}} \left ( \varepsilon \right ) := \counterterm{\combgreen^r} \, .
	\end{equation}
	More details in this direction can be found in \cite{Panzer,vSuijlekom_Multiplicative} (with a different notation). Additionally, we remark that the existence of the Hopf subalgebras from \defnref{defn:hopf_subalgebras_renormalization_hopf_algebra} depends crucially on the grading \(\mathbf{g}\). In particular, for the loop-grading these Hopf subalgebras exist if and only if the QFT has only one fundamental interaction, i.e.\ \(\RQO\) is a singleton. Furthermore, they exist for the coupling-grading if and only if the QFT has for each fundamental interaction a different coupling constant, i.e.\ \(\# \QQ = \# \qQ\). Finally, they exist always for the vertex-grading, as we will see in \propref{prop:coproduct_greensfunctions}, cf.\ \sectionsaref{sec:coproduct_and_antipode_identities}{sec:quantum_gauge_symmetries_and_subdivergences}.
\end{rem}

\enter

\begin{lem} \label{lem:finite_renormalization_schemes}
	The image of the renormalized Feynman rules \(\operatorname{Im} \left ( \Phi_\mathscr{R} \right )\) consists of convergent integral expressions, if the cokernel \(\operatorname{CoKer} \left ( \mathscr{R} \right )\) of the corresponding renormalization scheme \(\mathscr{R} \in \operatorname{End} \left ( \EQ^\varepsilon \right )\) does.
\end{lem}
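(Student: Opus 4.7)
The plan is to invoke the Connes--Kreimer algebraic Birkhoff decomposition to show that $\Phi_\mathscr{R}$ actually takes values in $\EQ^\varepsilon_+$ as asserted in \defnref{defn:FR_reg_ren_counterterm}, and then read off the conclusion from the hypothesis on $\operatorname{CoKer}(\mathscr{R})$. Since the splitting \eqref{eqn:splitting_EQ} is compatible with $\mathscr{R}$ being a Rota--Baxter operator of weight $\lambda = -1$, the argument splits neatly into verifying that $\countertermsymbol$ lands in $\EQ^\varepsilon_-$ and that the convolution $\countertermsymbol \star \regFR$ lands in $\EQ^\varepsilon_+$.

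First I would verify, by induction along the coradical filtration (equivalently along the loop-grading), that $\countertermsymbol(\Gamma) \in \EQ^\varepsilon_-$ for every $\Gamma \in \HQ$. For $\one$ this is the chosen normalization $\mathbf{1}_{\EQ^\varepsilon}$, and on $\operatorname{Aug}(\HQ)$ the defining recursion places the image directly into $\operatorname{Im}(\mathscr{R}) = \EQ^\varepsilon_-$. The key step is then to show $(\countertermsymbol \star \regFR)(\Gamma) \in \EQ^\varepsilon_+$ for all $\Gamma$. Writing the coproduct in the Sweedler-like form $\Delta(\Gamma) = \Gamma \otimes \one + \one \otimes \Gamma + \Delta'(\Gamma)$ with $\Delta'(\Gamma) \in \operatorname{Aug}(\HQ) \otimes \operatorname{Aug}(\HQ)$, the defining recursion of $\countertermsymbol$ rearranges into
\[
(\countertermsymbol \star \regFR)(\Gamma) \;=\; (\id - \mathscr{R})\bigl[\regFR(\Gamma) + \mu \circ (\countertermsymbol \otimes \regFR) \circ \Delta'(\Gamma)\bigr],
\]
whose right-hand side lies in $\operatorname{CoKer}(\mathscr{R}) = \EQ^\varepsilon_+$ by definition. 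The Rota--Baxter identity of weight $-1$ is what ensures that products of $\EQ^\varepsilon_-$-valued counterterms with $\regFR$-values stay compatible with this projection through the recursion, so the inductive hypothesis on lower coradical degree propagates upward.

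Once this Birkhoff-type statement is in place, the conclusion is essentially immediate: by hypothesis every element of $\EQ^\varepsilon_+ = \operatorname{CoKer}(\mathscr{R})$ is a convergent formal integral expression, so the limit $\varepsilon \to 0$ of $(\countertermsymbol \star \Phi)(\Gamma)$ exists in $\EQ$ and yields a convergent integral expression, whence $\operatorname{Im}(\Phi_\mathscr{R}) \subseteq \EQ^\varepsilon_+$ consists only of convergent integral expressions as claimed. The main obstacle is not a deep one but a careful combinatorial handling of the Rota--Baxter identity across the recursion for $\countertermsymbol$; this is the standard Connes--Kreimer / Ebrahimi-Fard--Kreimer argument carried out in the setting of the algebra $\EQ^\varepsilon$ of formal Feynman integral expressions from \defnref{defn:formal_feynman_integral_expressions}, and no additional input beyond that splitting is required.
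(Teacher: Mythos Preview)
Your proposal is correct and follows the same approach as the paper: both invoke the Connes--Kreimer algebraic Birkhoff decomposition to conclude that \(\Phi_\mathscr{R}\) lands in \(\EQ^\varepsilon_+ = \operatorname{CoKer}(\mathscr{R})\), from which the lemma is immediate. The only difference is that the paper simply cites the Birkhoff decomposition as a known result from \cite{Connes_Kreimer_0}, whereas you sketch its proof via the Rota--Baxter recursion and coradical induction; your extra detail is fine and faithful to the standard argument.
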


\begin{proof}
	The theorem about the algebraic Birkhoff decomposition, first observed in \cite{Connes_Kreimer_0}, states in this context that
	\begin{align}
		\Phi_\mathscr{R} & \, : \quad \HQ \to \EQ^\varepsilon_+
		\intertext{and}
		\countertermsymbol & \, : \quad \HQ \to \EQ^\varepsilon_- \, ,
	\end{align}
	where \(\EQ^\varepsilon_+ := \operatorname{CoKer} \left ( \mathscr{R} \right )\) and \(\EQ^\varepsilon_- := \operatorname{Im} \left ( \mathscr{R} \right )\), and thus \(\operatorname{Im} \left ( \Phi_\mathscr{R} \right )\) consists of finite integral expressions, if \(\operatorname{CoKer} \left ( \mathscr{R} \right )\) does.
\end{proof}

\enter

\begin{defn}[Proper renormalization schemes] \label{defn:proper_renormalization_scheme}
	A renormalization scheme \(\mathscr{R} \in \operatorname{End} \left ( \EQ^\varepsilon \right )\) is called proper, if both its kernel \(\operatorname{Ker} \left ( \mathscr{R} \right )\) and its cokernel \(\operatorname{CoKer} \left ( \mathscr{R} \right )\) consist only of convergent integral expressions.\footnote{We allow, as an exception, superficially divergent graphs to be in the kernel of \(\mathscr{R}\), if they would lead to an ill-defined coalgebra structure on the renormalization Hopf algebra. See \cite[Subsection 3.3]{Prinz_2} for a detailed discussion on this matter.} In particular, we demand that
	\begin{equation}
		\operatorname{Im} \left ( \Phi \circ \Omega \right ) \subseteq \operatorname{CoIm} \left ( \mathscr{R} \right ) \, ,
	\end{equation}
	i.e.\ the image of superficially divergent graphs under the Feynman rules is a subset of the coimage of a proper renormalization scheme.
\end{defn}

\enter

\begin{rem}
	\defnref{defn:proper_renormalization_scheme} is motivated by the fact that in physics we want renormalization schemes to produce finite, i.e.\ integrable, renormalized Feynman rules and furthermore preserve the locality of the theory, i.e.\ remove divergences of Feynman integrals via contributions from themselves.
\end{rem}

\section{A superficial argument} \label{sec:a_superficial_argument}

In this section we study combinatorial properties of the `superficial degree of divergence (SDD)' from \defnref{defn:sdd}. This integer,  combinatorially associated to each Feynman graph \(\Gamma \in \GQ\), provides a measure of the ultraviolet divergence of the corresponding Feynman integral. In fact, a result of Weinberg states that the ultraviolet divergence of the (formal) Feynman integral \(\Phi \left ( \Gamma \right )\) is bounded by a polynomial of degree \(\operatorname{Max}_{\one \subsetneq \gamma \subseteq \Gamma} \sdd{\gamma}\), where the maximum is considered over all non-empty 1PI subgraphs and \(\omega \colon \GQ \to \mathbb{Z}\) denotes the SDD of \(\gamma\) \cite{Weinberg}. More precisely, the corresponding Feynman integral converges if \(\sdd{\Gamma} < 0\) and \(\sdd{\gamma} < 0\) for all non-empty 1PI subgraphs \(\one \subsetneq \gamma \subset \Gamma\). We start this section by providing an alternative definition of the SDD in terms of weights of corollas in \defnref{defn:asdd} and \lemref{lem:asdd}. With this criterion on hand, we show in \thmref{thm:asdd} that the SDD of Feynman graphs with a fixed residue depends affine-linearly on their vertex-grading. Then we give in \colref{col:weights_of_corollas_and_renormalizability} an alternative characterization to the classification of QFTs into (super-/non-)renormalizability via weights of corollas. Building upon this result, we introduce the notion of a `cograph-divergent QFT' in \defnref{defn:cograph-divergent}, which is shown in \propref{prop:proj_div_graphs_coprod} to be the obstacle for the compatibility of coproduct identities with the projection to divergent Feynman graphs from \defnref{defn:projection_divergent_graphs}. Next we study the superficial compatibility of the coupling-grading and loop-grading in \propref{prop:superficial_grade_compatibility}. We complete this section by showing that (effective) Quantum General Relativity coupled to the Standard Model (QGR-SM) satisfies both, i.e.\ is cograph-divergent and has superficially compatible coupling-grading in \colref{col:qgr-sm_is_cograph-divergent} and \colref{col:qgr-sm_is_sqgsc}. The results in this section are in particular useful for the following sections, where we want to state our results not only for the renormalizable case, but include also the more involved super- and non-renormalizable cases, or even mixes thereof. Finally, this allows us to apply our results to QGR-SM.

\enter

\begin{defn}[Weights of corollas] \label{defn:asdd}
	Let \(\Q\) be a QFT and \((\RQ, \omega)\) its weighted residue set. Given a vertex residue \(v \in \RQO\), we define the weight of its corolla via
	\begin{equation} \label{eqn:pre-asdd}
		\csdd{v} \equiv \sdd{c_v} := \sdd{v} + \frac{1}{2} \sum_{e \in E \left ( v \right )} \sdd{e} \, .
	\end{equation}
\end{defn}

\enter

\begin{lem} \label{lem:asdd}
	Given the situation of \defnref{defn:asdd}, the superficial degree of divergence of a Feynman graph \(\Gamma \in \GQ\) can be equivalently calculated via\footnote{We remark that the equivalence holds only for non-trivial graphs, i.e.\ graphs with at least one vertex.}
	\begin{equation}
		\omega \, : \quad \GQ \to \mathbb{Z} \, , \quad \Gamma \mapsto d \bettii{\Gamma} + \sum_{v \in V \left ( \Gamma \right )} \csdd{v} - \frac{1}{2} \sum_{e \in E \left ( \res{\Gamma} \right )} \omega \left ( e \right ) \, . \label{eqn:prealternative_superficial_degree_of_divergence}
	\end{equation}
\end{lem}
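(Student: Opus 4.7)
The plan is to unfold the definition of $\varpi(v)$ inside the right-hand side of \eqnref{eqn:prealternative_superficial_degree_of_divergence} and reduce the identity to a double-counting statement about edges incident to vertices of $\Gamma$. Concretely, I would substitute
\begin{equation}
\varpi(v) = \omega(v) + \frac{1}{2}\sum_{e \in E(v)} \omega(e)
\end{equation}
from \defnref{defn:asdd} into the proposed formula and compare with the original definition \eqnref{eqn:sdd} of \defnref{defn:sdd}. After cancelling the $d\lambda(\Gamma)$ and $\sum_{v} \omega(v)$ contributions, the claim reduces to the identity
\begin{equation}
\sum_{e \in E(\Gamma)} \omega(e) = \frac{1}{2}\sum_{v \in V(\Gamma)} \sum_{e \in E(v)} \omega(e) - \frac{1}{2}\sum_{e \in E(\res{\Gamma})} \omega(e) \, ,
\end{equation}
which is a purely combinatorial statement about the incidence between vertices, internal and external edges of $\Gamma$.

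To establish this identity, I would invoke the \emph{handshake-type} decomposition of the half-edge set $H(\Gamma)$ from \defnref{defn:halfedges_corollas_external-vertex-residue-set}. Every internal edge $e \in E(\Gamma)$ consists of exactly two half-edges, both attached to (internal) vertices of $\Gamma$; hence it appears twice in the double sum $\sum_{v}\sum_{e \in E(v)} \omega(e)$. By contrast, every external edge $e \in E_{\textup{Ext}}(\Gamma)$ has exactly one half-edge attached to an internal vertex (the other being attached to $*_p$ or $*_f$, which do not belong to $V(\Gamma)$); hence it appears exactly once in the same double sum. Since the external edges of $\Gamma$ are precisely the edges of its residue amplitude, i.e.\ $E_{\textup{Ext}}(\Gamma) \cong E(\res{\Gamma})$ as weighted multisets over $\RQI$, we obtain
\begin{equation}
\sum_{v \in V(\Gamma)} \sum_{e \in E(v)} \omega(e) = 2 \sum_{e \in E(\Gamma)} \omega(e) + \sum_{e \in E(\res{\Gamma})} \omega(e) \, ,
\end{equation}
from which the desired identity follows immediately by rearrangement.

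The only subtlety worth flagging is the bookkeeping of edges with both endpoints at the same vertex (tadpoles, or self-loops at a single internal vertex): such an edge contributes two half-edges at the same $v$, so it still contributes twice to the double sum, in agreement with the counting above; the argument thus goes through unchanged. The hypothesis that $\Gamma$ has at least one vertex is only needed to ensure that $V(\Gamma) \neq \emptyset$ so that the sum $\sum_{v \in V(\Gamma)} \varpi(v)$ is well-defined and genuinely accounts for the external-leg contribution that is subtracted off at the end. No step requires deep machinery, and I do not anticipate a real obstacle beyond writing out the half-edge count cleanly.
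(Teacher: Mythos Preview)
Your proposal is correct and follows exactly the approach the paper intends: the paper's own proof consists solely of the remark ``This is straight forward, starting with \eqnref{eqn:sdd} from \defnref{defn:sdd}'', and what you have written is precisely the straightforward half-edge double-counting that this sentence abbreviates. There is nothing to add or change.
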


\begin{proof}
	This follows directly from the combination of \eqnref{eqn:sdd} from \defnref{defn:sdd} and \eqnref{eqn:pre-asdd} from \defnref{defn:asdd}.
\end{proof}

\enter

\begin{thm}[Superficial degree of divergence via residue and vertex-grading] \label{thm:asdd}
	Given the situation of \lemref{lem:asdd}, the superficial degree of divergence of a Feynman graph \(\Gamma \in \GQ\) can be decomposed as follows:\footnote{The function \(\sigma\) can be expressed equivalently as a linear functional in the coarser gradings, if they are superficially compatible, cf.\ \defnref{defn:superficially_compatible_grading}.}
	\begin{subequations}
	\begin{align}
		\sdd{\Gamma} & \equiv \rsdd{\Gamma} + \ssdd{\Gamma} \, , \label{eqn:asdd}
		\intertext{where \(\rsdd{\Gamma}\) depends only on \(\res{\Gamma}\) and \(\ssdd{\Gamma}\) depends only on \(\vtxgrd{\Gamma}\). In particular, we have:}
		\rsdd{\Gamma} & := \begin{cases} \omega \left ( \res{\Gamma} \right ) & \text{if \(\res{\Gamma} \in \RQO\)} \\ d - \dfrac{1}{2} \sum_{e \in E \left ( \res{\Gamma} \right )} \omega \left ( e \right ) & \text{else, i.e.\ \(\res{\Gamma} \in \left ( \AQ \setminus \RQO \right )\)} \end{cases} \label{eqn:rsdd}
		\intertext{and}
		\ssdd{\Gamma} & := \sum_{i = 1}^{\mathfrak{v}_\Q} \left ( d \left ( \frac{1}{2} \val{v_i} - 1 \right ) + \csdd{v_i} \right ) \left ( \vtxgrd{\Gamma} \right )_i \label{eqn:ssdd}
	\end{align}
	\end{subequations}
	Notably, \(\ssdd{\Gamma}\) is linear in \(\vtxgrd{\Gamma}\) and thus \(\sdd{\Gamma}\) is affine linear in \(\vtxgrd{\Gamma}\).
\end{thm}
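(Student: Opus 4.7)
My plan is to eliminate the only remaining graph-specific datum in the formula of \lemref{lem:asdd} --- the loop number $\bettii{\Gamma}$ --- via Euler's identity and the handshake relation, after which the statement reduces to linear algebra in $\intres{\Gamma}$. For a connected 1PI Feynman graph one has $\bettii{\Gamma} = \#E(\Gamma) - \#V(\Gamma) + 1$, and combining this with $\sum_{v \in V(\Gamma)} \val{v} = 2\,\#E(\Gamma) + \val{\res{\Gamma}}$ yields
\begin{equation*}
\bettii{\Gamma} = \sum_{v \in V(\Gamma)} \bigl(\tfrac{1}{2}\val{v} - 1\bigr) - \tfrac{1}{2}\val{\res{\Gamma}} + 1,
\end{equation*}
expressing $\bettii{\Gamma}$ purely in terms of vertex valences and the external-leg valence of the residue.

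Substituting into \lemref{lem:asdd} and collapsing every sum over $V(\Gamma)$ into a pairing $\sum_{v \in V(\Gamma)} f(v) = \sum_i f(v_i)(\intres{\Gamma})_i$ rewrites $\sdd{\Gamma}$ as the inner product of $\intres{\Gamma}$ with the coefficient vector $d\bigl(\tfrac{1}{2}\val{v_i} - 1\bigr) + \csdd{v_i}$, plus a remainder depending only on $\res{\Gamma}$, $\val{\res{\Gamma}}$ and $\sum_{e \in E(\res{\Gamma})}\omega(e)$. The decomposition $\intres{\Gamma} = \resgrd{\Gamma} + \extres{\Gamma}$ from \defnref{defn:connectedness_gradings_renormalization_hopf_algebra} then splits the inner product into a $\resgrd$-part --- which is exactly \eqnref{eqn:ssdd} and manifestly linear in $\resgrd{\Gamma}$ --- and an $\extres$-part which is residue-only by \lemref{lem:v-h-e-sets_and_r-r}.

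It then remains to verify that the $\extres$-part together with the residue-only remainder reproduces the two-branch expression for $\rsdd{\Gamma}$ in \eqnref{eqn:rsdd}. When $\res{\Gamma} \in \RQO$, $\extres{\Gamma}$ is the standard basis vector at $\res{\Gamma}$, so the $\extres$-contribution equals $d\bigl(\tfrac{1}{2}\val{\res{\Gamma}} - 1\bigr) + \csdd{\res{\Gamma}}$; expanding $\csdd{\res{\Gamma}}$ via the defining identity $\csdd{v} = \omega(v) + \tfrac{1}{2}\sum_{e \in E(v)}\omega(e)$ lets the half-valence terms cancel against the residue-only remainder, leaving exactly $\omega(\res{\Gamma})$. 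When $\res{\Gamma} \in \AQ \setminus \RQO$ the vector $\extres{\Gamma}$ vanishes and only the residue-only remainder survives, matching the second branch of \eqnref{eqn:rsdd}. I expect the main obstacle to be precisely this telescoping in the vertex-residue case --- the $d$-linear piece coming from Euler's identity has to cancel cleanly against the half-valence inside the corolla weight of the residue vertex --- but once tracked it follows directly from the definition of $\csdd{\cdot}$, and the affine linearity of $\sdd{\Gamma}$ in $\resgrd{\Gamma}$ is then immediate from the explicit form of $\ssdd{\Gamma}$.
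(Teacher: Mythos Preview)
Your proposal is correct and follows essentially the same route as the paper: both eliminate the loop number via the Euler characteristic and the handshake relation, then rewrite the vertex sums as pairings with the residue multi-index and split off the residue-only contribution. The only organizational difference is that you first pair against \(\intres{\Gamma}\) and then split via \(\intres{\Gamma} = \resgrd{\Gamma} + \extres{\Gamma}\), whereas the paper passes to \(\resgrd{\Gamma}\) directly using the indicator \(\delta_{\res{\Gamma}\in\RQO}\); your bookkeeping is arguably cleaner for the two-case verification of \(\rsdd{\Gamma}\), but the underlying computation is identical.
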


\begin{proof}
	We start with \eqnref{eqn:prealternative_superficial_degree_of_divergence} from \defnref{defn:asdd}: First we rewrite the loop number using the Euler characteristic\footnote{For the Euler characteristic, \eqnref{eqn:euler_characteristic}, we need to either ignore the external (half-)edges or assume that they are attached to external vertices, as in \defnref{defn:feynman_graphs}, and adjust the loop-number accordingly.}
	\begin{equation}
		\bettii{\Gamma} = \bettio{\Gamma} - \# V \left ( \Gamma \right ) + \# E \left ( \Gamma \right ) \, , \label{eqn:euler_characteristic}
	\end{equation}
	where \(\bettio{\Gamma} = 1\), as \(\Gamma \in \GQ\) is connected. Then we express \(\# V \left ( \Gamma \right )\) in terms of \(\vtxgrd{\Gamma}\) and \(\res{\Gamma}\) via
	\begin{equation} \label{eqn:vertex-set_and_grading}
	\begin{split}
		\# V \left ( \Gamma \right ) & = \sum_{i = 1}^{\mathfrak{v}_\Q} \left ( \intvtx{\Gamma} \right )_i\\
		& = \sum_{i = 1}^{\mathfrak{v}_\Q} \left ( \left ( \vtxgrd{\Gamma} \right )_i + \left ( \extvtx{\Gamma} \right )_i \right )\\
		& = \sum_{i = 1}^{\mathfrak{v}_\Q} \left ( \left ( \vtxgrd{\Gamma} \right )_i \right ) + \delta_{\res{\Gamma} \in \RQO} \, ,
	\end{split}
	\end{equation}
	where in the last equality we have used that \(\Gamma \in \GQ\) is connected by setting
	\begin{equation}
		\delta_{\res{\Gamma} \in \RQO} = \begin{cases} 1 & \text{if \(\res{\Gamma} \in \RQO\)} \\ 0 & \text{else, i.e.\ \(\res{\Gamma} \in \left ( \AQ \setminus \RQO \right )\)} \end{cases} \, .
	\end{equation}
	Furthermore, we express \(\# E \left ( \Gamma \right )\) in terms of \(\vtxgrd{\Gamma}\) and the valences of the corresponding vertices via
	\begin{equation} \label{eqn:edge-set_and_grading}
	\begin{split}
		\# E \left ( \Gamma \right ) & = \frac{1}{2} \sum_{i = 1}^{\mathfrak{v}_\Q} \val{v_i} \left ( \intvtx{\Gamma} \right )_i - \frac{1}{2} \sum_{i = 1}^{\mathfrak{v}_\Q} \val{v_i} \left ( \extvtx{\Gamma} \right )_i\\
		& = \frac{1}{2} \sum_{i = 1}^{\mathfrak{v}_\Q} \val{v_i} \left ( \vtxgrd{\Gamma} \right )_i \, .
	\end{split}
	\end{equation}
	Thus, combining \eqnssaref{eqn:euler_characteristic}{eqn:vertex-set_and_grading}{eqn:edge-set_and_grading}, we obtain
	\begin{equation}
		\bettii{\Gamma} = 1 - \delta_{\res{\Gamma} \in \RQO} + \frac{1}{2} \sum_{i = 1}^{\mathfrak{v}_\Q} \left ( \val{v_i} - 2 \right ) \left ( \vtxgrd{\Gamma} \right )_i \, . \label{eqn:bettii_residue_vertex-grading}
	\end{equation}
	We proceed by rewriting
	\begin{equation}
		\sum_{v \in V \left ( \Gamma \right )} \csdd{v} = \sum_{i = 1}^{\mathfrak{v}_\Q} \csdd{v_i} \left ( \vtxgrd{\Gamma} \right )_i \, .
	\end{equation}
	Finally, plugging the above results into \eqnref{eqn:prealternative_superficial_degree_of_divergence} from \defnref{defn:asdd} yields the claimed result.
\end{proof}

\enter

\begin{col}[Weights of corollas and renormalizability] \label{col:weights_of_corollas_and_renormalizability}
	Given the situation of \thmref{thm:asdd} and a vertex residue \(v \in \RQO\). Then its corolla \(c_v\) is renormalizable if its weight is
	\begin{equation}
		\csdd{v} \equiv d \left ( 1 - \frac{1}{2} \val{v} \right ) \, , \label{eqn:weight_corolla}
	\end{equation}
	non-renormalizable if its weight is bigger and super-renormalizable if its weight is smaller. In particular, the QFT \(\Q\) is renormalizable, non-renormalizable or super-renormalizable if all of its corollas are.
\end{col}

\begin{proof}
	Before presenting the actual argument, we recall that a QFT \(\Q\) is called renormalizable, if, for a fixed residue \(r\), the superficial degree of divergence is independent of the grading, non-renormalizable if the superficial degree of divergence increases with increasing grading and super-renormalizable if the superficial degree of divergence decreases with increasing grading. Using \eqnref{eqn:ssdd} from \thmref{thm:asdd}, we directly obtain the claimed bound
	\begin{equation}
	\csdd{v} = d \left ( 1 - \frac{1}{2} \val{v} \right ) \, ,
	\end{equation}
	as for this value the superficial degree of divergence of a Feynman graph is independent of the corolla \(c_v\), it is positively affected if its weight is bigger and it is negatively affected if its weight is smaller.
\end{proof}

\enter

\begin{col} \label{col:ssdd_decomposition}
	Given the situation of \thmref{thm:asdd}, the dependence of \(\ssdd{\Gamma}\) on \(\vtxgrd{\Gamma}\) can be furthermore refined by decomposing
	\begin{equation}
		\ssdd{\Gamma} \equiv \ssddn{\Gamma} + \ssddr{\Gamma} + \ssdds{\Gamma} \label{eqn:ssdd_decomposition}
	\end{equation}
	where
	\begin{subequations}
	\begin{align}
		\ssddn{\Gamma} & \equiv \left \vert \ssddn{\Gamma} \right \vert \, , \\
		\ssddr{\Gamma} & \equiv 0
		\intertext{and}
		\ssdds{\Gamma} & \equiv - \left \vert \ssdds{\Gamma} \right \vert \, .
	\end{align}
	\end{subequations}
	In particular, \(\ssddn{\Gamma}\) depends only on the non-renormalizable corollas, \(\ssddr{\Gamma}\) depends only on the renormalizable corollas and \(\ssdds{\Gamma}\) depends only on the super-renormalizable corollas.
\end{col}

\begin{proof}
	This follows directly from \thmref{thm:asdd} and \colref{col:weights_of_corollas_and_renormalizability}.
\end{proof}

\enter

\begin{defn}[Cograph-divergent QFTs] \label{defn:cograph-divergent}
	Let \(\Q\) be a QFT with residue set \(\RQ\) and weighted Feynman graph set \((\GQ, \omega)\). We call \(\Q\) cograph-divergent, if for each superficially divergent Feynman graph \(\Gamma \in \overline{\GQ}\) and its superficially divergent subgraphs \(\gamma \in \DQ{\Gamma}\), the corresponding cographs \(\Gamma / \gamma\) are also all superficially divergent, i.e.\ fulfill \(\sdd{\Gamma / \gamma} \geq 0\).
\end{defn}

\enter

\begin{rem}
	\defnref{defn:cograph-divergent} is trivially satisfied for super-renormalizable and renormalizable QFTs. However, it is the obstacle for the compatibility of coproduct identities with the projection to divergent graphs for non-renormalizable QFTs, as will be shown in \propref{prop:proj_div_graphs_coprod}. Furthermore, we show in \colref{col:qgr-sm_is_cograph-divergent} that (effective) Quantum General Relativity coupled to the Standard Model is cograph-divergent.
\end{rem}

\enter

\begin{prop} \label{prop:proj_div_graphs_coprod}
	Let \(\Q\) be a QFT with (associated) renormalization Hopf algebra \(\HQ\). Then coproduct identities are compatible with the projection to divergent graphs from \defnref{defn:projection_divergent_graphs} if and only if \(\Q\) is cograph-divergent: More precisely, given the identity (we assume \(\overline{\mathfrak{G}}_\mathbf{V} \neq 0\))\footnote{We remark that \(\mathfrak{h}_{\mathbf{V} - \mathbf{v}} \equiv \overline{\mathfrak{h}_{\mathbf{V} - \mathbf{v}}}\) by the definition of the coproduct in (associated) renormalization Hopf algebras, cf.\ \defnref{defn:renormalization_hopf_algebra}.}
	\begin{align}
		\D{\mathfrak{G}_\mathbf{V}} & = \sum_\mathbf{v} \mathfrak{h}_{\mathbf{V} - \mathbf{v}} \otimes \mathfrak{H}_\mathbf{v}
		\intertext{then this implies}
		\Delta \big ( \overline{\mathfrak{G}}_\mathbf{V} \big ) & = \sum_\mathbf{v} \mathfrak{h}_{\mathbf{V} - \mathbf{v}} \otimes \overline{\mathfrak{H}}_\mathbf{v}\label{eqn:div-coprod}
	\end{align}
	if and only if \(\Q\) is cograph-divergent.
\end{prop}

\begin{proof}
	Since the coproduct is additive, we can without loss of generality assume that \(\mathfrak{G}_\mathbf{V}\) consists of only one summand. Thus,
	\begin{equation}
		\mathfrak{G}_\mathbf{V} \equiv \alpha \prod_i \Gamma_i
	\end{equation}
	with \(\alpha \in \mathbb{Q}\) and \(\Gamma_i \in \GQ\) are 1PI Feynman graphs whose vertex-grading adds up to \(\mathbf{V}\). Furthermore, we have either \(\overline{\mathfrak{G}}_\mathbf{V} = \mathfrak{G}_\mathbf{V}\) or \(\overline{\mathfrak{G}}_\mathbf{V} = 0\) by definition, cf.\ \defnref{defn:projection_divergent_graphs}, where the latter case is excluded by assumption. Using the linearity and multiplicativity of the coproduct,
	\begin{equation}
		\D{\alpha \prod_i \Gamma_i} = \alpha \prod_i \D{\Gamma_i} \, ,
	\end{equation}
	we can reduce our calculation to the 1PI Feynman graphs \(\Gamma_i\) via
	\begin{equation}
		\D{\Gamma_i} = \sum_{\gamma_i \in \DQ{\Gamma_i}} \gamma_i \otimes \Gamma_i / \gamma_i \, .
	\end{equation}
	If \(\Q\) is cograph-divergent, we obtain \(\overline{\Gamma_i / \gamma_i} \equiv \Gamma_i / \gamma_i\), which concludes the proof.
\end{proof}

\enter

\begin{lem} \label{lem:cograph-divergence_criterion}
	Let \(\Q\) be a QFT consisting only of non-renormalizable and renormalizable corollas. Then \(\Q\) is cograph-divergent if \(\sdd{v} \geq 0\) for all vertex-residues \(v \in \RQO\).
\end{lem}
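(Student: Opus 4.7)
The plan is to verify directly that $\sdd{\Gamma/\gamma} \geq 0$ for every superficially divergent $\Gamma \in \overline{\GQ}$ and every $\gamma \in \DQ{\Gamma}$, by decomposing $\sdd{\Gamma/\gamma}$ via Theorem \ref{thm:asdd} and arguing that both the residue part and the residue-grading part of the cograph remain non-negative. The two invariance/additivity principles I will exploit are that shrinking preserves the external leg structure, so $\res{\Gamma/\gamma} = \res{\Gamma}$, and that the residue-grading is additive under shrinking, so $\resgrd{\Gamma/\gamma} = \resgrd{\Gamma} - \resgrd{\gamma}$.

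For the residue part, invariance of $\res$ gives $\rsdd{\Gamma/\gamma} = \rsdd{\Gamma}$. If $\res{\Gamma} \in \RQO$, then Theorem \ref{thm:asdd} yields $\rsdd{\Gamma} = \omega(\res{\Gamma}) \geq 0$ directly from the standing hypothesis $\sdd{v} \geq 0$ for vertex residues. For $\res{\Gamma} \in \AQ \setminus \RQO$, the explicit formula $\rsdd{\Gamma} = d - \tfrac{1}{2}\sum_{e \in E(\res{\Gamma})} \omega(e)$ is non-negative under the standing sign conventions for propagator and amplitude weights.

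For the grading part, I would first derive the additivity $\resgrd{\Gamma/\gamma} = \resgrd{\Gamma} - \resgrd{\gamma}$ from the definitions: contracting a connected component $\gamma_i$ of $\gamma$ to a single vertex of residue $\res{\gamma_i}$ changes $\intres$ by $-\intres{\gamma_i} + \extres{\gamma_i}$ while leaving $\extres$ fixed, and summing over components gives the stated identity. Since $\Gamma/\gamma$ is itself a Feynman graph, its residue-grading $\resgrd{\Gamma/\gamma}$ is componentwise non-negative. Moreover, Corollary \ref{cor:ssdd_decomposition} together with the hypothesis that $\Q$ contains only renormalizable and non-renormalizable corollas forces $\ssdds \equiv 0$, so every coefficient $d(\tfrac{1}{2}\val{v_j} - 1) + \varpi(v_j)$ of the linear functional $\ssdd$ is non-negative. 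Combining these two observations gives $\ssdd{\Gamma/\gamma} \geq 0$, and therefore $\sdd{\Gamma/\gamma} = \rsdd{\Gamma} + \ssdd{\Gamma/\gamma} \geq 0$. The only real bookkeeping is the additivity of the residue-grading under shrinking; everything else follows directly from Theorem \ref{thm:asdd} and Corollary \ref{cor:ssdd_decomposition}.
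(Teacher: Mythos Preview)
Your approach matches the paper's terse proof, invoking \thmref{thm:asdd} and \colref{cor:ssdd_decomposition} to split \(\omega(\Gamma/\gamma)\) into its residue and grading parts and argue that each is non-negative. However, there is a gap in your treatment of the grading part. You claim that ``since \(\Gamma/\gamma\) is itself a Feynman graph, its residue-grading \(\resgrd{\Gamma/\gamma}\) is componentwise non-negative,'' but this is false in general: when \(\res{\Gamma/\gamma} = v_j \in \RQO\), one has \((\resgrd{\Gamma/\gamma})_j = (\intres{\Gamma/\gamma})_j - 1\), which equals \(-1\) whenever \(\Gamma/\gamma\) contains no vertex of type \(v_j\). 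A concrete instance: in a theory with a 3-valent vertex \(v_1\) and a 5-valent vertex \(v_2\) on the same field, a single \(v_2\) equipped with one self-loop has residue \(v_1\) and \(\resgrd{\cdot} = (-1,1)\). If \(v_1\) is non-renormalizable (so its coefficient in \(\sigma\) is strictly positive), your bound \(\ssdd{\Gamma/\gamma}\geq 0\) fails outright for this graph.

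Note also that your argument never uses the hypothesis \(\sdd{\Gamma}\geq 0\); if correct as written, it would show that \emph{every} 1PI Feynman graph in such a theory is superficially divergent, which is stronger than cograph-divergence. That stronger statement may in fact be what is intended (and it does hold, e.g., in pure QGR, where one computes \(\omega = 2+2\lambda\) for vertex-residue graphs), but establishing it requires more than the componentwise sign of \(\resgrd{\cdot}\). One route to repair the argument is to combine the explicit coefficients \(c_i = d(\tfrac{1}{2}\val{v_i}-1)+\varpi(v_i)\) with the Euler-characteristic constraint \(\sum_i n_i(\val{v_i}-2) = \val{v_j}-2+2\lambda\) on the internal vertex multiplicities \(n_i\), rather than relying on a sign of the residue-grading alone.
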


\begin{proof}
	This is a direct consequence of \eqnref{eqn:asdd} from \thmref{thm:asdd} together with \eqnref{eqn:ssdd_decomposition} from \colref{col:ssdd_decomposition}: We obtain \(\sdd{\Gamma} \geq 0\) for all Feynman graphs \(\Gamma \in \GQ\) with \(\res{\Gamma} \in \RQ\), if \(\rsdd{r} \geq 0\) for all residues \(r \in \RQ\). This follows from the fact that the contribution from \(\sigma\) is non-negative by the assumption that \(\Q\) consists only of non-renormalizable and renormalizable corollas. Furthermore, as the weight of an edge is negative in our conventions, \(\rsdd{e} \geq 0\) is trivially satisfied for all edge-residues \(e \in \RQI\). Thus, the property \(\rsdd{r} \geq 0\) reduces to \(\rsdd{v} \equiv \sdd{v} \geq 0\) for all vertex residues \(v \in \RQO\). As the residue of a Feynman graph \(\Gamma\) agrees with any of its cographs, i.e.\ \(\res{\Gamma} = \res{\Gamma / \gamma}\) for any subgraph \(\gamma \subset \Gamma\), we obtain the claimed statement.
\end{proof}

\enter

\begin{col} \label{col:qgr-sm_is_cograph-divergent}
	(Effective) Quantum General Relativity coupled to the Standard Model in 4 dimensions of spacetime is cograph-divergent.
\end{col}

\begin{proof}
	(Effective) Quantum General Relativity (QGR) coupled to the Standard Model (SM) consists of a non-renormalizable (QGR) and a renormalizable (SM) sub-QFT. Furthermore, its vertices \(v \in \mathcal{R}^{[0]}_\text{QGR-SM}\) are either independent, linear dependent or quadratic dependent on momenta, and thus satisfy \(\sdd{v} \geq 0\). Hence \lemref{lem:cograph-divergence_criterion} applies, which concludes the proof.
\end{proof}

\enter

\begin{rem} \label{rem:proj_div_graphs_non-cogr-div-qft}
	A direct consequence of \thmref{thm:asdd} is that, for a fixed residue \(r \in \RQ\), the zero locus of the superficial degree of divergence \(\omega\) is a hyperplane in \(\ZvQ\). More precisely, we define the function
	\begin{equation}
		\omega^r \, : \quad \ZvQ \to \mathbb{Z} \, , \quad \mathbf{v} \mapsto \rho \left ( r \right ) + \sigma \left ( \mathbf{v} \right )
	\end{equation}
	and set \(\mathscr{H}^r_0 := \left ( \omega^r \right )^{-1} \left ( 0 \right ) \subset \ZvQ\). Accordingly, we decompose
	\begin{equation}
		\ZvQ \cong \mathscr{H}^r_+ \sqcup \mathscr{H}^r_0 \sqcup \mathscr{H}^r_-
	\end{equation}
	as sets, where \(\mathscr{H}^r_\pm\) are defined such that \(\eval[1]{\omega^r}_{\mathscr{H}^r_+} > 0\) and \(\eval[1]{\omega^r}_{\mathscr{H}^r_-} < 0\). We apply this knowledge to \propref{prop:proj_div_graphs_coprod}: Suppose \(\Q\) is a non-cograph-divergent QFT and consider for simplicity that \(\mathfrak{G}_\mathbf{V}\) is given as the sum over 1PI Feynman graphs with a fixed residue \(r\). Then, the coproduct identity
	\begin{align}
		\D{\mathfrak{G}_\mathbf{V}} & = \sum_{\mathbf{v} \in \ZvQ} \mathfrak{h}_{\mathbf{V} - \mathbf{v}} \otimes \mathfrak{H}_\mathbf{v}
		\intertext{implies only the following splitted identity}
		\Delta \big ( \overline{\mathfrak{G}}_\mathbf{V} \big ) & = \sum_{\mathbf{v} \in \left ( \mathscr{H}^r_+ \sqcup \mathscr{H}^r_0 \right )} \mathfrak{h}_{\mathbf{V} - \mathbf{v}} \otimes \overline{\mathfrak{H}}_\mathbf{v} + \sum_{\mathbf{v}^\prime \in \mathscr{H}^r_-} \mathfrak{h}_{\mathbf{V} - \mathbf{v}^\prime} \otimes \mathfrak{H}_{\mathbf{v}^\prime} \, ,
	\end{align}
	where the elements \(\mathfrak{H}_{\mathbf{v}^\prime}\) consist precisely of the convergent cographs.
\end{rem}

\enter

\begin{rem}
	Given the situation of \propref{prop:proj_div_graphs_coprod} and \remref{rem:proj_div_graphs_non-cogr-div-qft}, analogous statements also hold in the case of the antipode due to \lemref{lem:coproduct_and_antipode_identities}.
\end{rem}

\enter

\begin{prop}[Superficial grade compatibility] \label{prop:superficial_grade_compatibility}
	Given the situation of \defnref{defn:superficially_compatible_grading}, the coupling-grading is superficially compatible if and only if
	\begin{equation} \label{eqn:coupling-grading_superficial-compatibility}
		m \csdd{v} = n \csdd{w}
	\end{equation}
	holds for all \(v, w \in \RQO\) and \(m, n \in \mathbb{N}_+\) with \(\theta \left ( v \right )^m = \theta \left ( w \right )^n\). Furthermore, the loop-grading is superficially compatible if and only if
	\begin{equation} \label{eqn:loop-grading_superficial-compatibility}
		\frac{1}{\left ( \val{v} - 2 \right )} \csdd{v} = \frac{1}{\left ( \val{w} - 2 \right )} \csdd{w}
	\end{equation}
	holds for all \(v, w \in \RQO\), where \(\val{v}\) denotes the valence of \(v\).
\end{prop}

\begin{proof}
	We start with the superficial compatibility of the coupling-grading: Two vertex residues \(v, w \in \RQO\) contribute to the same coupling-grading, if and only if there exist natural numbers \(m, n \in \mathbb{N}_+\) such that \(\theta \left ( v \right )^m = \theta \left ( w \right )^n\). Thus, \eqnref{eqn:coupling-grading_superficial-compatibility} is the condition for the superficial compatibility of the coupling-grading and we proceed to the superficial-compatibility of the loop-grading: Loop-grading is only sensible if just one coupling constant is present. In this case, different vertex residues might be scaled via different powers of the same coupling constant. Applying the previous result, loop-grading is superficially compatible if these powers depend only on the valences of the vertices, i.e.\ two vertex residues \(v, w \in \RQO\) are considered equivalent, if and only if \(\theta \left ( v \right )^{\val{w} - 2} = \theta \left ( w \right )^{\val{v} - 2}\). Thus, \eqnref{eqn:loop-grading_superficial-compatibility} is the condition for the superficial compatibility of the loop-grading, which concludes the proof.
\end{proof}

\enter

\begin{rem}
	Having a superficially compatible coupling-grading is a necessary condition for the validity of quantum gauge symmetries on the level of Feynman rules --- however it is not sufficient. This will be discussed in \sectionref{sec:quantum_gauge_symmetries_and_renormalized_feynman_rules}.
\end{rem}

\enter

\begin{col} \label{col:qgr-sm_is_sqgsc}
	(Effective) Quantum General Relativity coupled to the Standard Model has a superficially compatible coupling-grading.
\end{col}

\begin{proof}
	We start by considering the pure gravitational part, i.e.\ gravitons and graviton-ghosts: The Feynman rules of (effective) Quantum General Relativity (QGR) are such that each vertex \(v \in \mathcal{R}_\text{QGR}^{[0]}\) has weight \(\sdd{v} = 2\) and each edge \(e \in \mathcal{R}_\text{QGR}^{[1]}\) has weight \(\sdd{e} = -2\), as the corresponding Feynman rules are quadratic and inverse quadratic in momenta, respectively. Thus, the corolla-weight of a vertex \(v \in \mathcal{R}_\text{QGR}^{[0]}\) is
	\begin{equation}
		\csdd{v} = 2 - \val{v} \, .
	\end{equation}
	Applying \eqnref{eqn:loop-grading_superficial-compatibility} from \propref{prop:superficial_grade_compatibility} yields
	\begin{equation}
	\begin{split}
		\frac{1}{\left ( \val{v} - 2 \right )} \csdd{v} & = \frac{2 - \val{v}}{\val{v} - 2}\\
		& \equiv -1 \, ,
	\end{split}
	\end{equation}
	which shows that QGR has even a superficially compatible loop-grading. Furthermore, the pure Standard Model (SM) part is renormalizable. Thus, applying \colref{col:weights_of_corollas_and_renormalizability}, the corolla-weight of a vertex \(v \in \mathcal{R}_\text{SM}^{[0]}\) is
	\begin{equation}
	\begin{split}
		\frac{1}{\left ( \val{v} - 2 \right )} \csdd{v} & = \frac{d \left ( 1 - \textfrac{1}{2} \val{v} \right )}{\left ( \val{v} - 2 \right )}\\
		& \equiv - \frac{d}{2} \, ,
	\end{split}
	\end{equation}
	which shows that also the SM has even a superficially compatible loop-grading. Finally, we consider the mixed part, i.e.\ SM residues with a positive number of gravitons attached to it.\footnote{We remark that this includes also edges from the Standard Model, as they become vertices when attaching graviton half-edges to them.} It follows from the corresponding Feynman rules that the weights of these corollas depend only on the SM residue and are thus independent of the number of gravitons attached to them. Conversely, increasing the number of gravitons of such a vertex by gluing a three-valent graviton tree (consisting of a vertex and a propagator) also leaves the weight of the corolla unchanged (as the net difference is \(2 - 2 = 0\)), which finishes the proof.
\end{proof}

\section{Coproduct and antipode identities} \label{sec:coproduct_and_antipode_identities}

In this section we state and generalize coproduct identities, known in the literature for the case of renormalizable QFTs with a single coupling constant \cite{vSuijlekom_QCD,vSuijlekom_BV,Yeats_PhD,Borinsky_Feyngen}. We reprove these identities and generalize them to cover the more involved cases of super- and non-renormalizable QFTs and QFTs with several vertex residues. Since coproduct identities imply recursive antipode identities by \lemref{lem:coproduct_and_antipode_identities}, we will in the following only discuss the coproduct cases.

\enter

\begin{lem}[Coproduct and antipode identities] \label{lem:coproduct_and_antipode_identities}
	Let \(\Q\) be a QFT with (associated) renormalization Hopf algebra \(\HQ\) and grading (multi-)indices \(\mathbf{G}\) and \(\mathbf{g}\). Then coproduct identities are equivalent to recursive antipode identities as follows: Given the identity
	\begin{align}
		\D{\mathfrak{G}} & = \sum_\mathbf{g} \mathfrak{h}_\mathbf{g} \otimes \mathfrak{H}_\mathbf{g} \, ,
		\intertext{then this is equivalent to the recursive identity}
		\antipode{\mathfrak{G}} & = - \sum_{\mathbf{g}} \antipode{\mathfrak{h}_\mathbf{g}} \mathscr{A} \left ( \mathfrak{H}_\mathbf{g} \right ) \, ,
	\end{align}
	where \(\mathscr{A}\) is the projector onto the augmentation ideal, cf.\ \defnref{defn:augmentation_ideal}.
\end{lem}

\begin{proof}
	This follows immediately from the recursive definition of the antipode on generators \(\Gamma \in \HQ\) via
	\begin{equation}
	\begin{split}
		\antipode{\Gamma} & := - \sum_{\gamma \in \DQ{\Gamma}} \antipode{\gamma} \mathscr{A} \left ( \Gamma / \gamma \right ) \\
		& \phantom{:} \equiv - \left ( S \star \mathscr{A} \right ) \left ( \Gamma \right ) \, ,
	\end{split}
	\end{equation}
	which is then linearly and multiplicatively extended to all of \(\HQ\), such that
	\begin{equation}
		S \star \operatorname{Id}_{\HQ} \equiv \one \circ \coone \equiv \operatorname{Id}_{\HQ} \star S
	\end{equation}
	holds, where \(\operatorname{Id}_{\HQ} \in \operatorname{End} \left ( \HQ \right )\) is the identity-endomorphism of \(\HQ\).\footnote{In particular, \(S\) is the \(\star\)-inverse to the identity \(\operatorname{Id}_{\HQ}\) and \(\one \circ \coone\) the \(\star\)-identity.}
\end{proof}

\enter

\begin{prop}[Coproduct identities for (divergent/restricted) combinatorial Green's functions\footnote{This is a direct generalization of \cite[Lemma 4.6]{Yeats_PhD}, \cite[Proposition 16]{vSuijlekom_QCD} and \cite[Theorem 1]{Borinsky_Feyngen} to super- and non-renormalizable theories, theories with several vertex residues and such with longitudinal and transversal degrees of freedom.}] \label{prop:coproduct_greensfunctions}
	Let \(\Q\) be a QFT, \(\HQ\) its (associated) renormalization Hopf algebra, \(r \in \RQ\) a residue and \(\mathbf{V} \in \ZvQ\) a vertex-grading multi-index. Then the following identities hold:
	{
	\allowdisplaybreaks
	\begin{subequations}
	\begin{align}
		\Delta \left ( \combgreen^r \right ) & = \sum_{\mathbf{v} \in \ZvQ} \overline{\combgreen}^r \overline{\combcharge}^\mathbf{v} \otimes \rescombgreen^r_{\mathbf{v}} \, , \label{eqn:coproduct_greensfunctions} \\
		\Delta \left ( \rescombgreen^r_{\mathbf{V}} \right ) & = \sum_{\mathbf{v} \in \ZvQ} \eval{\left ( \overline{\combgreen}^r \overline{\combcharge}^\mathbf{v} \right )}_{\mathbf{V} - \mathbf{v}} \otimes \rescombgreen^r_{\mathbf{v}} \, , \label{eqn:coproduct_greensfunctions_restricted}
		\intertext{and, provided that \(\Q\) is cograph-divergent and \(\overline{\rescombgreen}^r_{\mathbf{V}} \neq 0\),}
		\Delta \big ( \overline{\rescombgreen}^r_{\mathbf{V}} \big ) & = \sum_{\mathbf{v} \in \ZvQ} \eval{\left ( \overline{\combgreen}^r \overline{\combcharge}^\mathbf{v} \right )}_{\mathbf{V} - \mathbf{v}} \otimes \overline{\rescombgreen}^r_{\mathbf{v}} \, . \label{eqn:coproduct_greensfunctions_restricted_divergent}
	\end{align}
	\end{subequations}
	}%
\end{prop}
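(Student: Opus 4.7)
The plan is to establish (a) by a direct rearrangement of the coproduct sum, obtain (b) by restriction to fixed residue-grading \(\mathbf{R}\), and deduce (c) from (b) via \propref{prop:sdd_affine_linear_resgrd}, whose applicability is precisely what the cograph-divergence hypothesis guarantees.

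For (a), I would start from
\begin{equation*}
\Delta \left ( \combgreen^r \right ) = \one \otimes \one \pm \sum_{\substack{\Gamma \in \GQ \\ \res{\Gamma} = r}} \frac{1}{\sym{\Gamma}} \sum_{\gamma \in \DQ{\Gamma}} \gamma \otimes \Gamma / \gamma \, ,
\end{equation*}
with the sign following \defnref{def:combinatorial_greens_functions}, and swap the order of summation so that the outer index ranges over pairs \((\Gamma_0, \gamma_0)\) with \(\Gamma_0 \in \GQ\), \(\res{\Gamma_0} = r\), and \(\gamma_0 \in \IQ{\Gamma_0}\), while the inner sum collects parent graphs \(\Gamma\) realising a given pair. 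Combining \lemref{lem:sym-factors_and_ins-factors} in the form \(\isoemb{\gamma_0}{\Gamma} / \sym{\Gamma} = \insaut{\gamma_0}{\Gamma_0}{\Gamma} / \left ( \sym{\gamma_0} \sym{\Gamma_0} \right )\) with the telescoping identity \(\sum_\Gamma \insaut{\gamma_0}{\Gamma_0}{\Gamma} = \ins{\gamma_0}{\Gamma_0}\) collapses the inner sum. Grouping the result by the residue-grading \(\mathbf{r}\) of \(\Gamma_0\), \lemref{lem:v-h-e-sets_and_r-r} allows me to replace \(\ins{\gamma_0}{\Gamma_0}\) by the grading-intrinsic factor \(\insrr{\gamma_0}\), and \propref{prop:isomorphic_insertable_graph_sets} identifies the left tensor factor with \(\overline{\combgreen}^r \overline{\combcharge}^\mathbf{r}\); the right tensor factor is by construction \(\rescombgreen^r_\mathbf{r}\) (with the sign of \defnref{def:combinatorial_greens_functions} absorbed), which proves (a).

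Identity (b) follows by projecting both sides of (a) onto the residue-grading \(\mathbf{R}\) component, using that the grading is additive under the coproduct: the right factor \(\rescombgreen^r_\mathbf{r}\) of grading \(\mathbf{r}\) forces its tensor partner to carry grading \(\mathbf{R} - \mathbf{r}\), producing \(\eval{\left ( \overline{\combgreen}^r \overline{\combcharge}^\mathbf{r} \right )}_{\mathbf{R} - \mathbf{r}}\). Identity (c) is then an immediate application of \propref{prop:sdd_affine_linear_resgrd}: under cograph-divergence the projection \(\mathscr{D}\) commutes summand-by-summand with the coproduct on the right tensor factor, so \(\rescombgreen^r_\mathbf{r}\) may be replaced by \(\overline{\rescombgreen}^r_\mathbf{r}\) without touching the left factor.

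The main obstacle is the bookkeeping in the rearrangement underlying (a): one must verify carefully that summing \(\Gamma\) with a fixed cograph \(\Gamma_0\) and fixed subgraph class \(\gamma_0\) genuinely assembles the total insertion count \(\ins{\gamma_0}{\Gamma_0}\), and that the description of \(\DQ{\Gamma}\) as allowing disjoint unions of 1PI divergent subgraphs dovetails with the definition of \(\IQ{\Gamma_0}\) allowing \(\gamma_0\) to be a product of connected components, so that each term of the reorganised expression corresponds bijectively to a term of the original double sum and no over- or undercounting occurs.
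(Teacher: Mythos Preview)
Your proposal is correct and follows essentially the same route as the paper: expand the coproduct, apply \lemref{lem:sym-factors_and_ins-factors} to split symmetry factors, reindex over cographs and insertable subgraphs using the telescoping identity \(\sum_\Gamma \insaut{\gamma_0}{\Gamma_0}{\Gamma} = \ins{\gamma_0}{\Gamma_0}\), then invoke \lemref{lem:v-h-e-sets_and_r-r} and \propref{prop:isomorphic_insertable_graph_sets} to identify the left tensor factor; parts (b) and (c) then follow by restriction and \propref{prop:sdd_affine_linear_resgrd} exactly as you outline. Your account is in fact slightly more explicit than the paper about the telescoping step, which the paper performs in passing between the third and fourth displayed lines of its computation.
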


\begin{proof}
	We start with \eqnref{eqn:coproduct_greensfunctions} by using the linearity of the coproduct, \propref{prop:isomorphic_insertable_graph_sets} and \lemref{lem:sym-factors_and_ins-factors}:
	\begin{subequations}
	\begin{equation}
	\begin{split}
		\D{\rescombgreen^r} & = \sum_{\substack{\Gamma \in \GQ\\\res{\Gamma} = r}} \frac{1}{\sym{\Gamma}} \D{\Gamma}\\
		& = \sum_{\substack{\Gamma \in \GQ\\\res{\Gamma} = r}} \sum_{\gamma \in \DQ{\Gamma}} \frac{1}{\sym{\Gamma}} \gamma \otimes \Gamma / \gamma\\
		& = \sum_{\substack{\Gamma \in \GQ\\\res{\Gamma} = r}} \sum_{\gamma \in \DQ{\Gamma}} \frac{\insaut{\gamma}{\Gamma / \gamma}{\Gamma}}{\isoemb{\gamma}{\Gamma}} \left ( \frac{1}{\sym{\gamma}} \gamma \right ) \otimes \left ( \frac{1}{\sym{\Gamma / \gamma}} \Gamma / \gamma \right )\\
		& = \sum_{\substack{\Gamma^\prime \in \GQ\\\resnolim{\Gamma^\prime} = r}} \left ( \sum_{\substack{\gamma^\prime \in \IQnolim{\Gamma^\prime}}} \frac{\ins{\gamma^\prime}{\Gamma^\prime}}{\sym{\gamma^\prime}} \gamma^\prime \right ) \otimes \left ( \frac{1}{\sym{\Gamma^\prime}} \Gamma^\prime \right ) + \overline{\combgreen}^r \otimes \one\\
		& = \sum_{\mathbf{v} \in \ZvQ} \left ( \sum_{\substack{\gamma^\prime \in \IQrv}} \frac{\insrr{\gamma^\prime}}{\sym{\gamma^\prime}} \gamma^\prime \right ) \otimes \left ( \sum_{\substack{\Gamma^\prime \in \GQ\\\resnolim{\Gamma^\prime} = r\\\vtxgrdnolim{\Gamma^\prime} = \mathbf{v}}} \frac{1}{\sym{\Gamma^\prime}} \Gamma^\prime \right ) + \overline{\combgreen}^r \otimes \one\\
		& = \sum_{\mathbf{v} \in \ZvQ} \overline{\combgreen}^r \overline{\combcharge}^{\mathbf{v}} \otimes \rescombgreen^r_{\mathbf{v}}
	\end{split}
	\end{equation}
	From this we proceed to \eqnref{eqn:coproduct_greensfunctions_restricted} via restriction:
	\begin{equation}
	\begin{split}
		\D{\combgreen^r_\mathbf{V}} & = \eval{\left ( \D{\combgreen^r} \right )}_\mathbf{V}\\
		& = \eval{\left ( \sum_{\mathbf{v} \in \ZvQ} \overline{\combgreen}^r \overline{\combcharge}^{\mathbf{v}} \otimes \rescombgreen^r_{\mathbf{v}} \right )}_\mathbf{V}\\
		& = \sum_{\mathbf{v} \in \ZvQ} \eval{\left ( \overline{\combgreen}^r \overline{\combcharge}^\mathbf{v} \right )}_{\mathbf{V} - \mathbf{v}} \otimes \rescombgreen^r_{\mathbf{v}} \, .
	\end{split}
	\end{equation}
	\end{subequations}
	Finally, \eqnref{eqn:coproduct_greensfunctions_restricted_divergent} follows from \eqnref{eqn:coproduct_greensfunctions_restricted} together with the assumption of \(\Q\) being cograph-divergent and the application of \propref{prop:proj_div_graphs_coprod}.
\end{proof}

\enter

\begin{prop}[Coproduct identities for (divergent/restricted) combinatorial charges] \label{prop:coproduct_charges}
	Let \(\Q\) be a QFT, \(\HQ\) its (associated) renormalization Hopf algebra, \(v \in \RQO\) a vertex residue and \(\mathbf{V} \in \ZvQ\) a vertex-grading multi-index. Then the following identities hold:
	{
	\allowdisplaybreaks
	\begin{subequations}
	\begin{align}
		\Delta \left ( \combcharge^v \right ) & = \sum_{\mathbf{v} \in \ZvQ} \overline{\combcharge}^v \overline{\combcharge}^\mathbf{v} \otimes \combcharge^v_{\mathbf{v}} \, , \label{eqn:coproduct_charges} \\
		\Delta \left ( \combcharge^v_{\mathbf{V}} \right ) & = \sum_{\mathbf{v} \in \ZvQ} \eval{\left ( \overline{\combcharge}^v \overline{\combcharge}^\mathbf{v} \right )}_{\mathbf{V} - \mathbf{v}} \otimes \combcharge^v_{\mathbf{v}} \, , \label{eqn:coproduct_charges_restricted}
		\intertext{and, provided that \(\Q\) is cograph-divergent and \(\overline{\combcharge}^v_{\mathbf{V}} \neq 0\),}
		\Delta \big ( \overline{\combcharge}^v_{\mathbf{V}} \big ) & = \sum_{\mathbf{v} \in \ZvQ} \eval{\left ( \overline{\combcharge}^v \overline{\combcharge}^\mathbf{v} \right )}_{\mathbf{V} - \mathbf{v}} \otimes \overline{\combcharge}^v_{\mathbf{v}} \, . \label{eqn:coproduct_charges_restricted_divergent}
	\end{align}
	\end{subequations}
	}%
\end{prop}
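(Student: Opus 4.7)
The plan is to mirror the structure of the proof of \propref{prop:coproduct_greensfunctions} and reduce the statement to that proposition via multiplicativity of the coproduct. Applying \(\Delta\) as an algebra morphism to the defining identity \(\combcharge^v = \combgreen^v \big / \prod_{e \in E \left ( v \right )} \sqrt{\combgreen^e}\) (where the square roots are defined formally via the binomial series, so that every factor lies in an appropriate completion of \(\HQ\) and the quotient makes sense), one obtains
\begin{equation*}
\Delta \! \left ( \combcharge^v \right ) = \frac{\Delta \! \left ( \combgreen^v \right )}{\prod_{e \in E \left ( v \right )} \sqrt{\Delta \! \left ( \combgreen^e \right )}} \, .
\end{equation*}
Substituting \eqnref{eqn:coproduct_greensfunctions} on the right-hand side should then produce \eqnref{eqn:coproduct_charges}.

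To make the cancellation transparent, I would rewrite \eqnref{eqn:coproduct_greensfunctions} as \(\Delta \! \left ( \combgreen^r \right ) = \left ( \overline{\combgreen}^r \otimes \one \right ) \cdot \mathcal{Z} \! \left ( \combgreen^r \right )\), where for any suitable element \(\mathfrak{G}\) we set \(\mathcal{Z} \! \left ( \mathfrak{G} \right ) := \sum_{\mathbf{r} \in \ZRQ} \overline{\combcharge}^\mathbf{r} \otimes \mathfrak{G}_\mathbf{r}\). The crucial observation is that \(\mathcal{Z}\) is multiplicative: since the residue-grading is additive under the product of \(\HQ\) and \(\overline{\combcharge}^{\mathbf{r}_1 + \mathbf{r}_2} = \overline{\combcharge}^{\mathbf{r}_1} \overline{\combcharge}^{\mathbf{r}_2}\), a direct index-shift yields \(\mathcal{Z} \! \left ( \mathfrak{G}_1 \mathfrak{G}_2 \right ) = \mathcal{Z} \! \left ( \mathfrak{G}_1 \right ) \mathcal{Z} \! \left ( \mathfrak{G}_2 \right )\), and this multiplicativity persists under formal inversion and formal square roots. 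Hence
\begin{equation*}
\Delta \! \left ( \combcharge^v \right ) = \left ( \frac{\overline{\combgreen}^v}{\prod_{e \in E \left ( v \right )} \sqrt{\overline{\combgreen}^e}} \otimes \one \right ) \cdot \mathcal{Z} \! \left ( \combcharge^v \right ) = \left ( \overline{\combcharge}^v \otimes \one \right ) \cdot \sum_{\mathbf{r} \in \ZRQ} \overline{\combcharge}^\mathbf{r} \otimes \combcharge^v_\mathbf{r} \, ,
\end{equation*}
which is precisely \eqnref{eqn:coproduct_charges}. Equation \eqnref{eqn:coproduct_charges_restricted} then follows by restricting both sides to total residue-grading \(\mathbf{R}\) (exactly as the restricted Green's function identity is obtained from the unrestricted one in the previous proof), and \eqnref{eqn:coproduct_charges_restricted_divergent} is a consequence of the cograph-divergence hypothesis via \propref{prop:sdd_affine_linear_resgrd}.

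The main obstacle will be a careful handling of the formal square roots of \(\combgreen^e\), and of \(\Delta \! \left ( \combgreen^e \right )\), under the coproduct. I would address this by working in the \(I\)-adic completion of \(\HQ \otimes \HQ\), where \(I\) is the augmentation ideal: there \(\one - \precombgreen^e\) admits a unique square root whose constant term is \(\one\), the multiplicativity of \(\mathcal{Z}\) extends continuously, and the factorisation of \(\Delta \! \left ( \combgreen^e \right )\) used above is well-defined. Should this approach become cumbersome, an equivalent route is to clear square roots a priori by squaring the target identity and proving the resulting polynomial identity using the (non-square-rooted) Green's function coproduct, then fixing the sign by checking the leading \(\one \otimes \combcharge^v\) term.
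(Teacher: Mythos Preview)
Your proposal is correct and follows essentially the same route as the paper: apply multiplicativity of \(\Delta\) to the defining quotient for \(\combcharge^v\), substitute the Green's function coproduct identity \eqref{eqn:coproduct_greensfunctions}, factor out \(\overline{\combgreen}^r \otimes \one\) from each piece, and then recombine. Your auxiliary operator \(\mathcal{Z}\) and the explicit \(I\)-adic completion argument package the paper's manipulation of the nested sums and formal square roots a bit more cleanly, but the underlying argument is identical.
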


\begin{proof}
	We start with \eqnref{eqn:coproduct_charges} by using the linearity and multiplicativity of the coproduct and \propref{prop:coproduct_greensfunctions}:
	\begin{subequations}
	\begin{equation}
	\begin{split}
		\D{\combcharge^v} & = \frac{\D{\combgreen^v}}{\prod_{e \in E \left ( v \right )} \sqrt{\D{\combgreen^e}}}\\
		& = \frac{\sum_{{\mathbf{v}_v} \in \ZvQ} \overline{\combgreen}^v \overline{\combcharge}^{\mathbf{v}_v} \otimes \rescombgreen^v_{{\mathbf{v}_v}}}{\prod_{e \in E \left ( v \right )} \sqrt{\sum_{{\mathbf{v}_e} \in \ZvQ} \overline{\combgreen}^e \overline{\combcharge}^{\mathbf{v}_e} \otimes \rescombgreen^e_{{\mathbf{v}_e}}}}\\
		& = \frac{\left ( \overline{\combgreen}^v \otimes \one \right ) \left ( \sum_{{\mathbf{v}_v} \in \ZvQ} \overline{\combcharge}^{\mathbf{v}_v} \otimes \rescombgreen^v_{{\mathbf{v}_v}} \right )}{\prod_{e \in E \left ( v \right )} \sqrt{\left ( \overline{\combgreen}^e \otimes \one \right ) \left ( \sum_{{\mathbf{v}_e} \in \ZvQ} \overline{\combcharge}^{\mathbf{v}_e} \otimes \rescombgreen^e_{{\mathbf{v}_e}} \right )}}\\
		& = \left ( \overline{\combcharge}^v \otimes \one \right ) \left ( \frac{\left ( \sum_{{\mathbf{v}_v} \in \ZvQ} \overline{\combcharge}^{\mathbf{v}_v} \otimes \rescombgreen^v_{{\mathbf{v}_v}} \right )}{\prod_{e \in E \left ( v \right )} \sqrt{\left ( \sum_{{\mathbf{v}_e} \in \ZvQ} \overline{\combcharge}^{\mathbf{v}_e} \otimes \rescombgreen^e_{{\mathbf{v}_e}} \right )}} \right )\\
		& = \left ( \overline{\combcharge}^v \otimes \one \right ) \left ( \sum_{\mathbf{v} \in \ZvQ} \overline{\combcharge}^\mathbf{v} \otimes \combcharge^v_{\mathbf{v}} \right )\\
		& = \sum_{\mathbf{v} \in \ZvQ} \overline{\combcharge}^v \overline{\combcharge}^\mathbf{v} \otimes \combcharge^v_{\mathbf{v}}
	\end{split}
	\end{equation}
	From this we proceed to \eqnref{eqn:coproduct_charges_restricted} via restriction:
	\begin{equation}
	\begin{split}
		\D{\combcharge^v_\mathbf{V}} & = \eval{\left ( \D{\combcharge^v} \right )}_\mathbf{V}\\
		& = \eval{\left ( \sum_{\mathbf{v} \in \ZvQ} \overline{\combcharge}^v \overline{\combcharge}^{\mathbf{v}} \otimes \combcharge^v_{\mathbf{v}} \right )}_\mathbf{V}\\
		& = \sum_{\mathbf{v} \in \ZvQ} \eval{\left ( \overline{\combcharge}^v \overline{\combcharge}^\mathbf{v} \right )}_{\mathbf{V} - \mathbf{v}} \otimes \combcharge^v_{\mathbf{v}} \, ,
	\end{split}
	\end{equation}
	\end{subequations}
	Finally, \eqnref{eqn:coproduct_charges_restricted_divergent} follows from \eqnref{eqn:coproduct_charges_restricted} together with the assumption of \(\Q\) being cograph-divergent and the application of \propref{prop:proj_div_graphs_coprod}.
\end{proof}

\enter

\begin{prop}[Coproduct identities for exponentiated (divergent/restricted) combinatorial charges] \label{prop:coproduct_exponentiated_combinatorial_charges}
	Let \(\Q\) be a QFT, \(\HQ\) its (associated) renormalization Hopf algebra, \(v \in \RQO\) a vertex residue, \(\mathbf{V} \in \ZvQ\) a vertex-grading multi-index and \(m \in \mathbb{Q}\) a rational number.\footnote{The power of an element in the renormalization Hopf algebra \(\mathfrak{G} \in \HQ\) via a non-natural number \(m \in \left ( \mathbb{Q} \setminus \mathbb{N}_0 \right )\) is understood via the formal binomial series, i.e.\
	\begin{equation}
		\mathfrak{G}^m := \sum_{n = 0}^\infty \binom{m}{n} \left ( \mathfrak{G} - \one \right )^n \, .
	\end{equation}
	More generally, if the renormalization Hopf algebra is considered over the field \(\field\), then the following statements are true for \(m \in \field\).} Then the following identities hold:
	{
	\allowdisplaybreaks
	\begin{subequations}
	\begin{align}
		\Delta \left ( \combcharge^{mv} \right ) & = \sum_{\mathbf{v} \in \ZvQ} \overline{\combcharge}^{mv} \overline{\combcharge}^\mathbf{v} \otimes \combcharge^{mv}_{\mathbf{v}} \, , \label{eqn:coproduct_powers_charges} \\
		\Delta \left ( \combcharge^{mv}_{\mathbf{V}} \right ) & = \sum_{\mathbf{v} \in \ZvQ} \eval{\left ( \overline{\combcharge}^{mv} \overline{\combcharge}^\mathbf{v} \right )}_{\mathbf{V} - \mathbf{v}} \otimes \combcharge^{mv}_{\mathbf{v}} \, , \label{eqn:coproduct_powers_charges_restricted}
		\intertext{and, provided that \(\Q\) is cograph-divergent and \(\overline{\combcharge}^{mv}_{\mathbf{V}} \neq 0\),}
		\Delta \big ( \overline{\combcharge}^{mv}_{\mathbf{V}} \big ) & = \sum_{\mathbf{v} \in \ZvQ} \eval{\left ( \overline{\combcharge}^{mv} \overline{\combcharge}^\mathbf{v} \right )}_{\mathbf{V} - \mathbf{v}} \otimes \overline{\combcharge}^{mv}_{\mathbf{v}} \, . \label{eqn:coproduct_powers_charges_restricted_divergent}
	\end{align}
	\end{subequations}
	}%
\end{prop}

\begin{proof}
	We start with \eqnref{eqn:coproduct_powers_charges} by using the linearity and multiplicativity of the coproduct and \propref{prop:coproduct_charges}:
	\begin{subequations}
	\begin{equation}
	\begin{split}
		\Delta \left ( \combcharge^{mv} \right ) & = \left ( \D{\combcharge^v} \right )^m\\
		& = \left ( \sum_{\mathbf{v} \in \ZvQ} \overline{\combcharge}^v \overline{\combcharge}^\mathbf{v} \otimes \combcharge^v_{\mathbf{v}} \right )^m\\
		& = \left ( \left ( \overline{\combcharge}^v \otimes \one \right ) \left ( \sum_{\mathbf{v} \in \ZvQ} \overline{\combcharge}^\mathbf{v} \otimes \combcharge^v_{\mathbf{v}} \right ) \right )^m\\
		& = \left ( \overline{\combcharge}^{mv} \otimes \one \right ) \left ( \sum_{\mathbf{v} \in \ZvQ} \overline{\combcharge}^\mathbf{v} \otimes \combcharge^v_{\mathbf{v}} \right )^m\\
		& = \sum_{\mathbf{v} \in \ZvQ} \overline{\combcharge}^{mv} \overline{\combcharge}^\mathbf{v} \otimes \combcharge^{mv}_{\mathbf{v}}
	\end{split}
	\end{equation}
	From this we proceed to \eqnref{eqn:coproduct_powers_charges_restricted} via restriction:
	\begin{equation}
	\begin{split}
		\Delta \left ( \combcharge^{mv}_{\mathbf{V}} \right ) & = \eval{\left ( \D{\combcharge^{mv}} \right )}_\mathbf{V}\\
		& = \eval{\left ( \sum_{\mathbf{v} \in \ZvQ} \overline{\combcharge}^{mv} \overline{\combcharge}^\mathbf{v} \otimes \combcharge^{mv}_{\mathbf{v}} \right )}_\mathbf{V}\\
		& \phantom{ = } \sum_{\mathbf{v} \in \ZvQ} \eval{\left ( \overline{\combcharge}^{mv} \overline{\combcharge}^\mathbf{v} \right )}_{\mathbf{V} - \mathbf{v}} \otimes \combcharge^{mv}_{\mathbf{v}}
	\end{split}
	\end{equation}
	\end{subequations}
	Finally, \eqnref{eqn:coproduct_powers_charges_restricted_divergent} follows from \eqnref{eqn:coproduct_powers_charges_restricted} together with the assumption of \(\Q\) being cograph-divergent and the application of \propref{prop:proj_div_graphs_coprod}.
\end{proof}

\section{Quantum gauge symmetries and subdivergences} \label{sec:quantum_gauge_symmetries_and_subdivergences}

In this section we give a precise definition of `quantum gauge symmetries (QGS)' in \defnref{defn:quantum_gauge_symmetries} and prove in \thmref{thm:quantum_gauge_symmetries_induce_hopf_ideals} that they induce Hopf ideals in the (associated) renormalization Hopf algebra, even for super- and non-renormalizable QFTs, QFTs with several coupling constants and QFTs with a transversal structure. This means that \(Z\)-factor identities coming from (generalized) gauge symmetries, such as \eqnsaref{eqns:z-factor_identities_qym}{eqns:z-factor_identities_qgr}, are compatible with the renormalization of subdivergences. Furthermore, we illustrate our framework with Quantum Yang--Mills theory in \exref{exmp:qym} and (effective) Quantum General Relativity in \exref{exmp:qgr}. Additionally, we mention the works \cite{Prinz_2,Borinsky_PhD}, which contain general results on Hopf ideals in the context of renormalization Hopf algebras.

\enter

\begin{defn}[Quantum gauge symmetries] \label{defn:quantum_gauge_symmetries}
	Let \(\Q\) be a QFT whose coupling-grading is superficially compatible, \(\mathbf{Q}_\Q\) the set of its combinatorial charges with cardinality \(\mathfrak{v}_\Q\) and \(\mathbf{q}_\Q\) the set of its physical charges with cardinality \(\mathfrak{q}_\Q\), cf.\ \defnref{defn:sets_of_coupling_constants}. Suppose that \(\mathfrak{v}_\Q > \mathfrak{q}_\Q\), then we define the following set of equivalence relations, to which we refer as `quantum gauge symmetries (QGS)', via
	\begin{equation} \label{eqn:equivalence_relation_combinatorial_charges}
		\left ( \overline{\combcharge}^v_\mathbf{C} \right )^m \sim \left ( \overline{\combcharge}^w_\mathbf{C} \right )^n \quad : \! \! \iff \quad \cpl{\combcharge^v}^m \equiv \cpl{\combcharge^w}^n
	\end{equation}
	for all \(v, w \in \RQ^{[0]}\), \(m, n \in \mathbb{N}_+\) and \(\mathbf{C} \in \mathbb{Z}^{\mathfrak{q}_\Q}\). Explicitly, the product of combinatorial charges is defined as the sum over all possibilities to connect their external edges to trees with the respective combinatorial charges as vertices. If \(\Q\) possesses additionally a transversal structure, cf.\ \defnref{defn:transversal_structure}, then the external edges are additionally required to respect the `physical' and `unphysical' labels. We remark that this is automatic for internal edges due to the restriction to particular coupling-gradings, as `unphysical' edges are indexed by the respective gauge fixing parameters. In particular, this requires connecting gauge field edges to be:
	\begin{itemize}
		\item Transversal, if they are related to higher valent gauge field vertices
		\item Longitudinal, if they are related to ghost edges
	\end{itemize}
	Finally, the set of all quantum gauge symmetries of \(\Q\) is denoted via \(\operatorname{QGS}_\Q\) and elements therein are given as quadruples of the form \(\set{v, m; w, n} \in \operatorname{QGS}_\Q\).
\end{defn}

\enter

\begin{exmp}[Quantum Yang--Mills theory] \label{exmp:qym}
	We continue the first example started in \sectionref{sec:introduction}: Consider Quantum Yang--Mills theory with a Lorenz gauge fixing. Then, the Lagrange density is given via
	\begin{equation}
	\begin{split}
		\mathcal{L}_\text{QYM} & := \mathcal{L}_\text{YM} + \mathcal{L}_\text{GF} + \mathcal{L}_\text{Ghost} \\ & \phantom{:} = \eta^{\mu \nu} \eta^{\rho \sigma} \delta_{a b} \left ( - \frac{1}{4 \mathrm{g}^2} F^a_{\mu \rho} F^b_{\nu \sigma} - \frac{1}{2 \xi} \big ( \partial_\mu A^a_\nu \big ) \big ( \partial_\rho A^b_\sigma \big ) \right ) \dif V_\eta \\
		& \phantom{:=} + \eta^{\mu \nu} \left ( \frac{1}{\xi} \overline{c}_a \left ( \partial_\mu \partial_\nu c^a \right ) + \mathrm{g} \tensor{f}{^a _b _c} \overline{c}_a \left ( \partial_\mu \big ( c^b A^c_\nu \big ) \right ) \right ) \dif V_\eta \, ,
	\end{split}
	\end{equation}
	where \(F^a_{\mu \nu} := \mathrm{g} \big ( \partial_\mu A^a_\nu - \partial_\nu A^a_\mu \big ) - \mathrm{g}^2 \tensor{f}{^a _b _c} A^b_\mu A^c_\nu\) is the local curvature form of the gauge boson \(A^a_\mu\). Furthermore, \(\dif V_\eta := \dif t \wedge \dif x \wedge \dif y \wedge \dif z\) denotes the Minkowskian volume form. Additionally, \(\eta^{\mu \nu} \partial_\mu A^a_\nu \equiv 0\) is the Lorenz gauge fixing functional and \(\xi\) denotes the gauge fixing parameter. Finally, \(c^a\) and \(\overline{c}_a\) are the gauge ghost and gauge antighost, respectively. Then, we have the following identities, where \(l\) of the unspecified external gauge boson legs are considered to be longitudinally projected:\footnote{The residues were drawn with JaxoDraw \cite{Binosi_Theussl,Binosi_Collins_Kaufhold_Theussl}.}
	\begin{subequations}
	\begin{equation}
		\cpl{\left ( \combcharge^{\scriptstyle{T} \tcgreen{c-gluontriple}} \right )^{2_{\scriptscriptstyle{T}}}} = \xi^{\textfrac{l}{2}} \mathrm{g}^2 = \cpl{\combcharge^{\cgreen{c-gluonquartic}}} \, ,
	\end{equation}
	where \(T\) denotes the projection operator onto the transversal degree of freedom and \(2_{\scriptstyle{T}}\) the squaring by joining the two transversal half-edges, and
	\begin{equation}
		\cpl{\combcharge^{\tcgreen{c-gluontriple}^{\scriptstyle{L}}_{\scriptstyle{L}}}} = \xi^{1 + \textfrac{l}{2}} \mathrm{g} = \cpl{\combcharge^{\tcgreen{c-gluonghosttriple}}} \, , \label{eqn:cpl_gluon_gluon-ghost}
	\end{equation}
	\end{subequations}
	where \(L\) denotes the projection operator onto the longitudinal degree of freedom, cf.\ \defnref{defn:transversal_structure}. More precisely, in the case of the Lorenz gauge fixing, we have
	\begin{subequations} \label{eqn:projection_tensors_qym}
	\begin{align}
		L^\nu_\mu & := \frac{1}{p^2} p^\nu p_\mu \, , \\
		I^\nu_\mu & := \delta^\nu_\mu
		\intertext{and}
		T^\nu_\mu & := I^\nu_\mu - L^\nu_\mu \, ,
	\end{align}
	\end{subequations}
	where a short calculation verifies \(L^2 = L\), \(I^2 = I\) and \(T^2 = T\). This implies:\footnote{We only display the generating set.}
	\begin{equation}
	\begin{split}
		\operatorname{QGS}_\text{QYM} & = \set{\set{{\scriptstyle{T} \tcgreen{c-gluontriple}}, 2_{\scriptstyle{T}}; {\cgreen{c-gluonquartic}}, 1} , \set{{\tcgreen{c-gluontriple}^{\scriptstyle{L}}_{\scriptstyle{L}}}, 1; {\tcgreen{c-gluonghosttriple}}, 1}}
	\end{split}
	\end{equation}
\end{exmp}

\enter

\begin{exmp}[(Effective) Quantum General Relativity] \label{exmp:qgr}
	We continue the second example started in \sectionref{sec:introduction}: Consider (effective) Quantum General Relativity with the metric decomposition \(g_{\mu \nu} \equiv \eta_{\mu \nu} + \varkappa h_{\mu \nu}\), where \(h_{\mu \nu}\) is the graviton field and \(\varkappa := \sqrt{\kappa}\) the graviton coupling constant (with \(\kappa := 8 \pi G\) the Einstein gravitational constant), and a linearized de Donder gauge fixing. Then, the Lagrange density is given via
	\begin{equation}
	\begin{split}
		\mathcal{L}_\text{QGR} & := \mathcal{L}_\text{GR} + \mathcal{L}_\text{GF} + \mathcal{L}_\text{Ghost} \\ & \phantom{:} = - \frac{1}{2 \varkappa^2} \left ( \sqrt{- \Det{g}} R + \frac{1}{2 \zeta} \eta^{\mu \nu} \deDonder^{(1)}_\mu \deDonder^{(1)}_\nu \right ) \dif V_\eta \\ & \phantom{:=} - \frac{1}{2} \eta^{\rho \sigma} \left ( \frac{1}{\zeta} \overline{C}^\mu \left ( \partial_\rho \partial_\sigma C_\mu \right ) + \overline{C}^\mu \left ( \partial_\mu \big ( \tensor{\Gamma}{^\nu _\rho _\sigma} C_\nu \big ) - 2 \partial_\rho \big ( \tensor{\Gamma}{^\nu _\mu _\sigma} C_\nu \big ) \right ) \right ) \dif V_\eta \, ,
	\end{split}
	\end{equation}
	where \(R := g^{\nu \sigma} \tensor{R}{^\mu _\nu _\mu _\sigma}\) is the Ricci scalar (with \(\tensor{R}{^\rho _\sigma _\mu _\nu} := \partial_\mu \tensor{\Gamma}{^\rho _\nu _\sigma} - \partial_\nu \tensor{\Gamma}{^\rho _\mu _\sigma} + \tensor{\Gamma}{^\rho _\mu _\lambda} \tensor{\Gamma}{^\lambda _\nu _\sigma} - \tensor{\Gamma}{^\rho _\nu _\lambda} \tensor{\Gamma}{^\lambda _\mu _\sigma}\) the Riemann tensor). Again, \(\dif V_\eta := \dif t \wedge \dif x \wedge \dif y \wedge \dif z\) denotes the Minkowskian volume form, which is related to the Riemannian volume form \(\dif V_g\) via \(\dif V_g \equiv \sqrt{- \Det{g}} \dif V_\eta\). Additionally, \(\deDonder^{(1)}_\mu := \eta^{\rho \sigma} \Gamma_{\mu \rho \sigma} \equiv 0\) is the linearized de Donder gauge fixing functional and \(\zeta\) the gauge fixing parameter. Finally, \(C_\mu\) and \(\overline{C}^\mu\) are the graviton-ghost and graviton-antighost, respectively. Again, we refer to \cite{Prinz_2,Prinz_4} for more detailed introductions and further comments on the chosen conventions. Then, we have the following identities, where \(k \in \mathbb{N}_0\) denotes additional graviton legs and \(l\) of the unspecified external graviton legs are considered to be longitudinally projected:\footnote{Again, the residues were drawn with JaxoDraw \cite{Binosi_Theussl,Binosi_Collins_Kaufhold_Theussl}.}
	\begin{subequations}
	\begin{equation}
		\cpl{\left ( \combcharge^{\bbsT \tcgreen{c-gravitontriple}} \right ) \bullet_{\bbsT} \left ( \combcharge^{\bbsT \tcgreen{c-gravitonmultiple} \legnumberexponent{k}} \right )} = \zeta^{\textfrac{l}{2}} \varkappa^{k+2} = \cpl{\combcharge^{\tcgreen{c-gravitonmultiple} \legnumberexponentlong{k+1}}}
	\end{equation}
	where \(\bbT\) denotes the projection operator onto the transversal degree of freedom and \(\bullet_{\bbsT}\) the product by joining the two transversal half-edges together, and
	\begin{equation}
		\cpl{\combcharge^{\tcgreen{c-gravitonmultiple}^{\bbsL}_{\bbsL} \legnumberexponentlongitudinal{k}}} = \zeta^{1 + \textfrac{l}{2}} \varkappa^{k+1} = \cpl{\combcharge^{\tcgreen{c-gravitonghostmultiple} \legnumberexponentghost{k}}} \, , \label{eqn:cpl_graviton_graviton-ghost}
	\end{equation}
	\end{subequations}
	where \(\bbL\) denotes the projection operator onto the longitudinal degree of freedom, cf.\ \defnref{defn:transversal_structure}. More precisely, in the case of the linearized de Donder gauge fixing, we have\footnote{We remark that, unlike in the case of Quantum Yang--Mills theory with a Lorenz gauge fixing in Equations \eqref{eqn:projection_tensors_qym}, the projection tensors \(\bbL\) and \(\bbT\) are not symmetric with respect to the index-pairs \(\mu \nu\) and \(\rho \sigma\). The indices \(\mu \nu\) belong to vertex Feynman rules and the indices \(\rho \sigma\) belong to the propagator Feynman rule. This reflects their respective weights as tensor densities, which will be discussed and studied further in \cite{Prinz_8,Prinz_5,Prinz_6}.}
	\begin{subequations} \label{eqn:projection_tensors_qgr}
	\begin{align}
		\bbL^{\rho \sigma}_{\mu \nu} & := \frac{1}{2 p^2} \left ( \delta^\rho_\mu p^\sigma p_\nu + \delta^\sigma_\mu p^\rho p_\nu + \delta^\rho_\nu p^\sigma p_\mu + \delta^\sigma_\nu p^\rho p_\mu - 2 \eta^{\rho \sigma} p_\mu p_\nu \right ) \, , \\
		\bbI^{\rho \sigma}_{\mu \nu} & := \frac{1}{2} \left ( \delta^\rho_\mu \delta^\sigma_\nu + \delta^\sigma_\mu \delta^\rho_\nu \right )
		\intertext{and}
		\bbT^{\rho \sigma}_{\mu \nu} & := \bbI^{\rho \sigma}_{\mu \nu} - \bbL^{\rho \sigma}_{\mu \nu} \, ,
	\end{align}
	\end{subequations}
	where a short calculation verifies \(\bbL^2 = \bbL\), \(\bbI^2 = \bbI\) and \(\bbT^2 = \bbT\). This implies:\footnote{Again, we only display the generating set.}
	\begin{subequations}
	\begin{equation}
	\begin{split}
		\operatorname{QGS}_\text{QGR} & = \set{ \left . \set{{{\raisebox{0.1ex}{$\bbsT$}} \tcgreen{c-gravitonmultiple} \legnumber{i}},  \tilde{\jmath} \left ( i, j \right )_{\bbsT}; {{\raisebox{0.1ex}{$\bbsT$}} \tcgreen{c-gravitonmultiple} \legnumber{j}}, \tilde{\imath} \left ( i, j \right )_{\bbsT}} \, \right \vert \, i, j \in \mathbb{N}_0} \\ & \phantom{= \{} \bigcup \set{\left . \set{{\tcgreen{c-gravitonmultiple}^{\raisebox{0.1ex}{$\bbsL$}}_{\raisebox{0.1ex}{$\bbsL$}} \legnumberlongitudinal{k}}, 1; {\tcgreen{c-gravitonghostmultiple} \legnumberghost{k}}, 1} \, \right \vert \, k \in \mathbb{N}_0} \, ,
	\end{split}
	\end{equation}
	where \(i, j, k \in \mathbb{N}_0\) denote additional graviton legs, and with
	\begin{align}
		\tilde{\imath} \left ( i, j \right ) & := \frac{i+3}{\GCD{i+3}{j+3}}
		\intertext{and}
		\tilde{\jmath} \left ( i, j \right ) & := \frac{j+3}{\GCD{i+3}{j+3}} \, ,
	\end{align}
	\end{subequations}
	where \(\GCD{m}{n}\) denotes the greatest common divisor of the two natural numbers \(m, n \in \mathbb{N}_+\),\footnote{More precisely, let \(m = \prod_{k = 0}^\infty \left ( p_k \right )^{i_k}\) and \(n = \prod_{k = 0}^\infty  \left ( p_k \right )^{j_k}\) be the respective prime factorizations. Then, we have \begin{equation} \GCD{m}{n} := \prod_{k = 0}^\infty \left ( p_k \right )^{\operatorname{Min} \left ( i_k, j_k \right )} \, . \end{equation}} and finally \(\tilde{\imath} \left ( i, j \right )_{\bbsT}\) and \(\tilde{\jmath} \left ( i, j \right )_{\bbsT}\) denote, as exponents of residues, all possibilities to glue them to trees with transversal intermediate graviton edges. This will be studied further in \cite{Prinz_5,Prinz_6}, using the general theory and Feynman rules developed in \cite{Prinz_2,Prinz_4}.
\end{exmp}

\enter

\begin{defn}[Quantum gauge symmetry ideal] \label{defn:qgs_ideal}
	Given the situation of \defnref{defn:quantum_gauge_symmetries}, a cograph-divergent QFT \(\Q\) and its (associated) renormalization Hopf algebra. Then we define for each quantum gauge symmetry \(\set{v,m;w,n} \in \operatorname{QGS}_\Q\) the ideal
	\begin{subequations} \label{eqns:qgs_ideal}
	\begin{align}
		\iQ^{\set{v,m;w,n}} & := \sum_{\mathbf{C}^\prime \in \mathbb{Z}^{\mathfrak{q}_\Q}} \left \langle \overline{\combcharge}^{mv}_{\mathbf{C}^\prime} - \overline{\combcharge}^{nw}_{\mathbf{C}^\prime} \right \rangle_{\HQ}
		\intertext{and then sum over all such quantum gauge symmetries \(\set{v,m;w,n} \in \operatorname{QGS}_\Q\) to obtain the ideal}
		\iQ & := \sum_{\set{v,m;w,n} \in \operatorname{QGS}_\Q} \iQ^{\set{v,m;w,n}} \, .
	\end{align}
	\end{subequations}
\end{defn}

\enter

\begin{lem} \label{lem:elements_in_iq}
	Given the situation of \defnref{defn:qgs_ideal} and let
	\begin{equation} \label{eqn:coupling-coloring_function_multi-indices}
		\boldsymbol{\theta} \, : \quad \ZvQ \to \ZqQ \, , \quad \mathbf{V} \mapsto \mathbf{C}
	\end{equation}
	be the function mapping a vertex-grading multi-index to its corresponding coupling-grading multi-index with respect to the function \(\theta\) from Equation~(\ref{eqn:coupling-coloring_function}). Then we have \(\big ( \overline{\combcharge}^\mathbf{v}_\mathbf{C} - \overline{\combcharge}^\mathbf{w}_\mathbf{C} \big ) \in \iQ\) if and only if \(\boldsymbol{\theta} \left ( \mathbf{v} \right ) = \boldsymbol{\theta} \left ( \mathbf{w} \right )\) for two vertex-grading multi-indices \(\mathbf{v}, \mathbf{w} \in \ZvQ\).
\end{lem}

\begin{proof}
	Before we start with the actual proof, we emphasize that \(\theta \colon \AQ \to \qQ\) is the function mapping a vertex residue to its associated (product of) coupling constant(s) and thus \(\boldsymbol{\theta} \colon \ZvQ \to \ZqQ\) is the function mapping a multi-index of vertex residues to its associated multi-index of coupling constants. From the very definition of the quantum gauge symmetry ideal \(\iQ\) we know that \(\big ( \overline{\combcharge}^{mv}_\mathbf{C} - \overline{\combcharge}^{nw}_\mathbf{C} \big ) \in \iQ\) if and only if \(\set{v,m;w,n} \in \operatorname{QGS}_\Q\) is a quantum gauge symmetry. Thus we will now show that we can rewrite the element \(\big ( \overline{\combcharge}^\mathbf{v}_\mathbf{C} - \overline{\combcharge}^\mathbf{w}_\mathbf{C} \big )\) as a sum of the form
	\begin{equation} \label{eqn:to_show_qgs-equivalence}
		\left ( \overline{\combcharge}^\mathbf{v}_\mathbf{C} - \overline{\combcharge}^\mathbf{w}_\mathbf{C} \right ) = \sum_{\mathbf{C}^\prime \in \mathbb{Z}^{\mathfrak{q}_\Q}} \left ( \prod_{i = 1}^K \overline{\combcharge}^{m_i v_i}_{\mathbf{C}^\prime} - \prod_{j = 1}^K \overline{\combcharge}^{n_j w_j}_{\mathbf{C}^\prime} \right ) \mathfrak{H}_{\mathbf{C} - \mathbf{C}^\prime} \, ,
	\end{equation}
	where \(\set{v_k ,m_k ; w_k ,n_k} \in \operatorname{QGS}_\Q\) are quantum gauge symmetries for all \(k \in \set{1, \dots, K}\) and \(\mathfrak{H}_{\mathbf{C} - \mathbf{C}^\prime} \in \HQ\) are elements in the associated renormalization Hopf algebra. For the first direction, we assume the equality
	\begin{equation}
		\mathbf{c} := \boldsymbol{\theta} \left ( \mathbf{v} \right ) = \boldsymbol{\theta} \left ( \mathbf{w} \right ) \, .
	\end{equation}
	Then, we observe that due to the very definition of quantum gauge symmetries, cf.\ \defnref{defn:quantum_gauge_symmetries}, there exist quantum gauge symmetries \(\set{\set{v_k ,m_k ; w_k ,n_k}}_{k = 1}^K\) such that the two vertex-grading multi-indices are related via
	\begin{equation}
		\mathbf{r} := \mathbf{v} - \sum_{k = 1}^K m_k \mathbf{e}_{v_k} = \mathbf{w} - \sum_{k = 1}^K n_k \mathbf{e}_{w_k} \, ,
	\end{equation}
	where \(\mathbf{e}_v\) is the unit multi-index with respect to the vertex \(v\). Furthermore, since
	\begin{equation}
		\left \vert \mathbf{c} \right \vert := \sum_{i = 1}^{\ZqQ} \mathbf{c}_i < \infty \, ,
	\end{equation}
	we observe that this is always possible for a finite number of quantum gauge symmetries, i.e.\ \(K \in \mathbb{N}\). Additionally, given that \(\mathbf{v} = \mathbf{w}\) leads to the trivial case \(\big ( \overline{\combcharge}^\mathbf{v}_\mathbf{C} - \overline{\combcharge}^\mathbf{w}_\mathbf{C} \big ) = 0 \in \iQ\), we assume from now on \(\mathbf{v} \neq \mathbf{w}\), which implies \(K \in \mathbb{N}_+\). We now proceed with \eqnref{eqn:to_show_qgs-equivalence} by writing out the term in the brackets by setting \(\mathfrak{H}_{\mathbf{C} - \mathbf{C}^\prime}  := \overline{\combcharge}^\mathbf{r}_{\mathbf{C} - \mathbf{C}^\prime}\) and using the definition given in \eqnref{eqn:restricted_products_combinatorial_charges}:
	\begin{equation}
	\begin{split}
		\left ( \prod_{i = 1}^K \overline{\combcharge}^{m_i v_i}_{\mathbf{C}^\prime} \right . & - \left . \prod_{j = 1}^K \overline{\combcharge}^{n_j w_j}_{\mathbf{C}^\prime} \right ) = \eval{\left ( \prod_{i = 1}^K \left ( \combcharge^{v_i} \right )^{m_i} - \prod_{j = 1}^K \left ( \combcharge^{w_j} \right )^{n_j} \right )}_{\mathbf{C}^\prime} \\
		 = & \eval{\left ( \! \left ( \prod_{i = 1}^{K-1} \left ( \combcharge^{v_i} \right )^{m_i} \right ) \! \Big ( \! \left ( \combcharge^{v_K} \right )^{m_K} - \left ( \combcharge^{w_K} \right )^{n_K} \! \Big ) \! \right )}_{\mathbf{C}^\prime} \\
		 & + \eval{\left ( \! \left ( \prod_{i = 1}^{K-2} \left ( \combcharge^{v_i} \right )^{m_i} \right ) \! \Big ( \! \left ( \combcharge^{v_{K-1}} \right )^{m_{K-1}} - \left ( \combcharge^{w_{K-1}} \right )^{n_{K-1}} \! \Big ) \! \left ( \combcharge^{w_K} \right )^{n_K} \right )}_{\mathbf{C}^\prime} \\
		 & + \eval{\left ( \! \left ( \prod_{i = 1}^{K-3} \left ( \combcharge^{v_i} \right )^{m_i} \right ) \! \Big ( \! \left ( \combcharge^{v_{K-2}} \right )^{m_{K-2}} - \left ( \combcharge^{w_{K-2}} \right )^{n_{K-2}} \! \Big ) \! \left ( \prod_{j = K-1}^{K} \left ( \combcharge^{w_j} \right )^{n_j} \right ) \! \right )}_{\mathbf{C}^\prime} \\
		 & + \dots \\
		 & + \eval{\left ( \! \left ( \prod_{i = 1}^2 \left ( \combcharge^{v_i} \right )^{m_i} \right ) \! \Big ( \! \left ( \combcharge^{v_3} \right )^{m_3} - \left ( \combcharge^{w_3} \right )^{n_3} \! \Big ) \! \left ( \prod_{j = 4}^{K} \left ( \combcharge^{w_j} \right )^{n_j} \right ) \! \right )}_{\mathbf{C}^\prime} \\
		 & + \eval{\left ( \! \left ( \combcharge^{v_1} \right )^{m_1} \! \Big ( \! \left ( \combcharge^{v_2} \right )^{m_2} - \left ( \combcharge^{w_2} \right )^{n_2} \! \Big ) \! \left ( \prod_{j = 3}^{K} \left ( \combcharge^{w_j} \right )^{n_j} \right ) \! \right )}_{\mathbf{C}^\prime} \\
		 & + \eval{\left ( \! \Big ( \! \left ( \combcharge^{v_1} \right )^{m_1} - \left ( \combcharge^{w_1} \right )^{n_1} \! \Big ) \! \left ( \prod_{j = 2}^{K} \left ( \combcharge^{w_j} \right )^{n_j} \right ) \! \right )}_{\mathbf{C}^\prime}
	\end{split}
	\end{equation}
	From this it is straightforward to see that each summand lies in the quantum gauge symmetry ideal \(\iQ\) and furthermore that the intermediate terms cancel pairwise. This directly implies the implication \(\big ( \overline{\combcharge}^\mathbf{v}_\mathbf{C} - \overline{\combcharge}^\mathbf{w}_\mathbf{C} \big ) \in \iQ\) if \(\boldsymbol{\theta} \left ( \mathbf{v} \right ) = \boldsymbol{\theta} \left ( \mathbf{w} \right )\). Conversely, given that \(\big ( \overline{\combcharge}^\mathbf{v}_\mathbf{C} - \overline{\combcharge}^\mathbf{w}_\mathbf{C} \big ) \in \iQ\), we obtain a decomposition as in \eqnref{eqn:to_show_qgs-equivalence} by the very definition of the quantum gauge symmetry ideal, cf.\ \defnref{defn:qgs_ideal}, which directly implies the equality \(\boldsymbol{\theta} \left ( \mathbf{v} \right ) = \boldsymbol{\theta} \left ( \mathbf{w} \right )\). This shows the claimed equivalence and thus concludes the proof.
\end{proof}

\enter

\begin{thm}[Quantum gauge symmetries induce Hopf ideals\footnote{This is a direct generalization of \cite[Theorem 15]{vSuijlekom_BV} to super- and non-renormalizable theories, theories with several coupling constants and such with longitudinal and transversal degrees of freedom.}] \label{thm:quantum_gauge_symmetries_induce_hopf_ideals}
	Given the situation of \defnref{defn:qgs_ideal}, the ideal \(\iQ\) is a Hopf ideal, i.e.\ satisfies:
	\begin{enumerate}
		\item \(\Delta \big ( \iQ \big ) \subseteq \HQ \otimes \iQ + \iQ \otimes \HQ\)
		\item \(\coone \big ( \iQ \big ) = 0\)
		\item \(S \big ( \iQ \big ) \subseteq \iQ\)
	\end{enumerate}
\end{thm}

\begin{proof}
	We start with the following calculation, using \propref{prop:coproduct_exponentiated_combinatorial_charges}:
	\begin{equation}
	\begin{split}
		\D{\overline{\combcharge}^{mv}_{\mathbf{V} - m \mathbf{e}_v} - \overline{\combcharge}^{nw}_{\mathbf{V} - n \mathbf{e}_w}} & = \sum_{\mathbf{v}^\prime \in \ZvQ} \left ( \eval{\left ( \overline{\combcharge}^{mv} \overline{\combcharge}^{\mathbf{v}^\prime} \right )}_{\mathbf{V} - \mathbf{v}^\prime - m \mathbf{e}_v} \otimes \overline{\combcharge}^{mv}_{\mathbf{v}^\prime} \right . \\ & \phantom{= \sum_{\mathbf{v}^\prime \in \ZvQ} (} \left . - \eval{\left ( \overline{\combcharge}^{nw} \overline{\combcharge}^{\mathbf{v}^\prime} \right )}_{\mathbf{V} - \mathbf{v}^\prime - n \mathbf{e}_w} \otimes \overline{\combcharge}^{nw}_{\mathbf{v}^\prime} \right ) \\
		& = \sum_{\mathbf{v}^\prime \in \ZvQ} \Big ( \overline{\combcharge}^{\mathbf{v}^\prime + m \mathbf{e}_v}_{\mathbf{V} - \mathbf{v}^\prime - m \mathbf{e}_v} \otimes \overline{\combcharge}^{mv}_{\mathbf{v}^\prime} - \overline{\combcharge}^{\mathbf{v}^\prime + n \mathbf{e}_w}_{\mathbf{V} - \mathbf{v}^\prime - n \mathbf{e}_w} \otimes \overline{\combcharge}^{nw}_{\mathbf{v}^\prime} \Big ) \\
		& = \sum_{\mathbf{v} \in \ZvQ} \left ( \overline{\combcharge}^\mathbf{v}_{\mathbf{V} - \mathbf{v}} \otimes \overline{\combcharge}^{mv}_{\mathbf{v} - m \mathbf{e}_v} - \overline{\combcharge}^\mathbf{v}_{\mathbf{V} - \mathbf{v}} \otimes \overline{\combcharge}^{nw}_{\mathbf{v} - n \mathbf{e}_w} \right ) \\
		& = \sum_{\mathbf{v} \in \ZvQ} \overline{\combcharge}^\mathbf{v}_{\mathbf{V} - \mathbf{v}} \otimes \left ( \overline{\combcharge}^{mv}_{\mathbf{v} - m \mathbf{e}_v} - \overline{\combcharge}^{nw}_{\mathbf{v} - n \mathbf{e}_w} \right )
	\end{split}
	\end{equation}
	Thus, when summing over all vertex-gradings \(\mathbf{v} \in \ZvQ\) that contribute to a particular coupling-grading \(\mathbf{c} \in \ZqQ\), briefly denoted via \(\mathbf{v} \in \boldsymbol{\theta}^{-1} \left ( \mathbf{c} \right )\) with \(\boldsymbol{\theta} \colon \ZvQ \to \ZqQ\) the function defined in \eqnref{eqn:coupling-coloring_function_multi-indices}, we obtain:
	\begin{equation} \label{eqn:coproduct_qgs}
	\begin{split}
		\D{\overline{\combcharge}^{mv}_{\mathbf{C} - m \mathbf{e}_{\theta \left ( v \right )}} - \overline{\combcharge}^{nw}_{\mathbf{C} - n \mathbf{e}_{\theta \left ( w \right )}}} & = \sum_{\mathbf{v} \in \ZvQ} \overline{\combcharge}^\mathbf{v}_{\mathbf{C} - \boldsymbol{\theta} \left ( \mathbf{v} \right )} \otimes \left ( \overline{\combcharge}^{mv}_{\mathbf{v} - m \mathbf{e}_v} - \overline{\combcharge}^{nw}_{\mathbf{v} - n \mathbf{e}_w} \right ) \\
		& = \sum_{\mathbf{c} \in \ZqQ} \sum_{\mathbf{v} \in \boldsymbol{\theta}^{-1} \left ( \mathbf{c} \right )} \overline{\combcharge}^\mathbf{v}_{\mathbf{C} - \mathbf{c}} \otimes \left ( \overline{\combcharge}^{mv}_{\mathbf{v} - m \mathbf{e}_v} - \overline{\combcharge}^{nw}_{\mathbf{v} - n \mathbf{e}_w} \right )
	\end{split}
	\end{equation}
	We proceed by introducing the following equivalence relation: Given two elements \(\mathfrak{G}, \mathfrak{H} \in \HQ\), we set
	\begin{subequations} \label{eqns:qgs_equivalence_relations}
	\begin{align}
		\mathfrak{G} \sim \mathfrak{H} \quad & : \iff \quad \exists \, \mathfrak{I} \subset \iQ \; \text{ such that } \; \mathfrak{G} = \mathfrak{H} + \mathfrak{I} \, , \label{eqn:first_equivalence}
	\intertext{which, on the level of restricted products of combinatorial charges, is due to \lemref{lem:elements_in_iq} equivalent to the following equivalence relation}
		\overline{\combcharge}^\mathbf{v}_{\mathbf{C}^\prime} \sim \overline{\combcharge}^\mathbf{w}_{\mathbf{C}^\prime} \quad & : \iff \quad \boldsymbol{\theta} \left ( \mathbf{v} \right ) = \boldsymbol{\theta} \left ( \mathbf{w} \right ) \, . \label{eqn:second_equivalence}
	\end{align}
	\end{subequations}
	Coming back to \eqnref{eqn:coproduct_qgs}, we now want to implement the equivalence relation of \eqnsref{eqns:qgs_equivalence_relations} on the left-hand side of the tensor product by adding terms in \(\iQ \otimes \HQ\): This implies that we can define equivalence classes of restricted combinatorial charges where the exponent is now a coupling-grading multi-index \(\mathbf{c} \in \ZqQ\), which we denote via \(\big [ \overline{\combcharge}^\mathbf{c}_{\mathbf{C} - \mathbf{c}} \big ]\). More precisely, let \(\mathbf{v}_1, \dots,  \mathbf{v}_L \in \ZvQ\) be all vertex-grading multi-indices with \(\boldsymbol{\theta} \left ( \mathbf{v}_l \right ) = \mathbf{c}\) for \(l \in \set{1, \dots, L}\) and \(L \in \mathbb{N}\). Then, the second sum in the second line of \eqnref{eqn:coproduct_qgs} reads
	\begin{equation}
		\sum_{\mathbf{v} \in \boldsymbol{\theta}^{-1} \left ( \mathbf{c} \right )} \overline{\combcharge}^\mathbf{v}_{\mathbf{C} - \mathbf{c}} \otimes \left ( \overline{\combcharge}^{mv}_{\mathbf{v} - m \mathbf{e}_v} - \overline{\combcharge}^{nw}_{\mathbf{v} - n \mathbf{e}_w} \right ) = \sum_{l = 1}^L \overline{\combcharge}^{\mathbf{v}_l}_{\mathbf{C} - \mathbf{c}} \otimes \left ( \overline{\combcharge}^{mv}_{\mathbf{v}_l - m \mathbf{e}_v} - \overline{\combcharge}^{nw}_{\mathbf{v}_l - n \mathbf{e}_w} \right )
	\end{equation}
	and the combinatorial charges \(\overline{\combcharge}^{\mathbf{v}_l}\) can be identified, modulo the addition of terms in \(\iQ \otimes \HQ\), to their equivalence class \(\big [ \overline{\combcharge}^\mathbf{c}_{\mathbf{C} - \mathbf{c}} \big ]\). Combining these results, we finally obtain:
	\begin{equation}
	\begin{split}
		\D{\overline{\combcharge}^{mv}_{\mathbf{C} - m \mathbf{e}_{\theta \left ( v \right )}} - \overline{\combcharge}^{nw}_{\mathbf{C} - n \mathbf{e}_{\theta \left ( w \right )}}} & \simeq_{\iQ \otimes \HQ} \sum_{\mathbf{c} \in \ZqQ} \left [ \overline{\combcharge}^\mathbf{c}_{\mathbf{C} - \mathbf{c}} \right ] \otimes \left ( \overline{\combcharge}^{mv}_{\mathbf{c} - m \mathbf{e}_{\theta \left ( v \right )}} - \overline{\combcharge}^{nw}_{\mathbf{c} - n \mathbf{e}_{\theta \left ( w \right )}} \right ) \\
		& \subseteq \HQ \otimes \iQ + \iQ \otimes \HQ \, ,
	\end{split}
	\end{equation}
	where \(\simeq_{\iQ \otimes \HQ}\) denotes equality modulo the addition of elements in \(\iQ \otimes \HQ\). Additionally, we remark the equality \(\theta \left ( v \right )^m \equiv \theta \left ( w \right )^n\) for quantum gauge symmetries \(\set{v,m;w,n} \in \operatorname{QGS}_\Q\), which relates the calculations in this proof to the definition of the quantum gauge symmetry ideal, cf.\ \defnref{defn:qgs_ideal}, by setting
	\begin{equation}
		\mathbf{C}^\prime := \mathbf{C} - m \mathbf{e}_{\theta \left ( v \right )} \equiv \mathbf{C} - n \mathbf{e}_{\theta \left ( w \right )} \, .
	\end{equation}
	This shows condition 1. Condition 2 follows immediately, as \(\one \notin \iQ\), i.e.\ \(\iQ \subset \operatorname{Aug} \left ( \HQ \right )\). Finally, condition 3 follows from \lemref{lem:coproduct_and_antipode_identities} together with condition 1, which finishes the proof.
\end{proof}

\enter

\begin{rem}
	\thmref{thm:quantum_gauge_symmetries_induce_hopf_ideals} describes the most general situation, as it includes also super- and non-renormalizable QFTs, QFTs with several coupling constants and QFTs with a transversal structure. Therefore it can be applied to (effective) Quantum General Relativity in the sense of \cite{Kreimer_QG1}, possibly coupled to matter from the Standard Model \cite{Romao_Silva}, cf.\ e.g.\ \cite{Prinz_2,Prinz_4}. Slightly less general results in this direction can be found in \cite{Kreimer_Anatomy,vSuijlekom_QED,vSuijlekom_QCD,vSuijlekom_BV,Kreimer_vSuijlekom,Kreimer_QG1,Kreimer_Core}, some of them using the language of Hochschild cohomology.\footnote{We also mention the relevant conference proceedings \cite{Kreimer_QG2,vSuijlekom_GF,vSuijlekom_BRST,vSuijlekom_Combinatorics,vSuijlekom_Ha-pQGT}.}
\end{rem}

\enter

\begin{col} \label{col:equivalence_vtx-grd_cpl-grd}
	Given the situation of \thmref{thm:quantum_gauge_symmetries_induce_hopf_ideals}, the vertex-grading and coupling-grading are equivalent if either \(\mathfrak{v}_\Q = \mathfrak{q}_\Q\) with \(\boldsymbol{\theta}\) bijective, or in the quotient Hopf algebra \(\HQ / \iQ\). In the latter case, the ideal \(\iQ\) is the smallest Hopf ideal with this property.
\end{col}

\begin{proof}
	This follows directly from the definition of the ideal \(\iQ\) in \defnref{defn:qgs_ideal}, which is designed such that in the quotient \(\HQ / \iQ\) all Feynman graphs are identified that contribute to restricted (powers of) combinatorial charges which are associated with (powers) of the same physical coupling constant.
\end{proof}

\enter

\begin{rem} \label{rem:quotient_vertex-grading_coupling-grading}
	The Hopf ideal \(\iQ\) from \thmref{thm:quantum_gauge_symmetries_induce_hopf_ideals} is defined such that in the quotient Hopf algebra \(\HQ / \iQ\) the coproduct and antipode identities from \sectionref{sec:coproduct_and_antipode_identities}, which are valid for vertex-grading, also hold for coupling-grading, cf.\ \colref{col:hopf_subalgebras_coupling-grading}. Thus it is possible to combine the \(Z\)-factors for the set \(\QQ\) to \(Z\)-factors for the set \(\qQ\), if the criteria from \thmref{thm:criterion_ren-hopf-mod} or \colref{col:qgs_and_rfr} are satisfied.
\end{rem}

\section{Quantum gauge symmetries and renormalized Feynman rules} \label{sec:quantum_gauge_symmetries_and_renormalized_feynman_rules}

Having established that `quantum gauge symmetries (QGS)' are compatible with the treatment of subdivergences in \thmref{thm:quantum_gauge_symmetries_induce_hopf_ideals}, we now turn our attention to their relation with renormalized Feynman rules. We start this section with the definition of the gauge theory renormalization Hopf module: Here we implement the quantum gauge symmetries only on the left-hand side of the tensor product of the coproduct, i.e.\ only on the superficially divergent subgraphs. As such, it is the weakest requirement for renormalized Feynman rules to possess quantum gauge symmetries. More precisely, in this setting the relations are only implemented on the \(Z\)-factors, i.e.\ \(\mathscr{R}\)-divergent contributions of the Feynman rules. In \thmref{thm:criterion_ren-hopf-mod} we provide criteria for this compatibility of quantum gauge symmetries with the unrenormalized Feynman rules and the chosen renormalization scheme. Then we show in \colref{col:qgs_and_rfr} that under mild assumptions on the unrenormalized Feynman rules this statement is independent of the chosen renormalization scheme. In particular, this result states that in this case we can implement the quantum gauge symmetries directly on the renormalization Hopf algebra by taking the quotient with respect to the quantum gauge symmetry Hopf ideal. Finally, we remark that for theories with a transversal structure these mild assumptions correspond precisely to the respective cancellation identities. Thus, combining these results, this shows the well-definedness of the Corolla polynomial without reference to a particular renormalization scheme, cf.\ \remref{rem:corolla_polynomial}.

\enter

\begin{defn}[Gauge theory renormalization Hopf module, \cite{Kissler_PhD}] \label{defn:renormalization_hopf_module}
	Let \(\Q\) be a cograph-divergent QFT with quantum gauge symmetries, i.e.\ \(\operatorname{QGS}_\Q \neq \emptyset\), \(\HQ\) its (associated) renormalization Hopf algebra and \(\iQ\) the  corresponding quantum gauge symmetry Hopf ideal. Let
	\begin{equation} \label{eqn:qgs_projection_map}
		\pi_\Q : \, \HQ \surject \HQ / \iQ
	\end{equation}
	denote the projection map. We consider \(\HQ\) as a left Hopf module over \(\HQ / \iQ\) with the usual Hopf structures as in \defnref{defn:renormalization_hopf_algebra}. The interesting map is the comodule map, defined via
	\begin{equation}
		\delta \, : \quad \HQ \to \left ( \HQ / \iQ \right ) \otimes \HQ \, , \quad \Gamma \mapsto \big ( \pi_\Q \otimes \id_{\HQ} \! \big ) \circ \D{\Gamma} \, .
	\end{equation}
	Then we define the renormalized Feynman rules \(\Phi_\mathscr{R}\) using the comodule map \(\delta\) instead of the coproduct \(\Delta\), i.e.\ defining the counterterm map \(\countertermsymbol\) on the quotient \(\HQ / \iQ\).
\end{defn}

\enter

\begin{col} \label{col:hopf_subalgebras_coupling-grading}
	Given the situation of \defnref{defn:renormalization_hopf_module}, we have
	\begin{subequations}
	\begin{align}
		\delta \left ( \combgreen^r \right ) & = \sum_{\mathbf{c} \in \mathbb{Z}^{\mathfrak{q}_\Q}} \left [ \overline{\combgreen}^r \overline{\combcharge}^\mathbf{c} \right ] \otimes \rescombgreen^r_{\mathbf{c}} \, , \\
		\delta \left ( \rescombgreen^r_{\mathbf{C}} \right ) & = \sum_{\mathbf{c} \in \mathbb{Z}^{\mathfrak{q}_\Q}} \eval{\left [ \overline{\combgreen}^r \overline{\combcharge}^\mathbf{c} \right ]}_{\mathbf{C} - \mathbf{c}} \otimes \rescombgreen^r_{\mathbf{c}} \, ,
		\intertext{and, provided that \(\overline{\rescombgreen}^r_{\mathbf{C}} \neq 0\),}
		\delta \left ( \overline{\rescombgreen}^r_{\mathbf{C}} \right ) & = \sum_{\mathbf{c} \in \mathbb{Z}^{\mathfrak{q}_\Q}} \eval{\left [ \overline{\combgreen}^r \overline{\combcharge}^\mathbf{c} \right ]}_{\mathbf{C} - \mathbf{c}} \otimes \overline{\rescombgreen}^r_{\mathbf{c}} \, ,
	\end{align}
	\end{subequations}
	where the equivalence classes on the left-hand side of the tensor product are with respect to the equivalence relation of Equations~(\ref{eqns:qgs_equivalence_relations}), i.e.\ modulo the addition of elements in \(\iQ\). Analogous results also hold in the cases of \propsaref{prop:coproduct_charges}{prop:coproduct_exponentiated_combinatorial_charges}.
\end{col}

\begin{proof}
	This follows directly from \propref{prop:coproduct_greensfunctions} together with \colref{col:equivalence_vtx-grd_cpl-grd}.
\end{proof}

\enter

\begin{rem}
	\colref{col:hopf_subalgebras_coupling-grading} states that the gauge theory renormalization Hopf module from \defnref{defn:renormalization_hopf_module} possesses Hopf subalgebras in the sense of \defnref{defn:hopf_subalgebras_renormalization_hopf_algebra} for coupling-grading, cf.\ \remref{rem:hopf_subalgebras_renormalization_hopf_algebra}. This implies that the subdivergence structure of QFTs is compatible with quantum gauge symmetries in the sense of \defnref{defn:quantum_gauge_symmetries}. Furthermore, it is obvious by construction that this is the weakest requirement for the compatibility of quantum gauge symmetries with multiplicative renormalization, cf.\ \colref{col:equivalence_vtx-grd_cpl-grd}. Their validity on the level of renormalized Feynman rules, i.e.\ the existence and well-definedness of the maps
	\begin{subequations}
	\begin{align}
	\widetilde{\countertermsymbol} & := \countertermsymbol \circ \left ( \pi_\Q \right )^{-1} \, : \quad \HQ / \iQ \to \EQ
	\intertext{and}
	\widetilde{\renFR} & := \renFR \circ \left ( \pi_\Q \right )^{-1} \, : \quad \HQ / \iQ \to \EQ \, ,
	\end{align}
	\end{subequations}
	where \(\left ( \pi_\Q \right )^{-1}\) is any right inverse to the projection map \(\pi_\Q\) from \eqnref{eqn:qgs_projection_map}, with respect to the following commuting diagrams
	\begin{equation}
	\begin{tikzcd}[row sep=huge]
		\HQ \arrow[swap]{d}{\pi_\Q} \arrow{r}{\countertermsymbol} & \EQ \\
		\HQ / \iQ \arrow[dashed, swap]{ur}{\widetilde{\countertermsymbol}} &
	\end{tikzcd}
	\qquad \text{and} \qquad
	\begin{tikzcd}[row sep=huge]
		\HQ \arrow[swap]{d}{\pi_\Q} \arrow{r}{\renFR} & \EQ \\
		\HQ / \iQ \arrow[dashed, swap]{ur}{\widetilde{\renFR}} &
	\end{tikzcd} \, ,
	\end{equation}
	are then studied in the following \lemref{lem:criterion_ren-hopf-mod}, \thmref{thm:criterion_ren-hopf-mod} and \colref{col:qgs_and_rfr}. Moreover, given a quantum gauge theory with a transversal structure, cf.\ \defnref{defn:transversal_structure}, we stress the following additional compatibility issue: Recall the setup of \defnref{defn:residue_amplitude_and_coupling_constant_set} where we represented each particle type by at least two edges to disentangle their physical and unphysical degrees of freedom, cf.\ \remref{rem:longitudinal_and_transversal_gauge_fields}. Then the Feynman rules are required to be compatible with this decomposition as follows: The divergent Feynman graphs and their residues need to behave similar with respect to these physical and unphysical projections. This ensures that the contraction of subdivergences is a well-defined operation and thus is a necessary condition to construct the renormalization Hopf algebra, cf.\ \cite[Subsection 3.3]{Prinz_2}. We will study this further in \cite{Prinz_8}, cf.\ \cite{Prinz_5,Prinz_6}, using cancellation identities and Feynman graph cohomology.
\end{rem}

\enter

\begin{lem} \label{lem:criterion_ren-hopf-mod}
	The gauge theory renormalization Hopf module from \defnref{defn:renormalization_hopf_module} is compatible with renormalized Feynman rules if \(\, \iQ \in \operatorname{Ker} \big ( \countertermsymbol \big )\). More precisely, if for all \(\set{v, m; w, n} \in \operatorname{QGS}_\Q\) and all \(\mathbf{C}^\prime \in \ZqQ\) we have
	\begin{equation}
		\counterterm{\overline{\combcharge}^{mv}_{\mathbf{C}^\prime}} = \counterterm{\overline{\combcharge}^{nw}_{\mathbf{C}^\prime}} \, . \label{eqn:well-definedness_counterterm-map}
	\end{equation}
\end{lem}

\begin{proof}
	This statement is equivalent to the well-definedness of the counterterm-map on the equivalence classes of the QGS-equivalence relation: Indeed, we have
	\begin{equation}
		\overline{\combcharge}^{mv}_{\mathbf{C}^\prime} \simeq_{\operatorname{QGS}_\Q} \overline{\combcharge}^{nw}_{\mathbf{C}^\prime} \, ,
	\end{equation}
	and thus \eqnref{eqn:well-definedness_counterterm-map} ensures that the counterterm-map can be unambiguously defined on the corresponding equivalence classes.
\end{proof}

\enter

\begin{thm}[Quantum gauge symmetries and renormalized Feynman rules] \label{thm:criterion_ren-hopf-mod}
	Given the situation of \lemref{lem:criterion_ren-hopf-mod} and a proper renormalization scheme \(\mathscr{R}\), then \eqnref{eqn:well-definedness_counterterm-map} is equivalent to one of the following identities for each \(\mathbf{c} \in \ZqQ\):
	\begin{enumerate}
		\item \(\big [ \overline{\combcharge}^\mathbf{c}_{\mathbf{C} - \mathbf{c}} \big ] = \left [ 0 \right ]\)
		\item \(\overline{\combcharge}^{mv}_{\mathbf{c} - m \mathbf{e}_{\theta \left ( v \right )}} = \overline{\combcharge}^{nw}_{\mathbf{c} - n \mathbf{e}_{\theta \left ( w \right )}} = 0\)
		\item \(\renscheme{\left ( \Phi \circ \mathscr{A} \right ) \big ( \overline{\combcharge}^{mv}_{\mathbf{c} - m \mathbf{e}_{\theta \left ( v \right )}} - \overline{\combcharge}^{nw}_{\mathbf{c} - n \mathbf{e}_{\theta \left ( w \right )}} \big )} = 0\)
	\end{enumerate}
\end{thm}

\begin{proof}
	Using \eqnref{eqn:coproduct_qgs} from the proof of \thmref{thm:quantum_gauge_symmetries_induce_hopf_ideals}, we obtain
	\begin{equation}
	\begin{split}
		\countertermsymbol \, \Big ( \overline{\combcharge}^{mv}_{\mathbf{C} - m \mathbf{e}_{\theta \left ( v \right )}} & - \overline{\combcharge}^{nw}_{\mathbf{C} - n \mathbf{e}_{\theta \left ( w \right )}} \Big ) = \\ & \phantom{=} - \sum_{\mathbf{c} \in \ZqQ} \renscheme{\counterterm{\left [ \overline{\combcharge}^\mathbf{c}_{\mathbf{C} - \mathbf{c}} \right ]} \FRP{\overline{\combcharge}^{mv}_{\mathbf{c} - m \mathbf{e}_{\theta \left ( v \right )}} - \overline{\combcharge}^{nw}_{\mathbf{c} - n \mathbf{e}_{\theta \left ( w \right )}}}} \, ,
	\end{split}
	\end{equation}
	which vanishes, if for all \(\mathbf{c} \in \ZqQ\)
	\begin{equation}
		\renscheme{\counterterm{\left [ \overline{\combcharge}^\mathbf{c}_{\mathbf{C} - \mathbf{c}} \right ]} \FRP{\overline{\combcharge}^{mv}_{\mathbf{c} - m \mathbf{e}_{\theta \left ( v \right )}} - \overline{\combcharge}^{nw}_{\mathbf{c} - n \mathbf{e}_{\theta \left ( w \right )}}}} = 0 \, . \label{eqn:qgs_criterion_fr_rs}
	\end{equation}
	Since the coupling-grading is required to be superficially compatible, cf.\ \defnref{defn:superficially_compatible_grading} and \defnref{defn:quantum_gauge_symmetries}, we have either
	\begin{equation}
		\left [ \overline{\combcharge}^\mathbf{c}_{\mathbf{C} - \mathbf{c}} \right ] = \left [ 0 \right ] \qquad \text{or} \qquad \left [ \overline{\combcharge}^\mathbf{c}_{\mathbf{C} - \mathbf{c}} \right ] = \Big [ \combcharge^\mathbf{c}_{\mathbf{C} - \mathbf{c}} \Big ]
	\end{equation}
	and either
	\begin{equation}
		\overline{\combcharge}^{mv}_{\mathbf{c}^\prime} = \overline{\combcharge}^{nw}_{\mathbf{c}^\prime} = 0 \qquad \text{or} \qquad \overline{\combcharge}^{mv}_{\mathbf{c}^\prime} = \combcharge^{mv}_{\mathbf{c}^\prime} \! \quad \text{and} \quad \overline{\combcharge}^{nw}_{\mathbf{c}^\prime} = \combcharge^{nw}_{\mathbf{c}^\prime} \, ,
	\end{equation}
	where \(\mathbf{c}^\prime := \mathbf{c} - m \mathbf{e}_{\theta \left ( v \right )} \equiv \mathbf{c} - n \mathbf{e}_{\theta \left ( w \right )}\). We proceed by noting that \(\mathscr{R}\) is a linear map, whose kernel and cokernel consists only of convergent formal Feynman integral expressions, as it is required to be proper, cf.\ \defnref{defn:proper_renormalization_scheme}. This directly implies, by the recursive structure of the counterterm, that \(\operatorname{Ker} \big ( \countertermsymbol \big )\) consists only of convergent formal Feynman integral expressions. Thus, \eqnref{eqn:qgs_criterion_fr_rs} is equivalent to one of the following identities for each \(\mathbf{c} \in \ZqQ\):
	\begin{enumerate}
		\item \(\big [ \overline{\combcharge}^\mathbf{c}_{\mathbf{C} - \mathbf{c}} \big ] = \left [ 0 \right ]\)
		\item \(\overline{\combcharge}^{mv}_{\mathbf{c} - m \mathbf{e}_{\theta \left ( v \right )}} = \overline{\combcharge}^{nw}_{\mathbf{c} - n \mathbf{e}_{\theta \left ( w \right )}} = 0\)
		\item \(\renscheme{\left ( \Phi \circ \mathscr{A} \right ) \big ( \overline{\combcharge}^{mv}_{\mathbf{c} - m \mathbf{e}_{\theta \left ( v \right )}} - \overline{\combcharge}^{nw}_{\mathbf{c} - n \mathbf{e}_{\theta \left ( w \right )}} \big )} = 0\)
	\end{enumerate}
	This is the claimed statement and thus finishes the proof.
\end{proof}

\enter

\begin{rem}
	Given the situation of \lemref{lem:criterion_ren-hopf-mod} and \thmref{thm:criterion_ren-hopf-mod}. We note that \eqnref{eqn:well-definedness_counterterm-map} and condition 3 are criteria for both, the unrenormalized Feynman rules \(\Phi\) and the renormalization scheme \(\mathscr{R}\). More precisely, \eqnref{eqn:well-definedness_counterterm-map} states that a common counterterm can be chosen for residues that are related via quantum gauge symmetries. Furthermore, condition 3 states that the corresponding \(\mathscr{R}\)-divergent contributions from restricted combinatorial charges coincide. In contrast, conditions 1 and 2 are solely criteria for the unrenormalized Feynman rules \(\Phi\).
\end{rem}

\enter

\begin{col}[Quantum gauge symmetries and renormalized Feynman rules] \label{col:qgs_and_rfr}
	The quotient Hopf algebra \(\HQ / \iQ\) is compatible with renormalized Feynman rules if \(\, \iQ \in \operatorname{Ker} \left ( \Phi_\mathscr{R} \right )\). More precisely, if for all \(\set{v, m; w, n} \in \operatorname{QGS}_\Q\) and all \(\mathbf{C}^\prime \in \ZqQ\) we have
	\begin{equation}
		\RFR{\overline{\combcharge}^{mv}_{\mathbf{C}^\prime}} = \RFR{\overline{\combcharge}^{nw}_{\mathbf{C}^\prime}} \, . \label{eqn:well-definedness_renormalized-fr}
	\end{equation}
	If the renormalization scheme \(\mathscr{R}\) is proper then this is equivalent to one of the following identities for each \(\mathbf{c} \in \ZqQ\):
	\begin{enumerate}
		\item \(\big [ \overline{\combcharge}^\mathbf{c}_{\mathbf{C} - \mathbf{c}} \big ] = \left [ 0 \right ]\)
		\item \(\overline{\combcharge}^{mv}_{\mathbf{c} - m \mathbf{e}_{\theta \left ( v \right )}} = \overline{\combcharge}^{nw}_{\mathbf{c} - n \mathbf{e}_{\theta \left ( w \right )}} = 0\)
		\item \(\Phi \big ( \overline{\combcharge}^{mv}_{\mathbf{c} - m \mathbf{e}_{\theta \left ( v \right )}} - \overline{\combcharge}^{nw}_{\mathbf{c} - n \mathbf{e}_{\theta \left ( w \right )}} \big ) = 0\)
	\end{enumerate}
\end{col}

\begin{proof}
	Again, using \eqnref{eqn:coproduct_qgs} from the proof of \thmref{thm:quantum_gauge_symmetries_induce_hopf_ideals} and the same reasoning as in the proof of \thmref{thm:criterion_ren-hopf-mod}, we obtain
	\begin{equation}
		\RFR{\overline{\combcharge}^{mv}_{\mathbf{C} - m \mathbf{e}_{\theta \left ( v \right )}} - \overline{\combcharge}^{nw}_{\mathbf{C} - n \mathbf{e}_{\theta \left ( w \right )}}} = \sum_{\mathbf{c} \in \ZqQ} \counterterm{\left [ \overline{\combcharge}^\mathbf{c}_{\mathbf{C} - \mathbf{c}} \right ]} \FR{\overline{\combcharge}^{mv}_{\mathbf{c} - m \mathbf{e}_{\theta \left ( v \right )}} - \overline{\combcharge}^{nw}_{\mathbf{c} - n \mathbf{e}_{\theta \left ( w \right )}}} \, ,
	\end{equation}
	which vanishes, if for all \(\mathbf{c} \in \ZqQ\)
	\begin{equation}
		\counterterm{\left [ \overline{\combcharge}^\mathbf{c}_{\mathbf{C} - \mathbf{c}} \right ]} \FR{\overline{\combcharge}^{mv}_{\mathbf{c} - m \mathbf{e}_{\theta \left ( v \right )}} - \overline{\combcharge}^{nw}_{\mathbf{c} - n \mathbf{e}_{\theta \left ( w \right )}}} = 0 \, . \label{eqn:qgs_criterion_rfr_fr}
	\end{equation}
	Once more, using the same reasoning as in the proof of \thmref{thm:criterion_ren-hopf-mod}, we obtain that \eqnref{eqn:qgs_criterion_rfr_fr} is equivalent to one of the following identities for each \(\mathbf{c} \in \ZqQ\):
	\begin{enumerate}
		\item \(\big [ \overline{\combcharge}^\mathbf{c}_{\mathbf{C} - \mathbf{c}} \big ] = \left [ 0 \right ]\)
		\item \(\overline{\combcharge}^{mv}_{\mathbf{c} - m \mathbf{e}_{\theta \left ( v \right )}} = \overline{\combcharge}^{nw}_{\mathbf{c} - n \mathbf{e}_{\theta \left ( w \right )}} = 0\)
		\item \(\Phi \big ( \overline{\combcharge}^{mv}_{\mathbf{c} - m \mathbf{e}_{\theta \left ( v \right )}} - \overline{\combcharge}^{nw}_{\mathbf{c} - n \mathbf{e}_{\theta \left ( w \right )}} \big ) = 0\)
	\end{enumerate}
	This is the claimed statement and thus finishes the proof.
\end{proof}

\enter

\begin{rem} \label{rem:corolla_polynomial}
	Given the situation of \colref{col:qgs_and_rfr} and a proper renormalization scheme \(\mathscr{R}\), cf.\ \defnref{defn:proper_renormalization_scheme}. We note that while \eqnref{eqn:well-definedness_renormalized-fr} is a criterion for both, the unrenormalized Feynman rules \(\Phi\) and the renormalization scheme \(\mathscr{R}\), conditions 1 to 3 are solely criteria for the unrenormalized Feynman rules \(\Phi\). More precisely, \eqnref{eqn:well-definedness_renormalized-fr} states that the values of renormalized Feynman rules are equivalent for residues that are related via quantum gauge symmetries. In contrast, conditions 1 to 3 state that the corresponding unrenormalized Feynman rules \(\Phi\) coincide on the restricted combinatorial charges. We remark, however, that if we consider a quantum gauge theory with a transversal structure, cf.\ \defnref{defn:transversal_structure}, then the vertex-residues of the combinatorial charges as well as their coupling-gradings includes the `physical' and `unphysical' labels. Thus, the mentioned criteria need only to hold for all such restrictions individually. This is directly related to cancellation identities \cite{tHooft_Veltman,Citanovic,Sars_PhD,Kissler_Kreimer,Gracey_Kissler_Kreimer,Kissler}, which are graphical versions of (generalized) Ward--Takahashi and Slavnov--Taylor identities \cite{Ward,Takahashi,Taylor,Slavnov}. More precisely, they indicate the behavior of unrenormalized (tree) Feynman diagrams with respect to longitudinal and transversal projections. Thus, given that they hold for Quantum Yang--Mills theory, \colref{col:qgs_and_rfr} implies the well-definedness of the Corolla polynomial without reference to a particular renormalization scheme \(\mathscr{R}\). The Corolla polynomial is a graph polynomial that relates amplitudes in Quantum Yang--Mills theory to amplitudes in \(\phi^3_4\)-Theory \cite{Kreimer_Yeats,Kreimer_Sars_vSuijlekom,Kreimer_Corolla}.  More precisely, this graph polynomial is based on half-edges and is used for the construction of the so-called Corolla differential that relates the corresponding parametric Feynman integral expressions \cite{Kreimer_Sars_vSuijlekom,Sars_PhD,Golz_PhD}. Thereby, the corresponding cancellation-identities are implicitly encoded into a double complex of Feynman graphs, called Feynman graph cohomology \cite{Kreimer_Sars_vSuijlekom,Berghoff_Knispel}. This double-complex can be interpreted as a perturbative version of BRST cohomology \cite{Becchi_Rouet_Stora_1,Becchi_Rouet_Stora_2,Becchi_Rouet_Stora_3,Tyutin}, where the precise relation will be studied in future work \cite{Prinz_8,Prinz_5,Prinz_6}. We remark that this construction has been successfully generalized to Quantum Yang--Mills theories with spontaneous symmetry breaking \cite{Prinz_1} and Quantum Electrodynamics with spinors \cite{Golz_1,Golz_2,Golz_3}. The possibility to reformulate (effective) Quantum General Relativity in this framework will be also studied in future work.
\end{rem}

\enter

\begin{rem}
	It is possible to endow the character group of the renormalization Hopf algebra \(\HQ\) with a manifold structure such that it becomes a regular Lie group in the sense of Milnor, cf.\ \cite{Milnor}. Then, in this setting, the character group on the quotient Hopf algebra \(\HQ / \iQ\) is a closed Lie subgroup thereof \cite{Bogfjellmo_Dahmen_Schmeding_1,Dahmen_Schmeding,Bogfjellmo_Schmeding,Bogfjellmo_Dahmen_Schmeding_2}.
\end{rem}

\section{Conclusion} \label{sec:conclusion}

We have studied the renormalization of gauge theories and gravity in the Hopf algebra setup of Connes and Kreimer. The main results are \thmref{thm:quantum_gauge_symmetries_induce_hopf_ideals} and \thmref{thm:criterion_ren-hopf-mod}, where the first states that quantum gauge symmetries correspond to Hopf ideals in the renormalization Hopf algebra and the second provides criteria for their validity on the level of renormalized Feynman rules. To this end, we studied combinatorial properties of the superficial degree of divergence in \sectionref{sec:a_superficial_argument} and generalized known coproduct and antipode identities to the super- and non-renormalizable cases in \sectionref{sec:coproduct_and_antipode_identities}. Additionally, we extended this framework to theories with multiple vertex residues and coupling constants in \defnref{defn:connectedness_gradings_renormalization_hopf_algebra} and discussed the incorporation of transversal structures in \remref{rem:longitudinal_and_transversal_gauge_fields}. Then we illustrated the developed theory in the cases of Quantum Yang--Mills theory in \exref{exmp:qym} and (effective) Quantum General Relativity in \exref{exmp:qgr}. Finally, we discussed, as a direct consequence of our findings, the well-definedness of the Corolla polynomial without reference to a particular renormalization scheme in \remref{rem:corolla_polynomial}, cf.\ \cite{Kreimer_Yeats,Kreimer_Sars_vSuijlekom,Kreimer_Corolla}. In future work, we want to study the incorporation of cancellation identities via Feynman graph cohomology into the present setup \cite{Prinz_8}. Additionally, we investigate further on the case of (effective) Quantum General Relativity, possibly coupled to matter from the Standard Model, as was started in \cite{Prinz_2,Prinz_4}. In particular, this includes the BRST double complex of diffeomorphisms and gauge transformations \cite{Prinz_5} and the corresponding longitudinal and transversal projection operators together with cancellation identities \cite{Prinz_6}.

\section*{Acknowledgments}
\addcontentsline{toc}{section}{Acknowledgments}

The author thanks Dirk Kreimer, Henry Ki{\ss}ler and Ren\'{e} Klausen for illuminating and helpful discussions. Additionally, the author thanks the anonymous referee for his careful proofreading and valuable feedback, which has lead to various improvements. This research is supported by the International Max Planck Research School for Mathematical and Physical Aspects of Gravitation, Cosmology and Quantum Field Theory.

\bibliography{References}{}
\bibliographystyle{babunsrt}

\end{document}